\documentclass[onecolumn, 12pt]{article}

\usepackage{amsfonts}
\usepackage{epsfig}

\pagestyle{plain}
\usepackage{graphicx}
\usepackage{float} 
\usepackage{subfigure}
\usepackage{indentfirst}
\usepackage{setspace}
\setstretch{1.35}
\usepackage[justification=centering]{caption}
\usepackage{latexsym}
\usepackage{multirow}
\usepackage{amsmath}
\usepackage{epstopdf}
\usepackage{cases}
\usepackage{bbding}

\usepackage{bigdelim}
\usepackage{amssymb}
\usepackage{mathrsfs}
\usepackage{caption}
\usepackage{amsthm}
\usepackage{pstricks}
\usepackage[bookmarks,colorlinks,linkcolor=blue,anchorcolor=blue,citecolor=blue]{hyperref}
\usepackage[left=2cm, right=2cm, top=2cm, bottom=2cm]{geometry}
\usepackage{amsthm}
\usepackage[title]{appendix}
\usepackage{caption}
\usepackage{longtable}
\usepackage{epstopdf}

\newtheorem{definition}{Definition}

\newtheorem{theorem}{Theorem}
\newtheorem{corollary}{Corollary}
\newtheorem{example}{Example}
\allowdisplaybreaks[4]

\begin{document}

\title{\Large A Characterization of Reny's Weakly Sequentially Rational Equilibrium through $\varepsilon$-Perfect $\gamma$-Weakly Sequentially Rational Equilibrium\footnote{This work was partially supported by GRF: CityU 11215123 of Hong Kong SAR Government.}}

\author{Yiyin Cao\textsuperscript{\ref{fnote1}} and Chuangyin Dang\textsuperscript{\ref{fnote2}}\footnote{Corresponding Author}}
\date{ }

\maketitle

\footnotetext[1]{School of Management, Xi'an Jiaotong University, Xi'an, China, yiyincao2-c@my.cityu.edu.hk\label{fnote1}}
\footnotetext[2]{Department of Systems Engineering, City University of Hong Kong, Kowloon, Hong Kong, mecdang@cityu.edu.hk \label{fnote2}}

\date{ }
\maketitle

\begin{abstract}

A weakening of sequential rationality of sequential equilibrium yields Reny's (1992) weakly sequentially rational equilibrium (WSRE) in extensive-form games. WSRE requires Kreps and Wilson's (1982) consistent assessment to satisfy global rationality of nonconvex payoff functions at every information set reachable by a player's own strategy. The consistent assessment demands a convergent sequence of totally mixed behavioral strategy profiles and associated Bayesian beliefs. Nonetheless, due to the nonconvexity, proving the existence of WSRE required invoking the existence of a normal-form perfect equilibrium, which is sufficient but not necessary. Furthermore, Reny’s WSRE definition does not fully specify how to construct the convergent sequence. To overcome these challenges, this paper develops a characterization of WSRE through $\varepsilon$-perfect $\gamma$-WSRE with local sequential rationality, which is accomplished by incorporating an extra behavioral strategy profile.  For any given $\gamma>0$, we generate a perfect $\gamma$-WSRE as a limit point of a sequence of $\varepsilon_k$-perfect $\gamma$-WSRE with $\varepsilon_k\to 0$. A WSRE is then acquired from a limit point of a sequence of perfect $\gamma_q$-WSRE with $\gamma_q\to 0$. This characterization enables analytical identification of all WSREs in small extensive-form games and a direct proof of the existence of WSRE.
An application of the characterization yields a polynomial system that serves as a necessary and sufficient condition for verifying whether a totally mixed assessment is an $\varepsilon$-perfect $\gamma$-WSRE.  Exploiting the system, we devise differentiable path-following methods to compute WSREs by establishing the existence of smooth paths, which are secured from the equilibrium systems of barrier and penalty extensive-form games. Comprehensive numerical results further confirm the efficiency of the methods.
 
\end{abstract}

{\bf Keywords}: Game Theory, Extensive-Form Game, Sequential Equilibrium, Weakly Sequentially Rational Equilibrium, Differentiable Path-Following Method

{\bf JEL codes}: C61, C72

\section{Introduction}

Nash equilibrium in behavioral strategies (NashEBS) of an extensive-form game is formulated through global rationality of nonconvex payoff functions in Nash~\cite{Nash (1951)}. Given a Bayesian belief system, it is shown in Mas-Colell et al.~\cite{Mas-Colell et al. (1995)} that NashEBS essentially demands global rationality of nonconvex conditional expected payoff functions at every information set reachable by all the players' strategies. NashEBS prescribes a notion of rational behavior and becomes one of the most important and elegant ideas in game theory. However, as pointed out in the literature such as Kreps and Wilson~\cite{Kreps and Wilson (1982)}, Milgrom and Mollner~\cite{Milgrom and Mollner (2021)}, Myerson~\cite{Myerson (1978)}, and Selten~\cite{Selten (1975)},  there may exist many NashEBSs in an extensive-form game and some of these equilibria may be inconsistent with our intuitive notions about what should be the outcome of a game, especially along the off-the-equilibrium paths. To mitigate this ambiguity and eradicate some of these counterintuitive equilibria, Selten~\cite{Selten (1975)} formalizes the concept of perfect equilibrium through strategy perturbations. To meet the requirement that a player's own strategy is optimal starting from any point in a game tree, Kreps and Wilson~\cite{Kreps and Wilson (1982)} explicitly utilize a belief system and establish the notion of sequential equilibrium. By relaxing Kreps and Wilson's consistency conditions on beliefs, the concept of weak sequential equilibrium is presented in Myerson~\cite{Myerson (1991)} and the concept of perfect Bayesian equilibrium is proposed in the literature such as Fudenberg and Tirole~\cite{Fudenberg and Tirole (1991)}.  

Sequential equilibrium requires players to maintain the assumption that all players adhere to equilibrium strategies even after observing a deviation. Reny~\cite{Reny (1992)} critiques this ``inference assumption", arguing that once a player deviates, there is no rationale for others to believe the deviator will subsequently act optimally against the original equilibrium strategy profile. Building on the premise of independent deviations, Reny proposes no restrictions on inferences about a deviator's future play. To formalize an equilibrium concept with this property, Reny~\cite{Reny (1992)} introduces weakly sequential rationality or weakly sequentially rational equilibrium (WSRE) in behavioral strategies. This concept slightly weakens Kreps and Wilson's requirement of global rationality at every information set by demanding rationality only at information sets reachable by a player's own strategy, while retaining the consistent assessments.
WSRE has become the backbone of forward induction equilibrium proposed by Govindan and Wilson~\cite{Govindan and Wilson (2009)}. To define a relevant strategy of a player for a given outcome, Govindan and Wilson~\cite{Govindan and Wilson (2009)} explicitly utilize a WSRE with that outcome for which the strategy at every reachable information set prescribes an optimal continuation given the player's equilibrium belief there. Recently, Siniscalchi~\cite{Siniscalchi (2022)} introduces the notion of structural rationality, which implies WRSE.
According to Reny's definition, determining whether a consistent assessment is a WSRE in behavioral strategies involves verifying global rationality defined by nonconvex conditional payoff functions at each reachable information set of a player. Due to this difficulty, Reny~\cite{Reny (1992)} can only establish the existence of a WSRE through Selten's~\cite{Selten (1975)} existence of normal-form perfect equilibrium of an extensive-form game, which is a sufficient but not necessary condition for the existence of WSRE. Furthermore, consistent assessment demands a convergent sequence of totally mixed behavioral strategy profiles and associated Bayesian beliefs. However, Reny’s WSRE definition does not fully specify how to construct the convergent sequence.
To overcome these deficiencies with Reny's definition, this paper acquires a characterization of WSRE through $\varepsilon$-perfect $\gamma$-WSRE with local sequential rationality, which is achieved by incorporating an extra behavioral strategy profile.

The global rationality of nonconvex payoff functions is embedded in the definition of NashEBS of an extensive-form game. Due to this nonconvexity, determining whether a specified behavioral strategy profile qualifies as a Nash equilibrium or not with this definition becomes challenging. To overcome this difficulty, a general approach is to invoke the associated normal-form game. In this approach, an extensive-form game is converted to a normal-form game by taking into account all pure strategies for each player and the resulting payoffs.  It is well known that one Nash equilibrium of the associated normal-form game of an extensive-form game may correspond to infinitely many NashEBSs of the extensive-form game.  Since a pure strategy specifies a move for each information set of a player,  the number of pure strategies is often exponential in the size of the extensive-form game.
 To address this issue, the sequence form of an extensive-form game with perfect recall is introduced as a strategic description in the literature such as Koller and Megiddo~\cite{Koller and Megiddo (1992)} and von Stengel~\cite{von Stengel (1996)}. Although the sequence form of an extensive-form game is linear in the size of the game tree, it misses some unique features an extensive-form game has and a realization plan of the sequence form may correspond to infinitely many behavioral strategy profiles. Moreover, the advantage of the sequence form vanishes in the computation of conditional expected payoffs. To completely resolve the problems, Cao and Dang~\cite{Cao and Dang (2024)} develop a characterization of NashEBS with local sequential rationality, which is accomplished by introducing an extra behavioral strategy profile and self-independent beliefs. The characterization achieves global rationality through local sequential rationality, which only demands the optimality of a linear payoff function at every information set.  This development for NashEBS inspires the characterization of WSRE in this paper. 
 
Equilibrium computation plays an essential role in the applications of game theory (Harsanyi and Selten~\cite{Harsanyi and Selten (1988)}). The computation of WSREs is closely intertwined with the computation of Nash equilibrium and its refinements in extensive-form games.  Herings and Peeters~\cite{Herings and Peeters (2010)} presents an excellent review of the existing path-following methods for computing Nash equilibria in normal-form games.  However, as pointed out above, such a scheme can be very inefficient since the problem size increases exponentially in the size of game trees. To benefit from special structures of extensive-form games, several methods have been proposed to compute a NashEBS in the literature. A modified Lemke-Howson method was proposed by Wilson~\cite{Wilson (1972)}. A generalization of the Lemke-Howson method to the sequence form of an extensive two-person game was realized in Koller et al.~\cite{Koller et al. (1996)}. As an application of the structure theorem, a global Newton method was described in Govindan and Wilson~\cite{Govindan and Wilson (2002)} for the sequence form of a perturbed extensive-form game. There exist very few methods to compute the refinements of Nash equilibrium in extensive-form games.  As an application of the sequence form, a pivoting procedure was attained in von Stengel et al.~\cite{von Stengel et al. (2002)} to find a normal-form perfect equilibrium in an extensive two-person game. A pivoting algorithm was given in Miltersen and Sorensen~\cite{Miltersen and Sorensen (2010)} to find a quasi-perfect equilibrium of a two-person extensive-form game.  By exploiting the agent quantal response equilibrium in McKelvey and Palfrey~\cite{McKelvey and Palfrey (1998)}, a homotopy method was given in Turocy~\cite{Turocy (2010)} to compute sequential equilibria but the method is very time-consuming and may have difficulties to numerically find such an equilibrium with reasonable accuracy. The existing methods have significantly advanced the applications of equilibria, but none of them can be employed for effectively computing WSREs in extensive-form games. Given the favorable attributes of path-following methods demonstrated in existing literature, particularly their ability to tackle global tasks by iteratively employing local approximations, we exploit the characterizations of WSRE to develop differentiable path-following methods to compute such an equilibrium. The methods are devised by establishing the existence of smooth paths, which are derived from the construction of logarithmic-barrier and convex-quadratic-penalty extensive-form games in which each player at each information set solves against two given behavioral strategy profiles a convex optimization problem.

The remainder of this paper is structured as follows. Section 2 presents a characterization of Reny's WSRE with local sequential rationality for a given consistent assessment. In Section 3, we establish a characterization of WSRE through $\varepsilon$-perfect $\gamma$-WSRE. Section 4 provides illustrative examples demonstrating how to apply our characterization in Section 3 to analytically identify all WSREs in small extensive-form games. Section 5 develops differentiable path-following methods for computing WSREs, while Section 6 presents comprehensive numerical results. Finally, Section 7 concludes the paper with discussion and remarks.

\section{\large A Characterization of WSRE through Local Sequential Rationality for a Given Consistent Assessment}

\subsection{Preliminaries}

This paper is only concerned with finite extensive-form games with perfect recall. To describe an extensive-form game,  some necessary notations are required in accordance with Osborne and Rubinstein~\cite{Osborne and Rubinstein (1994)}.  For easy reference, we summarize these notations in Table~\ref{Table} and elaborate a selection of these notations in Fig.~\ref{Notation}. Given these notations, we represent  
by $\Gamma=\langle N, H, P, f_c, \{{\cal I}_i\}_{i\in N}\rangle$ an extensive-form game. A finite extensive-form game means an extensive-form game with a finite number of histories. Let $X_i(h)$ be the record of player $i$'s experience along the history $h$. Then, $X_i(h)$ is the sequence consisting of the information sets that player $i$  encounters in the history $h$ and the actions she takes at them in the order that these events occur. An extensive-form game has perfect recall if, for each player $i$, we have $X_i(h')=X_i(h'')$ whenever the histories $h'$ and $h''$ are in the same information set of player $i$. 
\begin{table}[ht]\setlength{\abovedisplayskip}{1.2pt}
\setlength{\belowdisplayskip}{1.2pt}
\linespread{1.5} 
\footnotesize
\centering
\caption{Notation for Extensive-Form Games}
\label{Table}
\begin{tabular}{l|l}
\hline
Notation & Terminology\\
\hline
$N=\{1,2,\ldots,n\}$ & Set of players without the chance player\\ 
$h=\langle a_1,a_2,\ldots,a_L\rangle$ &  A history, which is a sequence of actions taken by players\\ 
$H$, $\emptyset\in H$ &  Set of histories, $\langle a_1,\ldots,a_L\rangle\in H$ if $\langle a_1,\ldots,a_K\rangle\in H$ and $L<K$\\ 
$Z$ & Set of terminal histories\\ 
$A(h)=\{a:\langle h,a\rangle\in H\}$ & Set of actions after a nonterminal history $h\in H$\\ 
$P(h)$ & Player who takes an action after a history $h\in H$\\ 
${\cal I}_i$ & Collection of information partitions of player $i$\\ 
$I_i^j\in {\cal I}_i$, $j\in M_i=\{1,\ldots,m_i\}$ & $j$th information set of player $i$, $A(h)=A(h')$ whenever $h,h'\in I_i^j$\\ 
$A(I^j_i)=A(h)$ for any $h\in I^j_i$ & Set of actions of player $i$ at information set $I^j_i$\\ 
$\beta=(\beta^i_{I_i^j}(a):i\in N,I_i^j\in{\cal I}_i,a\in A(I_i^j))$ & Profile of behavioral strategies\\ 
$\beta^i=(\beta^i_{I_i^j}:j\in M_i)$ & Behavioral strategy of player $i$\\  
$\beta^{-i}=(\beta^p_{I^p_q}:p\in N\backslash\{i\},q\in M_p)$ & Profile of behavioral strategies without $\beta^i$\\ 
$\beta^i_{I_i^j}=(\beta^i_{I_i^j}(a):a\in A(I_i^j))^\top$ & Probability measure over $A(I_i^j)$ and $\beta^i_h=\beta^i_{I^j_i}$ for any $h\in I^j_i$\\
$\beta^{-I^j_i}=(\beta^p_{I^p_q}:p\in N,q\in M_p,I^q_p\ne I^j_i)$ & Profile of behavioral strategies without $\beta^i_{I^j_i}$\\ 
$f_c(\cdot|h)=(f_c(a|h):a\in A(h))^\top$ & Probability measure of the chance player $c$ over $A(h)$ \\ 
$\mu=(\mu^i_{I_i^j}(h):i\in N,I_i^j\in{\cal I}_i,h\in I_i^j)$ & Belief system, $\sum_{h\in I_i^j}\mu^i_{I_i^j}(h)=1$ and $\mu^i_{I_i^j}(h)\ge 0$ for all $h\in I_i^j$\\ 
$h\cap A(I^j_i)$; $a\in h$ & $\{a_1,\ldots,a_L\}\cap A(I^j_i)$; $a\in \{a_1,\ldots,a_L\}$ for $h=\langle a_1,\ldots,a_L\rangle$\\
$u^i:Z\rightarrow\mathbb{R}$ & Payoff function of player $i$\\ \hline
\end{tabular}
\end{table}
\begin{figure}[ht!]
    \centering
    \begin{minipage}{0.40\textwidth}
        \centering
        \includegraphics[width=0.9\textwidth, height=0.15\textheight]{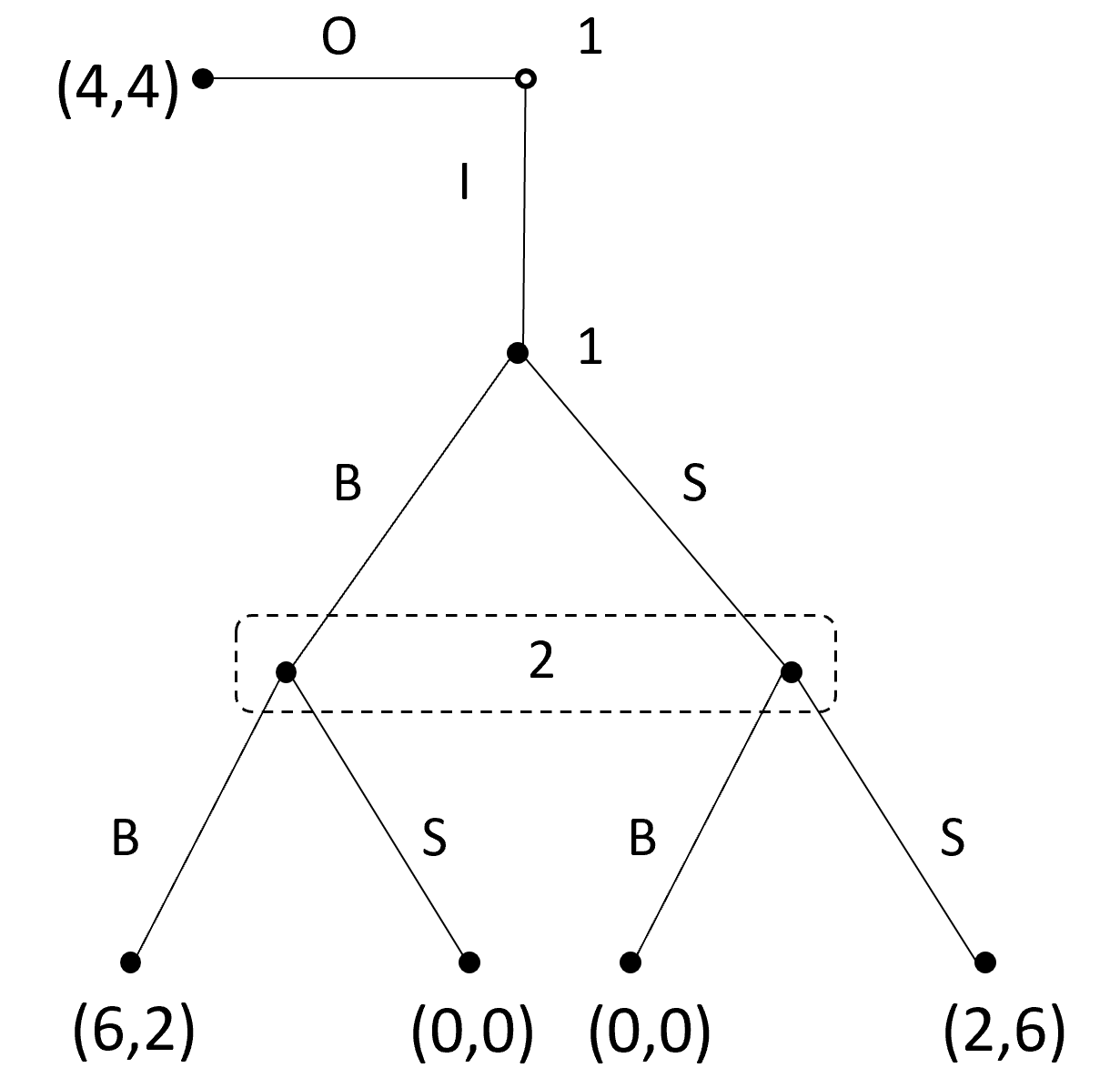}
\end{minipage}\hfill
    \begin{minipage}{0.60\textwidth}
       {\scriptsize
       Information sets: ${\cal I}_1=\{I^1_1, I^2_1\}$, ${\cal I}_2=\{I^1_2\}$,  $I^1_1=\{\emptyset\}$, $I^2_1=\{\langle I\rangle\}$, $I^1_2=\{\langle I, B\rangle, \langle I, S\rangle\}$.\newline Player who takes an action after a history: $P(\emptyset)=1$, $P(\langle I\rangle)=1$, $P(\langle I, B\rangle)= 2$, $P(\langle I, S\rangle)=2$.\newline
       Action sets: $A(I^1_1)=\{O, I\}$, $A(I^2_1)=\{B, S\}$,  $A(I^1_2)=\{B, S\}$. \newline
       Behavioral strategies:
       $\beta^1_{I^1_1}=(\beta^1_{I^1_1}(O),\beta^1_{I^1_1}(I))^\top$,  $\beta^1_{I^2_1}=(\beta^1_{I^2_1}(B),\beta^1_{I^2_1}(S))^\top$, $\beta^2_{I^1_2}=(\beta^2_{I^1_2}(B),\beta^2_{I^1_2}(S))^\top$.\newline
       Beliefs: $\mu^2_{I^1_2}=(\mu^2_{I^1_2}(\langle I, B\rangle), \mu^2_{I^1_2}(\langle I, S\rangle))^\top$.}
        \end{minipage}
\caption{\label{Notation}\scriptsize Illustrations of Some Notations}
\end{figure}
 
 For $i\in N$ and $j\in M_i$, we often write a behavioral strategy profile $\beta$ as $(\beta^i_{I_i^j},\beta^{-I^j_i})$. The probability assigned by $\beta$ to any  history $h=\langle a_1,\ldots,a_K\rangle\in H$ equals \begin{equation}\label{wsreeq1}\setlength{\abovedisplayskip}{1.2pt}
\setlength{\belowdisplayskip}{1.2pt} \omega(h|\beta)=\prod\limits_{k=0}^{K-1}\beta^{P(\langle a_1,\ldots,a_k\rangle)}_{\langle a_1,\ldots,a_k\rangle}(a_{k+1}).\end{equation} 
 For $i\in N$, $j\in M_i$, $h=\langle a_1,\ldots,a_K\rangle\in H$, and $a\in A(I^j_i)$, we have \begin{equation}\label{wsreeq1A}\setlength{\abovedisplayskip}{1.2pt}
\setlength{\belowdisplayskip}{1.2pt} \omega(h|a,\beta^{-I^j_i})=\mathop{\prod\limits_{k=0}^{K-1}}\limits_{\langle a_1,\ldots,a_k\rangle\notin I^j_i}\beta^{P(\langle a_1,\ldots,a_k\rangle)}_{\langle a_1,\ldots,a_k\rangle}(a_{k+1}).\end{equation} 
 The expected payoff of player $i$ at $\beta$ is given by
\begin{equation}\label{wsreeq2}\setlength{\abovedisplayskip}{1.2pt}
\setlength{\belowdisplayskip}{1.2pt} u^i(\beta)=\sum\limits_{h\in Z}u^i(h)\omega(h|\beta).\end{equation}
For the game in Fig.~\ref{Notation}, we have $\omega(\langle I, B, S\rangle |\beta)=\beta^1_{I^1_1}(I)\beta^1_{I^2_1}(B)\beta^2_{I^1_2}(S)$,  $\omega(\langle I, B, S\rangle |B, \beta^{-I^2_1})=\beta^1_{I^1_1}(I)\beta^2_{I^1_2}(S)$, and $u^2(\beta)=4\beta^1_{I^1_1}(O)+2\beta^1_{I^1_1}(I)\beta^1_{I^2_1}(B)\beta^2_{I^1_2}(B)+6\beta^1_{I^1_1}(I)\beta^1_{I^2_1}(S)\beta^2_{I^1_2}(S)$.

 A behavioral strategy profile $\beta^*$ is 
a {\bf Nash equilibrium} of an extensive-form game if $u^i(\beta^*)\ge u^i(\beta^i,\beta^{*-i})$ for any $\beta^i$. 
 It is evident that $u^i(\beta^i,\beta^{*-i})$ is a nonconvex function when player $i$ has at least one information set following another. Consequently, Nash's definition of Nash equilibrium in behavioral strategies involves global rationality of nonconvex payoff functions. To tackle this problem, Cao and Dang~\cite{Cao and Dang (2024)} present a characterization of Nash equilibrium in behavioral strategies through local sequential rationality by introducing an extra behavioral strategy profile and self-independent consistency of beliefs. The idea in this characterization inspires the development of this paper.

We denote by $\text{int}(C)$ and $|C|$ the  interior of a set $C$ and the cardinality of a finite set $C$, respectively. Let
$\triangle=\mathop{\times}\limits_{i\in N,\;j\in M_i}\triangle^i_{I^j_i}$ and $\triangle^i=\mathop{\times}\limits_{j\in M_i}\triangle^i_{I^j_i}$, where $\triangle^i_{I^j_i}=\{\beta^i_{I^j_i}\in\mathbb{R}_+^{|A(I^j_i)|}|\sum\limits_{a\in A(I^j_i)}\beta^i_{I^j_i}(a)=1\}$. Let $\Xi=\mathop{\times}\limits_{i\in N,\;j\in M_i}\Xi^i_{I^j_i}$, where $\Xi^i_{I^j_i}=\{\mu^i_{I^j_i}=(\mu^i_{I^j_i}(h):h\in I^j_i)^{\top}|\sum\limits_{h\in I^j_i}\mu^i_{I^j_i}(h)=1,\;0\le\mu^i_{I^j_i}(h)\}$.  An {\bf assessment} in an extensive-form game is a pair $(\beta,\mu)\in\triangle\times\Xi$, where $\beta$ is a behavioral strategy profile and $\mu$ is a function that assigns to every information set a probability measure on the set of histories in the information set. 
We describe $\mu$ as a {\bf belief system}. A player's belief at an information set $I^j_i$ along a specific history $h \in I^j_i$, denoted as $\mu^i_{I^j_i}(h)$, represents the probability that the history $h$ has occurred when it is player $i$'s turn to choose an action at $I^j_i$. A behavioral strategy profile $\beta=(\beta^i_{I^j_i}(a):i\in N,j\in M_i,a\in A(I^j_i))$ is said to be totally mixed if $\beta^i_{I^j_i}(a)>0$ for all $i\in N$, $j\in M_i$, and $a\in A(I^j_i)$. In this paper, $\beta>0$ indicates that $\beta$ is totally mixed. Moreover, an assessment is said to be totally mixed if $\beta^i_{I^j_i}(a)>0$ and $\mu^i_{I^j_i}(h)>0$ for all $i\in N,j\in M_i,a\in A(I^j_i),h\in I^j_i$.

For $i\in N$ and $j\in M_i$,
we define \begin{equation}\label{wsreeq3}\setlength{\abovedisplayskip}{1.2pt}
\setlength{\belowdisplayskip}{1.2pt} u^i(\beta\land I^j_i)=\sum\limits^
{h\cap A(I^j_i)\ne\emptyset}_{h\in Z}u^i(h)\omega(h|\beta)\text{\;\;\;\;
and\;\;\;\;}
 u^i((a,\beta^{-I^j_i})\land I^j_i)=\sum\limits_{a\in h\in Z}u^i(h)\omega(h|a,\beta^{-I^j_i}).\end{equation}
To represent the conditional expected payoff with a belief system $\mu$,
for $i\in N$, $j\in M_i$, and $h=\langle a_1,\ldots,a_K\rangle\in Z$, let
\[\setlength{\abovedisplayskip}{1.2pt}
\setlength{\belowdisplayskip}{1.2pt}
\nu^i_{I^j_i}(h|\beta,\mu)=\left\{\begin{array}{ll} 
\mu^i_{I^j_i}(\hat h)\prod\limits_{k=L}^{K-1}\beta^{P(\langle a_1,\ldots,a_k\rangle)}_{\langle a_1,\ldots,a_k\rangle}(a_{k+1}) & \text{if $\hat h=\langle a_1,\ldots,a_L\rangle\in I^j_i$,}\\

0 & \text{if there is no subhistory of $h$ in $I^j_i$,}
\end{array}\right.\]
and
\[\setlength{\abovedisplayskip}{1.2pt}
\setlength{\belowdisplayskip}{1.2pt}
\nu^i_{I^j_i}(h|a,\beta^{-I^j_i},\mu)=\left\{\begin{array}{ll} 
\mu^i_{I^j_i}(\hat h)\prod\limits_{k=L+1}^{K-1}\beta^{P(\langle a_1,\ldots,a_k\rangle)}_{\langle a_1,\ldots,a_k\rangle}(a_{k+1}) & \text{if $\hat h=\langle a_1,\ldots,a_L\rangle\in I^j_i$,}\\

0 & \text{if there is no subhistory of $h$ in $I^j_i$,}
\end{array}\right.\]
where $\nu^i_{I^j_i}(h|\beta,\mu)$ denotes the probability that the moves along $h$ are played if $\beta$ is taken by players given that  $I^j_i$ has been reached and the belief system is $\mu$. Within $\nu^i_{I^j_i}(h|\beta,\mu)$, $\mu^i_{I^j_i}(\hat{h})$ represents  the probability that a subhistory $\hat{h}\in I^j_i$ along $h$ has occurred conditioned on information set $I^j_i$ has been reached. 
With these notations,  the conditional expected payoff of player $i$ on $I^j_i$ at $(\beta,\mu)$ can be denoted as
\begin{equation}\label{wsreeq4}\setlength{\abovedisplayskip}{1.2pt}\setlength{\belowdisplayskip}{1.2pt}
u^i(\beta,\mu|I^j_i)=\sum\limits^
{h\cap A(I^j_i)\ne\emptyset}_{h\in Z}u^i(h)\nu^i_{I^j_i}(h|\beta,\mu),\end{equation}
and when player $i$ takes a pure action $a\in A(I^j_i)$ at $I^j_i$,  the conditional expected payoff of player $i$ on $I^j_i$ at $(a,\beta^{-I^j_i},\mu)$ can be denoted as
\begin{equation}\label{wsreeq5}\setlength{\abovedisplayskip}{1.2pt}\setlength{\belowdisplayskip}{1.2pt}
u^i(a,\beta^{-I^j_i},\mu|I^j_i)=\sum\limits_{a\in h\in Z}u^i(h)\nu^i_{I^j_i}(h|a,\beta^{-I^j_i},\mu).\end{equation}

For $i\in N$ and $j\in M_i$, the realization probability of information set $I^j_i$ is given by
\[\setlength{\abovedisplayskip}{1.2pt}
\setlength{\belowdisplayskip}{1.2pt}\omega(I^j_i|\beta)=\sum\limits_{h\in I^j_i}\omega(h|\beta).\] For $\beta>0$, we define $\mu(\beta)=(\mu^i_{I^j_i}(h|\beta):i\in N,j\in M_i,h\in I^j_i)$ with $\mu^i_{I^j_i}(h|\beta)=\frac{\omega(h|\beta)}{\omega(I^j_i|\beta)}$.
Let \[\setlength{\abovedisplayskip}{1.2pt}
\setlength{\belowdisplayskip}{1.2pt}\widetilde\Psi=\{(\beta,\mu)\in\triangle\times\Xi|\beta>0\text{ and }\mu=\mu(\beta)\}\] and $\Psi$ be the closure of $\widetilde\Psi$. We say $(\beta,\mu)$ is a {\bf KW-consistent assessment} and $\mu$ is a {\bf KW-consistent belief system} if $(\beta,\mu)\in\Psi$.
Let \[\setlength{\abovedisplayskip}{1.2pt}
\setlength{\belowdisplayskip}{1.2pt}
{\cal Y}(I^j_i)=\{(a, q)|a\in h\cap A(I^q_i)\text{ for some $q\in M_i$ and $h\in I^j_i$}\}.
\]
${\cal Y}(I^j_i)$ consists of all the superscripts of information sets and the actions player $i$ encounters before reaching information set $I^j_i$, which can be an empty set. We define 
\begin{equation}\label{eqAwsre}\setlength{\abovedisplayskip}{1.2pt}
\setlength{\belowdisplayskip}{1.2pt}
{\cal D}(I^j_i|\beta)=\left\{\begin{array}{ll}
\prod\limits_{(a,q)\in {\cal Y}(I^j_i)}\beta^i_{I^q_i}(a) & \text{if ${\cal Y}(I^j_i)\ne\emptyset$,}\\
1 &\text{otherwise.}
\end{array}\right.
\end{equation}
A weakening of Kreps and Wilson's requirement of global rationality at every information set yields the notion of weakly sequential rationality in Reny~\cite{Reny (1992)}.
\begin{definition}[{\bf Weakly Sequential Rationality} or {\bf Weakly Sequentially Rational Equilibrium (WSRE)}, Reny~\cite{Reny (1992)}]\label{wsred1}{\em A consistent assessment $(\beta^*,\mu^*)\in\Psi$ is  a weakly sequentially rational equilibrium (WSRE) if, for every player $i$ and every $j\in M_i$ with ${\cal D}(I^j_i|\beta^*)>0$, we have 
\(u^i(\beta^*,\mu^*|I^j_i)\ge u^i(\beta^i,\beta^{*-i},\mu^*|I^j_i)\)
for every $\beta^i$ of player $i$.}
\end{definition}  

One can see that $u^i(\beta^i,\beta^{*-i},\mu^*|I^j_i)$ is a nonconvex payoff function when $I^j_i$ is followed by at least one more information set of player $i$. Therefore, 
to determine according to Definition~\ref{wsred1} whether an assessment $(\beta^*,\mu^*)\in\Psi$ is a WSRE or not,  one needs to verify the global rationality of nonconvex conditional payoff functions at each information set that is reached by a player's own strategy or solve a dynamic programming problem. To overcome this deficiency with Definition~\ref{wsred1}, we introduce an auxiliary behavioral strategy profile to acquire a characterization of WSRE from local sequential rationality. 

For $i\in N$ and $j\in M_i$, let
$\varrho^i_{I^j_i}(\beta,\tilde\beta)=(\varrho^i_{I^j_i}(\beta^p_{I^q_p},\tilde\beta):p\in N,q\in M_p)$,  where  \begin{equation}\label{esedeqA}\setlength{\abovedisplayskip}{1.2pt}
\setlength{\belowdisplayskip}{1.2pt}
\varrho^i_{I^j_i}(\beta^p_{I^q_p},\tilde\beta)=\left\{\begin{array}{ll}
\tilde\beta^i_{I^q_i} & \text{if both $p=i$ and $h\cap A(I^j_i)\ne\emptyset$ for some $h\in I^q_i$,}\\

\beta^p_{I^q_p} & \text{otherwise.}
\end{array}\right.\end{equation}
Given these notations, we arrive at the following characterization of WSRE through local sequential rationality.

\begin{theorem}\label{ed1wsre}{\em An assessment $(\beta^*,\mu^*)\in\Psi$ is  a WSRE  if and only if  $(\beta^*,\mu^*)$ together with $\tilde\beta^*$  satisfies the properties:\newline
 (i). $\beta^{*i}_{I^j_i}(a')=0$ for any $i\in N$, $j\in M_i$ and $a',a''\in A(I^j_i)$ with 
 \[\setlength{\abovedisplayskip}{1.2pt}
\setlength{\belowdisplayskip}{1.2pt}{\cal D}(I^j_i|\beta^*)(u^i(a'',\varrho^i_{I^j_i}(\beta^{*-I^j_i},\tilde\beta^*),\mu^*|I^j_i)
-u^i(a',\varrho^i_{I^j_i}(\beta^{*-I^j_i},\tilde\beta^*),\mu^*|I^j_i))>0,\] 
(ii). $\tilde\beta^{*i}_{I^j_i}(a')=0$ for any $i\in N$, $j\in M_i$ and $a',a''\in A(I^j_i)$ with \[\setlength{\abovedisplayskip}{1.2pt}
\setlength{\belowdisplayskip}{1.2pt} u^i(a'',\varrho^i_{I^j_i}(\beta^{*-I^j_i},\tilde\beta^*),\mu^*|I^j_i)
-u^i(a',\varrho^i_{I^j_i}(\beta^{*-I^j_i},\tilde\beta^*),\mu^*|I^j_i)>0.\]
}\end{theorem}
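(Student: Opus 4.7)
The plan is to prove both directions separately, using $\tilde\beta^*$ as a subtree-optimal ``shadow'' continuation that satisfies Bellman's principle everywhere. A preparatory observation driving the argument is that by the definition of $\varrho$ in~\eqref{esedeqA}, the clause ``$h\cap A(I^j_i)\ne\emptyset$ for some $h\in I^q_i$'' picks out, by perfect recall, exactly those player-$i$ information sets $I^q_i$ that come after $I^j_i$ along player $i$'s path, while $u^i(\cdot,\cdot,\mu^*|I^j_i)$ ignores strategies at non-descendant information sets of player $i$; consequently, $u^i(a,\varrho^i_{I^j_i}(\beta^{*-I^j_i},\tilde\beta^*),\mu^*|I^j_i)$ is exactly the conditional payoff obtained by fixing action $a$ at $I^j_i$, using $\tilde\beta^{*i}$ at the player-$i$ descendants of $I^j_i$, and $\beta^{*-i}$ elsewhere. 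Throughout I abbreviate $V^*(I^j_i):=\max_{\beta^i} u^i(\beta^i,\beta^{*-i},\mu^*|I^j_i)$.

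For necessity, starting from a WSRE $(\beta^*,\mu^*)$, I construct $\tilde\beta^*$ by backward induction on information sets: once $\tilde\beta^{*i}$ has been fixed at the player-$i$ descendants of $I^j_i$, I choose $\tilde\beta^{*i}_{I^j_i}$ with support only on maximizers of $u^i(\cdot,\varrho^i_{I^j_i}(\beta^{*-I^j_i},\tilde\beta^*),\mu^*|I^j_i)$. Condition (ii) then holds by construction, and Bellman's principle yields $u^i(\tilde\beta^{*i},\beta^{*-i},\mu^*|I^j_i)=V^*(I^j_i)$ for every $I^j_i$. To verify condition (i) at a reachable $I^j_i$ and any $a$ in the support of $\beta^{*i}_{I^j_i}$, WSRE gives $u^i(\beta^*,\mu^*|I^j_i)=V^*(I^j_i)$, and the standard best-response support lemma upgrades this to $u^i(a,\beta^{*i},\beta^{*-i},\mu^*|I^j_i)=V^*(I^j_i)$. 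Replacing the descendant continuation $\beta^{*i}$ by $\tilde\beta^{*i}$ can only weakly increase this value by subtree-optimality, yet it is also bounded above by $V^*(I^j_i)$, so squeezing forces $u^i(a,\varrho^i_{I^j_i}(\beta^{*-I^j_i},\tilde\beta^*),\mu^*|I^j_i)=V^*(I^j_i)$, which is the maximum over $a$ and hence condition (i).

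For sufficiency, condition (ii) is the Bellman equation for $\tilde\beta^{*i}$ relative to $\beta^{*-i}$, and backward induction again gives $u^i(\tilde\beta^{*i},\beta^{*-i},\mu^*|I^j_i)=V^*(I^j_i)$. I then prove by induction on player-$i$ information sets from leaves to root that $u^i(\beta^*,\mu^*|I^j_i)=V^*(I^j_i)$ whenever ${\cal D}(I^j_i|\beta^*)>0$. In the inductive step I decompose $u^i(\beta^*,\mu^*|I^j_i)$ through actions $a$ at $I^j_i$; only support actions of $\beta^{*i}_{I^j_i}$ contribute, and for each such $a$ the product form of ${\cal D}$ in~\eqref{eqAwsre} together with perfect recall gives ${\cal D}(I^q_i|\beta^*)>0$ for every player-$i$ direct descendant $I^q_i$ reached via $a$, so the inductive hypothesis applies and replaces $u^i(\beta^*,\mu^*|I^q_i)$ by $V^*(I^q_i)$. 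The resulting expression coincides term-by-term with $u^i(a,\varrho^i_{I^j_i}(\beta^{*-I^j_i},\tilde\beta^*),\mu^*|I^j_i)$, which by condition (i) equals $V^*(I^j_i)$ on the support; averaging over $\beta^{*i}_{I^j_i}$ delivers $u^i(\beta^*,\mu^*|I^j_i)=V^*(I^j_i)\ge u^i(\beta^i,\beta^{*-i},\mu^*|I^j_i)$ for every $\beta^i$, i.e., WSRE.

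The hard part will be the value-matching step in the sufficiency induction, in which I swap the $\beta^*$-continuation for the subtree-optimal $\tilde\beta^*$-continuation at each reachable direct descendant $I^q_i$. This swap is legal only when the inductive hypothesis applies at $I^q_i$, so everything hinges on showing that every player-$i$ direct descendant of a reachable $I^j_i$ reached through a support action of $\beta^{*i}_{I^j_i}$ is itself reachable. This is a direct consequence of the multiplicative form of ${\cal D}$ in~\eqref{eqAwsre} and perfect recall, but it must be stated carefully so that the entire argument remains inside the reachable subtree and avoids appealing to $\beta^{*i}$ at off-path information sets, where no optimality is available.
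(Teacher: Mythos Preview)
Your outline follows the same backward-induction strategy as the paper. The paper's construction of $\tilde\beta^*$ in the necessity direction (set $\tilde\beta^{*i}_{I^j_i}=\beta^{*i}_{I^j_i}$ at reachable information sets, solve the linear program~\eqref{wsrethmeq18} at unreachable ones) is a specific instance of your backward-induction construction, and the sufficiency argument via induction over reachable information sets matches as well.

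However, you misidentify the hard step. Reachability propagation from $I^j_i$ to its player-$i$ direct descendants via the product form~\eqref{eqAwsre} is easy, exactly as you say. The genuine technical content lies in your repeated invocation of ``Bellman's principle'': the claims that local optimality of $\tilde\beta^*$ (condition~(ii)) implies $u^i(\tilde\beta^{*i},\beta^{*-i},\mu^*|I^j_i)=V^*(I^j_i)$, and that $u^i(a,\beta^{*,-I^j_i},\mu^*|I^j_i)$ decomposes through direct descendants $I^q_i$ so that it ``coincides term-by-term'' with $u^i(a,\varrho^i_{I^j_i}(\beta^{*-I^j_i},\tilde\beta^*),\mu^*|I^j_i)$ once the induction hypothesis is applied, both presuppose a recursive relation between $u^i(\cdot,\mu^*|I^j_i)$ and $u^i(\cdot,\mu^*|I^q_i)$. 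For an arbitrary belief system this relation fails, since $\mu^*_{I^j_i}$ and $\mu^*_{I^q_i}$ need not be connected by Bayes updating. It does hold for KW-consistent $(\beta^*,\mu^*)\in\Psi$, but only because the identity holds for each totally mixed $(\beta^\ell,\mu^\ell)$ via Eq.~\eqref{wsrethmeqC} and passes to the limit by continuity; this is precisely what the paper establishes through Eqs.~\eqref{wsrethmeq6}--\eqref{wsrethmeq15} and the auxiliary continuous functions $w(\beta,\mu)$, $f(\beta,\mu)$. Without that limiting argument, your ``squeezing'' step in necessity (the inequality $u^i(a,\varrho,\mu^*|I^j_i)\ge u^i(a,\beta^{*,-I^j_i},\mu^*|I^j_i)$) and your ``coincides term-by-term'' step in sufficiency are both unjustified. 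The consistency hypothesis $(\beta^*,\mu^*)\in\Psi$ is used essentially here, not merely to make the statement well-posed.
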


\begin{proof} Let $(\beta^*,\mu^*)\in\Psi$. Then there exists a convergent sequence of $\{(\beta^\ell,\mu^\ell),\ell=1,2,\ldots\}$ such that $(\beta^*,\mu^*)=\lim\limits_{\ell\to\infty}(\beta^\ell,\mu^\ell)$, where $\beta^\ell=(\beta^{\ell i}_{I^j_i}(a):i\in N,j\in M_i,a\in A(I^j_i))>0$ and $\mu^\ell=(\mu^{\ell i}_{I^j_i}(h):i\in N, j\in M_i, h\in I^j_i)=\mu(\beta^\ell)=(\mu^i_{I^j_i}(h|\beta^\ell):i\in N,j\in M_i,h\in I^j_i)$ with  $\mu^{\ell i}_{I^j_i}(h)=\mu^i_{I^j_i}(h|\beta^\ell)=\frac{\omega(h|\beta^\ell)}{\omega(I^j_i|\beta^\ell)}$. For $i\in N$, $j\in M_i$, and $a\in A(I^j_i)$, let 
 \[\setlength{\abovedisplayskip}{1.2pt}
\setlength{\belowdisplayskip}{1.2pt}
\begin{array}{l}
\Lambda(a, I^j_i)=\left\{q\in M_i|\text{for any
 $h=\langle a_1,\ldots,a_L\rangle\in I^q_i$, there exists $1\le \ell\le L$ such that $a_\ell=a$}\right\},\\

M(a,I^j_i)=\left\{q\in M_i\left|\begin{array}{l}\text{for any
 $h=\langle a_1,\ldots,a_L\rangle\in I^q_i$, there exists $1\le \ell\le L$ such}\\
 \text{that $a_\ell=a$ and $\{a_{\ell+1},\ldots,a_L\}\cap A(I^p_i)=\emptyset$ for all $p\in M_i$}\end{array}\right.\right\},\\
 
 Z^0(a, I^j_i)=\left\{h=\langle a_1,\ldots,a_K\rangle\in Z\left|\begin{array}{l}
 \text{$a_\ell=a$ for some $1\le \ell\le K$ and}\\
 \text{$\{a_{\ell+1},\ldots,a_K\}\cap A(I^q_i)=\emptyset$ for all $q\in M_i$}\end{array}\right.\right\},\\
 

\Lambda(I^j_i)=\mathop{\cup}\limits_{a\in A(I^j_i)}\Lambda(a, I^j_i),\;\;
M(I^j_i)=\mathop{\cup}\limits_{a\in A(I^j_i)}M(a,I^j_i),\;\;  
Z^0(I^j_i)=\mathop{\cup}\limits_{a\in A(I^j_i)}Z^0(a, I^j_i). 
\end{array}\]
For $i\in N$ and $j\in M_i$, if ${\cal D}(I^j_i|\beta)=0$, one can deduce from Eq.~(\ref{eqAwsre}) that ${\cal D}(I^q_i|\beta)=0$ for any $q\in\Lambda(I^j_i)$.

\noindent $(\Rightarrow)$ Let $(\beta^*,\mu^*, \tilde\beta^*)$ with $(\beta^*,\mu^*)\in\Psi$ satisfy the properties in Theorem~\ref{ed1wsre}. Then,
for any $i\in N$ and $j\in M_i$ with ${\cal D}(I^j_i|\beta^*)>0$, it follows from the properties in Theorem~\ref{ed1wsre} that
\begin{equation}\setlength{\abovedisplayskip}{1.2pt}
\setlength{\belowdisplayskip}{1.2pt}\label{wsrethmeq1}\sum\limits_{a\in A(I^j_i)}\beta^{*i}_{I^j_i}(a)u^i(a,\varrho^i_{I^j_i}(\beta^{*-I^j_i},\tilde\beta^*),\mu^*|I^j_i)=\sum\limits_{a\in A(I^j_i)}\tilde\beta^{*i}_{I^j_i}(a)u^i(a,\varrho^i_{I^j_i}(\beta^{*-I^j_i},\tilde\beta^*),\mu^*|I^j_i).\end{equation}
For $i\in N$ and $j\in M_i$, we have 
\begin{equation}\setlength{\abovedisplayskip}{1.2pt}
\setlength{\belowdisplayskip}{1.2pt}\label{wsrethmeq2}
\begin{array}{rl}
\max\limits_{\beta^i} u^i(\beta^i,\beta^{\ell,-i},\mu^\ell|I^j_i)= & \max\limits_{\beta^i} \sum\limits_{a\in A(I^j_i)}\beta^i_{I^j_i}(a)u^i(a,\beta^{i,-I^j_i},\beta^{\ell,-i},\mu^\ell|I^j_i)\\
= & \max\limits_{\beta^i_{I^j_i}} \sum\limits_{a\in A(I^j_i)}\beta^i_{I^j_i}(a) \max\limits_{\tilde\beta^i} u^i(a,\tilde\beta^{i,-I^j_i},\beta^{\ell,-i},\mu^\ell|I^j_i)\\
= & \max\limits_{\tilde\beta^i_{I^j_i}} \sum\limits_{a\in A(I^j_i)}\tilde\beta^i_{I^j_i}(a) \max\limits_{\tilde\beta^i} u^i(a,\varrho^i_{I^j_i}(\beta^{\ell,-I^j_i},\tilde\beta),\mu^\ell|I^j_i),
\end{array}
\end{equation}
where the second equality comes from the fact that $\Lambda(a',I^j_i)\cap\Lambda(a'',I^j_i)=\emptyset$ for any $a',a''\in A(I^j_i)$ with $a'\ne a''$.
Thus, as $\ell\to\infty$, we get from Eq.~(\ref{wsrethmeq2}) that \begin{equation}\setlength{\abovedisplayskip}{1.2pt}
\setlength{\belowdisplayskip}{1.2pt}\label{wsrethmeqA}\max\limits_{\beta^i} u^i(\beta^i,\beta^{*,-i},\mu^*|I^j_i)= \max\limits_{\tilde\beta^i_{I^j_i}} \sum\limits_{a\in A(I^j_i)}\tilde\beta^i_{I^j_i}(a) \max\limits_{\tilde\beta^i} u^i(a,\varrho^i_{I^j_i}(\beta^{*,-I^j_i},\tilde\beta),\mu^*|I^j_i).\end{equation} In the following,
we acquire from the backward induction the conclusion that,  for any $i\in N$ and $j\in M_i$,   \begin{equation}\setlength{\abovedisplayskip}{1.2pt}
\setlength{\belowdisplayskip}{1.2pt}\label{wsrethmeq3}\max\limits_{\tilde\beta^i_{I^j_i}}\sum\limits_{a\in  A(I^j_i)}\tilde\beta^{i}_{I^j_i}(a)\max\limits_{\tilde\beta^i}u^i(a,\varrho^i_{I^j_i}(\beta^{*,-I^j_i},\tilde\beta), \mu^* |I^j_i)=\sum\limits_{a\in  A(I^j_i)}\tilde\beta^{*i}_{I^j_i}(a)u^i(a,\varrho^i_{I^j_i}(\beta^{*,-I^j_i},\tilde\beta^*), \mu^* |I^j_i).\end{equation}
{\bf Case (1)}. Consider $i\in N$ and $j\in M_i$ with $M(I^j_i)=\emptyset$. We have
\begin{equation}\setlength{\abovedisplayskip}{1.2pt}
\setlength{\belowdisplayskip}{1.2pt}\label{wsrethmeq4} \begin{array}{rl}
 & \max\limits_{\tilde\beta^i_{I^j_i}}\sum\limits_{a\in A(I^j_i)}\tilde\beta^i_{I^j_i}(a)\max\limits_{\tilde\beta^i}u^i(a,\varrho^i_{I^j_i}(\beta^{*,-I^j_i},\tilde\beta),\mu^*|I^j_i)\\
 = & \max\limits_{\tilde\beta^i_{I^j_i}}\sum\limits_{a\in A(I^j_i)}\tilde\beta^i_{I^j_i}(a)u^i(a,\beta^{*,-I^j_i},\mu^*|I^j_i)
 =  \sum\limits_{a\in A(I^j_i)}\tilde\beta^{*i}_{I^j_i}(a)u^i(a,\beta^{*,-I^j_i},\mu^*|I^j_i),
\end{array}
\end{equation}
where the first equality comes from the fact that $u^i(a,\varrho^i_{I^j_i}(\beta^{*,-I^j_i},\tilde\beta),\mu^*|I^j_i)$ is independent of $\tilde\beta$.\newline
{\bf Case (2)}. 
Consider $i\in N$ and $j\in M_i$ with $M(I^j_i)\ne\emptyset$ such that, for any $q\in M(I^j_i)$,  {\small
 \begin{equation}\setlength{\abovedisplayskip}{1.2pt}
\setlength{\belowdisplayskip}{1.2pt}\label{wsrethmeq5}\max\limits_{\tilde\beta^i_{I^q_i}}\sum\limits_{a'\in  A(I^q_i)}\tilde\beta^{i}_{I^q_i}(a')\max\limits_{\tilde\beta^i}u^i(a',\varrho^i_{I^q_i}(\beta^{*,-I^q_i},\tilde\beta), \mu^* |I^q_i)=\sum\limits_{a'\in  A(I^q_i)}\tilde\beta^{*i}_{I^q_i}(a')u^i(a',\varrho^i_{I^q_i}(\beta^{*,-I^q_i},\tilde\beta^*), \mu^* |I^q_i).\end{equation}}

\noindent
 For $a\in A(I^j_i)$ with $M(a, I^j_i)\ne\emptyset$, {\footnotesize
\begin{equation}\setlength{\abovedisplayskip}{1.2pt}
\setlength{\belowdisplayskip}{1.2pt}\label{wsrethmeq6}\begin{array}{rl}
 & u^i(a,\varrho^i_{I^j_i}(\beta^{\ell,-I^j_i},\tilde\beta),\mu^\ell|I^j_i)\\
 
 = &  \sum\limits_{q\in M(a, I^j_i)}\sum\limits_{h\in Z}^{h\cap A(I^q_i)\ne\emptyset}u^i(h)\nu^i_{I^j_i}(h|a, \varrho^i_{I^j_i}(\beta^{\ell,-I^j_i},\tilde\beta),\mu^\ell) + \sum\limits_{h\in Z^0(a, I^j_i)}u^i(h)\nu^i_{I^j_i}(h|a,\varrho^i_{I^j_i}(\beta^{\ell,-I^j_i},\tilde\beta),\mu^\ell)\\
 
  = &  \sum\limits_{q\in M(a, I^j_i)}\frac{1}{\omega(I^j_i|\beta^{\ell})}\sum\limits_{h\in Z}^{h\cap A(I^q_i)\ne\emptyset}u^i(h)\omega(h|a, \varrho^i_{I^j_i}(\beta^{\ell,-I^j_i},\tilde\beta)) + \sum\limits_{h\in Z^0(a, I^j_i)}u^i(h)\nu^i_{I^j_i}(h|a,\beta^{\ell,-I^j_i},\mu^\ell)\\
  
  = & \sum\limits_{q\in M(a, I^j_i)} \frac{\omega(I^q_i|\beta^{\ell,-I^j_i})}{\omega(I^j_i|\beta^\ell)}\sum\limits_{h\in Z}^{h\cap A(I^q_i)\ne\emptyset}u^i(h)\frac{\omega(h|a, \varrho^i_{I^j_i}(\beta^{\ell,-I^j_i},\tilde\beta))}{\omega(I^q_i|\beta^{\ell,-I^j_i})} + \sum\limits_{h\in Z^0(a, I^j_i)}u^i(h)\nu^i_{I^j_i}(h|a,\beta^{\ell,-I^j_i},\mu^\ell)\\

  = & \sum\limits_{q\in M(a, I^j_i)} \frac{\omega(I^q_i|\beta^{\ell,-I^j_i})}{\omega(I^j_i|\beta^\ell)}\sum\limits_{a'\in  A(I^q_i)}\tilde\beta^{i}_{I^q_i}(a')u^i(a',\varrho^i_{I^q_i}(\beta^{\ell,-I^q_i},\tilde\beta), \mu^\ell |I^q_i)+ \sum\limits_{h\in Z^0(a, I^j_i)}u^i(h)\nu^i_{I^j_i}(h|a,\beta^{\ell,-I^j_i},\mu^\ell).
 \end{array}\end{equation}} 
 
 \noindent 
 Thus,
 \begin{equation}\setlength{\abovedisplayskip}{1.2pt}
\setlength{\belowdisplayskip}{1.2pt}\label{wsrethmeq7}\begin{array}{rl} 
& \sum\limits_{a\in A(I^j_i)}\tilde\beta^i_{I^j_i}(a)u^i(a,\varrho^i_{I^j_i}(\beta^{\ell,-I^j_i},\tilde\beta),\mu^\ell|I^j_i)\\
= & \sum\limits_{a\in A(I^j_i)}\tilde\beta^i_{I^j_i}(a)( \sum\limits_{q\in M(a, I^j_i)} \frac{\omega(I^q_i|\beta^{\ell,-I^j_i})}{\omega(I^j_i|\beta^\ell)}\sum\limits_{a'\in  A(I^q_i)}\tilde\beta^{i}_{I^q_i}(a')u^i(a',\varrho^i_{I^q_i}(\beta^{\ell,-I^q_i},\tilde\beta), \mu^\ell |I^q_i)\\
& + \sum\limits_{h\in Z^0(a, I^j_i)}u^i(h)\nu^i_{I^j_i}(h|a,\beta^{\ell,-I^j_i},\mu^\ell)).
\end{array}\end{equation}
 Furthermore, since \begin{equation}
 \label{wsrethmeqC}\setlength{\abovedisplayskip}{1.2pt}
\setlength{\belowdisplayskip}{1.2pt}
\frac{\omega(I^q_i|\beta^{\ell,-I^j_i})}{ \omega(I^j_i|\beta^\ell)}=\sum\limits^{\hat h =\langle a_1,\ldots,a_k\rangle\in I^j_i}_{h=\langle\hat h, a_{k+1}=a,\ldots,a_L\rangle
\in I^q_i}\mu^{\ell i}_{I^j_i}(\hat h)\prod\limits_{g=k+1}^{L-1}\beta^{\ell, P(\langle a_1,\ldots,a_{g}\rangle)}_{\langle a_1,\ldots,a_{g}\rangle}(a_{g+1}),
\end{equation} 
as a result of substituting Eq.~(\ref{wsrethmeqC}) into Eq.~(\ref{wsrethmeq6}), it holds that
\begin{equation}\setlength{\abovedisplayskip}{1.2pt}
\setlength{\belowdisplayskip}{1.2pt}\label{wsrethmeq8}\begin{array}{rl} 
& \max\limits_{\tilde\beta^i}u^i(a,\varrho^i_{I^j_i}(\beta^{*,-I^j_i},\tilde\beta),\mu^*|I^j_i)
=  \max\limits_{\tilde\beta^i} \lim\limits_{\ell\to\infty}  u^i(a,\varrho^i_{I^j_i}(\beta^{\ell,-I^j_i},\tilde\beta),\mu^\ell|I^j_i) \\

= & \sum\limits_{q\in M(a, I^j_i)}\sum\limits^{\hat h =\langle a_1,\ldots,a_k\rangle\in I^j_i}_{h=\langle\hat h, a_{k+1}=a,\ldots,a_L\rangle
\in I^q_i}\mu^{* i}_{I^j_i}(\hat h)\prod\limits_{g=k+1}^{L-1}\beta^{*, P(\langle a_1,\ldots,a_{g}\rangle)}_{\langle a_1,\ldots,a_{g}\rangle}(a_{g+1})\\

&\max\limits_{\tilde\beta^i_{I^q_i}} \sum\limits_{a'\in  A(I^q_i)}\tilde\beta^{i}_{I^q_i}(a')\max\limits_{\tilde\beta^i}u^i(a',\varrho^i_{I^q_i}(\beta^{*,-I^q_i},\tilde\beta), \mu^* |I^q_i)
 + \sum\limits_{h\in Z^0(a, I^j_i)}u^i(h)\nu^i_{I^q_i}(h|a,\beta^{*,-I^j_i},\mu^*)\\

= & \sum\limits_{q\in M(a, I^j_i)} \sum\limits^{\hat h =\langle a_1,\ldots,a_k\rangle\in I^j_i}_{h=\langle\hat h, a_{k+1}=a,\ldots,a_L\rangle
\in I^q_i}\mu^{* i}_{I^j_i}(\hat h)\prod\limits_{g=k+1}^{L-1}\beta^{*, P(\langle a_1,\ldots,a_{g}\rangle)}_{\langle a_1,\ldots,a_{g}\rangle}(a_{g+1})\\
& \sum\limits_{a'\in  A(I^q_i)}\tilde\beta^{*i}_{I^q_i}(a')u^i(a',\varrho^i_{I^q_i}(\beta^{*,-I^q_i},\tilde\beta^*), \mu^* |I^q_i) + \sum\limits_{h\in Z^0(a, I^j_i)}u^i(h)\nu^i_{I^j_i}(h|a,\beta^{*,-I^j_i},\mu^*),
\end{array}\end{equation}
where the second equality comes from Eq.~(\ref{wsrethmeq6}) and Eq.~(\ref{wsrethmeqC}) and the third equality comes from Eq.~(\ref{wsrethmeq5}).
Let
\begin{equation}\setlength{\abovedisplayskip}{1.2pt}
\setlength{\belowdisplayskip}{1.2pt}\label{wsrethmeqD}\begin{array}{rl} 
w(\beta,\mu)= & \sum\limits_{q\in M(a, I^j_i)} \sum\limits^{\hat h =\langle a_1,\ldots,a_k\rangle\in I^j_i}_{h=\langle\hat h, a_{k+1}=a,\ldots,a_L\rangle
\in I^q_i}\mu^i_{I^j_i}(\hat h)\prod\limits_{g=k+1}^{L-1}\beta^{P(\langle a_1,\ldots,a_{g}\rangle)}_{\langle a_1,\ldots,a_{g}\rangle}(a_{g+1})\\
& \sum\limits_{a'\in  A(I^q_i)}\tilde\beta^{*i}_{I^q_i}(a')u^i(a',\varrho^i_{I^q_i}(\beta^{-I^q_i},\tilde\beta^*), \mu|I^q_i) + \sum\limits_{h\in Z^0(a, I^j_i)}u^i(h)\nu^i_{I^j_i}(h|a,\beta^{-I^j_i},\mu),
\end{array}\end{equation}
which is a continuous function of $(\beta,\mu)$.
Then, as a result of Eq.~(\ref{wsrethmeq8}), we have \begin{equation}\label{Sqrsethm2} \setlength{\abovedisplayskip}{1.2pt}
\setlength{\belowdisplayskip}{1.2pt}w(\beta^*,\mu^*)=\max\limits_{\tilde\beta^i}u^i(a,\varrho^i_{I^j_i}(\beta^{*,-I^j_i},\tilde\beta),\mu^*|I^j_i).\end{equation} Thus,
due to the continuity of $w(\beta, \mu)$ and $\lim\limits_{\ell\to\infty}(\beta^\ell,\mu^\ell)=(\beta^*,\mu^*)$, it follows that \begin{equation} \setlength{\abovedisplayskip}{1.2pt}
\setlength{\belowdisplayskip}{1.2pt}\label{Swsrethm1}\lim\limits_{\ell\to\infty}w(\beta^\ell,\mu^\ell)=w(\beta^*,\mu^*)=\max\limits_{\tilde\beta^i}u^i(a,\varrho^i_{I^j_i}(\beta^{*,-I^j_i},\tilde\beta),\mu^*|I^j_i).\end{equation} Moreover,
\begin{equation}\setlength{\abovedisplayskip}{1.2pt}
\setlength{\belowdisplayskip}{1.2pt}\label{wsrethmeq9}\begin{array}{rl} 
w(\beta^\ell,\mu^\ell)= & \sum\limits_{q\in M(a, I^j_i)} \sum\limits^{\hat h =\langle a_1,\ldots,a_k\rangle\in I^j_i}_{h=\langle\hat h, a_{k+1}=a,\ldots,a_L\rangle
\in I^q_i}\mu^{\ell i}_{I^j_i}(\hat h)\prod\limits_{g=k+1}^{L-1}\beta^{\ell, P(\langle a_1,\ldots,a_{g}\rangle)}_{\langle a_1,\ldots,a_{g}\rangle}(a_{g+1})\\
& \sum\limits_{a'\in  A(I^q_i)}\tilde\beta^{*i}_{I^q_i}(a')u^i(a',\varrho^i_{I^q_i}(\beta^{\ell,-I^q_i},\tilde\beta^*), \mu^\ell |I^q_i)+ \sum\limits_{h\in Z^0(a, I^j_i)}u^i(h)\nu^i_{I^j_i}(h|a,\beta^{\ell,-I^j_i},\mu^\ell)\\
= &  \sum\limits_{q\in M(a, I^j_i)} \frac{\omega(I^q_i|\beta^{\ell,-I^j_i})}{\omega(I^j_i|\beta^\ell)}\sum\limits_{a'\in  A(I^q_i)}\tilde\beta^{*i}_{I^q_i}(a')u^i(a',\varrho^i_{I^q_i}(\beta^{\ell,-I^q_i},\tilde\beta^*), \mu^\ell |I^q_i)\\
& + \sum\limits_{h\in Z^0(a, I^j_i)}u^i(h)\nu^i_{I^j_i}(h|a,\beta^{\ell,-I^j_i},\mu^\ell)\\
 = & u^i(a,\varrho^i_{I^j_i}(\beta^{\ell,-I^j_i},\tilde\beta^*),\mu^\ell|I^j_i).
\end{array} \end{equation}
 Eq.~(\ref{wsrethmeq9}) and Eq.~(\ref{Swsrethm1}) together bring us the conclusion that
\[\setlength{\abovedisplayskip}{1.2pt}
\setlength{\belowdisplayskip}{1.2pt}\begin{array}{rl} & \lim\limits_{\ell\to\infty}w(\beta^\ell,\mu^\ell)=\lim\limits_{\ell\to\infty}u^i(a,\varrho^i_{I^j_i}(\beta^{\ell,-I^j_i},\tilde\beta^*),\mu^\ell|I^j_i)
= u^i(a,\varrho^i_{I^j_i}(\beta^{*,-I^j_i},\tilde\beta^*),\mu^*|I^j_i)\\ 
= & w(\beta^*,\mu^*)=\max\limits_{\tilde\beta^i}u^i(a,\varrho^i_{I^j_i}(\beta^{*,-I^j_i},\tilde\beta),\mu^*|I^j_i).\end{array}\]
This result together with the properties in  Theorem~\ref{ed1wsre} implies that
\begin{equation}\setlength{\abovedisplayskip}{1.2pt}
\setlength{\belowdisplayskip}{1.2pt}\label{wsrethmeq10}
\begin{array}{rl}
 & \max\limits_{\tilde\beta^i_{I^j_i}}\sum\limits_{a\in  A(I^j_i)}\tilde\beta^{i}_{I^j_i}(a)\max\limits_{\tilde\beta^i}u^i(a,\varrho^i_{I^j_i}(\beta^{*,-I^j_i},\tilde\beta), \mu^* |I^j_i)\\
 = & \max\limits_{\tilde\beta^i_{I^j_i}}
\sum\limits_{a\in  A(I^j_i)}\tilde\beta^i_{I^j_i}(a)u^i(a,\varrho^i_{I^j_i}(\beta^{*,-I^j_i},\tilde\beta^*), \mu^* |I^j_i)\\
= & \sum\limits_{a\in  A(I^j_i)}\tilde\beta^{*i}_{I^j_i}(a)u^i(a,\varrho^i_{I^j_i}(\beta^{*,-I^j_i},\tilde\beta^*), \mu^* |I^j_i),\end{array}\end{equation}
which is exactly the desired conclusion in Eq.~(\ref{wsrethmeq3}). Hence it follows from Eq.~(\ref{wsrethmeqA}) that 
\begin{equation}\setlength{\abovedisplayskip}{1.2pt}
\setlength{\belowdisplayskip}{1.2pt}\label{wsrethmeqB}\max\limits_{\beta^i} u^i(\beta^i,\beta^{*,-i},\mu^*|I^j_i)= \sum\limits_{a\in  A(I^j_i)}\beta^{*i}_{I^j_i}(a)u^i(a,\varrho^i_{I^j_i}(\beta^{*,-I^j_i},\tilde\beta^*), \mu^* |I^j_i).\end{equation}
We prove below that, for any $i\in N$ and $j\in M_i$ with ${\cal D}(I^j_i|\beta^*)>0$, 
\begin{equation}\setlength{\abovedisplayskip}{1.2pt}
\setlength{\belowdisplayskip}{1.2pt}\label{wsrethmeq11}
\sum\limits_{a\in  A(I^j_i)}\beta^{*i}_{I^j_i}(a)u^i(a,\varrho^i_{I^j_i}(\beta^{*,-I^j_i},\tilde\beta^*), \mu^* |I^j_i)=\sum\limits_{a\in  A(I^j_i)}\beta^{*i}_{I^j_i}(a)u^i(a,\beta^{*,-I^j_i}, \mu^* |I^j_i).
\end{equation}
{\bf Case (a)}. Consider $i\in N$ and $j\in M_i$ with ${\cal D}(I^j_i|\beta^*)>0$ and $M(I^j_i)=\emptyset$. Then, $\varrho^i_{I^j_i}(\beta^{*-I^j_i},\tilde\beta^*)=\beta^{*-I^j_i}$. Thus, $u^i(a,\varrho^i_{I^j_i}(\beta^{*,-I^j_i},\tilde\beta^*), \mu^* |I^j_i)=u^i(a, \beta^{*,-I^j_i}, \mu^* |I^j_i)$. Therefore, \[\setlength{\abovedisplayskip}{1.2pt}
\setlength{\belowdisplayskip}{1.2pt}
\sum\limits_{a\in  A(I^j_i)}\beta^{*i}_{I^j_i}(a)u^i(a,\varrho^i_{I^j_i}(\beta^{*,-I^j_i},\tilde\beta^*), \mu^* |I^j_i)=\sum\limits_{a\in  A(I^j_i)}\beta^{*i}_{I^j_i}(a)u^i(a,\beta^{*,-I^j_i}, \mu^* |I^j_i).\]
{\bf Case (b)}. Consider $i\in N$ and $j\in M_i$ with ${\cal D}(I^j_i|\beta^*)>0$ and $M(I^j_i)\ne \emptyset$ such that, for any $q\in M(I^j_i)$ with ${\cal D}(\beta^*|I^q_i)>0$, 
\begin{equation}\setlength{\abovedisplayskip}{1.2pt}
\setlength{\belowdisplayskip}{1.2pt}\label{wsrethmeq12}\sum\limits_{a'\in  A(I^q_i)}\beta^{*i}_{I^q_i}(a')u^i(a',\varrho^i_{I^q_i}(\beta^{*,-I^q_i},\tilde\beta^*), \mu^* |I^q_i)=\sum\limits_{a'\in  A(I^q_i)}\beta^{*i}_{I^q_i}(a')u^i(a',\beta^{*,-I^q_i}, \mu^* |I^q_i).\end{equation}
For $a\in A(I^j_i)$ with $\beta^{*i}_{I^j_i}(a)>0$, we have ${\cal D}(\beta^*|I^q_i)=\beta^{*i}_{I^j_i}(a){\cal D}(I^j_i|\beta^*)>0$ for any $q\in M(a, I^j_i)$ and {\footnotesize
\begin{equation}\setlength{\abovedisplayskip}{1.2pt}
\setlength{\belowdisplayskip}{1.2pt}\label{wsrethmeq13}
\begin{array}{rl}
 & u^i(a,\varrho^i_{I^j_i}(\beta^{\ell,-I^j_i},\tilde\beta^*), \mu^\ell |I^j_i)\\
 
 = &  \sum\limits_{q\in M(a, I^j_i)}\sum\limits_{h\in Z}^{h\cap A(I^q_i)\ne\emptyset}u^i(h)\nu^i_{I^j_i}(h|a, \varrho^i_{I^j_i}(\beta^{\ell,-I^j_i},\tilde\beta^*),\mu^\ell) + \sum\limits_{h\in Z^0(a, I^j_i)}u^i(h)\nu^i_{I^j_i}(h|a,\varrho^i_{I^j_i}(\beta^{\ell,-I^j_i},\tilde\beta^*),\mu^\ell)\\
 
  = &  \sum\limits_{q\in M(a, I^j_i)}\frac{1}{\omega(I^j_i|\beta^{\ell})}\sum\limits_{h\in Z}^{h\cap A(I^q_i)\ne\emptyset}u^i(h)\omega(h|a, \varrho^i_{I^j_i}(\beta^{\ell,-I^j_i},\tilde\beta^*)) + \sum\limits_{h\in Z^0(a, I^j_i)}u^i(h)\nu^i_{I^j_i}(h|a,\beta^{\ell,-I^j_i},\mu^\ell)\\
  
  = & \sum\limits_{q\in M(a, I^j_i)} \frac{\omega(I^q_i|\beta^{\ell,-I^j_i})}{\omega(I^j_i|\beta^\ell)}\sum\limits_{h\in Z}^{h\cap A(I^q_i)\ne\emptyset}u^i(h)\frac{\omega(h|a, \varrho^i_{I^j_i}(\beta^{\ell,-I^j_i},\tilde\beta^*))}{\omega(I^q_i|\beta^{\ell,-I^j_i})} + \sum\limits_{h\in Z^0(a, I^j_i)}u^i(h)\nu^i_{I^j_i}(h|a,\beta^{\ell,-I^j_i},\mu^\ell)\\

  = & \sum\limits_{q\in M(a, I^j_i)} \frac{\omega(I^q_i|\beta^{\ell,-I^j_i})}{\omega(I^j_i|\beta^\ell)}\sum\limits_{a'\in  A(I^q_i)}\tilde\beta^{*i}_{I^q_i}(a')u^i(a',\varrho^i_{I^q_i}(\beta^{\ell,-I^q_i},\tilde\beta^*), \mu^\ell |I^q_i)+ \sum\limits_{h\in Z^0(a, I^j_i)}u^i(h)\nu^i_{I^j_i}(h|a,\beta^{\ell,-I^j_i},\mu^\ell).
\end{array}
\end{equation}}

\noindent Then, \begin{equation}\setlength{\abovedisplayskip}{1.2pt}
\setlength{\belowdisplayskip}{1.2pt} \begin{array}{rl}
& u^i(a,\varrho^i_{I^j_i}(\beta^{*,-I^j_i},\tilde\beta^*), \mu^* |I^j_i)=\lim\limits_{\ell\to\infty} u^i(a,\varrho^i_{I^j_i}(\beta^{\ell,-I^j_i},\tilde\beta^*), \mu^\ell |I^j_i)\\
= & \sum\limits_{q\in M(a, I^j_i)} \sum\limits^{\hat h =\langle a_1,\ldots,a_k\rangle\in I^j_i}_{h=\langle\hat h, a_{k+1}=a,\ldots,a_L\rangle
\in I^q_i}\mu^{* i}_{I^j_i}(\hat h)\prod\limits_{g=k+1}^{L-1}\beta^{*, P(\langle a_1,\ldots,a_{g}\rangle)}_{\langle a_1,\ldots,a_{g}\rangle}(a_{g+1})\\
& \sum\limits_{a'\in  A(I^q_i)}\tilde\beta^{*i}_{I^q_i}(a')u^i(a',\varrho^i_{I^q_i}(\beta^{*,-I^q_i},\tilde\beta^*), \mu^* |I^q_i) + \sum\limits_{h\in Z^0(a, I^j_i)}u^i(h)\nu^i_{I^j_i}(h|a,\beta^{*,-I^j_i},\mu^*)\\
= & \sum\limits_{q\in M(a, I^j_i)} \sum\limits^{\hat h =\langle a_1,\ldots,a_k\rangle\in I^j_i}_{h=\langle\hat h, a_{k+1}=a,\ldots,a_L\rangle
\in I^q_i}\mu^{* i}_{I^j_i}(\hat h)\prod\limits_{g=k+1}^{L-1}\beta^{*, P(\langle a_1,\ldots,a_{g}\rangle)}_{\langle a_1,\ldots,a_{g}\rangle}(a_{g+1})\\
& \sum\limits_{a'\in  A(I^q_i)}\beta^{*i}_{I^q_i}(a')u^i(a', \beta^{*,-I^q_i}, \mu^* |I^q_i) + \sum\limits_{h\in Z^0(a, I^j_i)}u^i(h)\nu^i_{I^j_i}(h|a,\beta^{*,-I^j_i},\mu^*),
\end{array}
\end{equation}
where the second equality comes from Eq.~(\ref{wsrethmeq13}) and Eq.~(\ref{wsrethmeqC}) and 
the last equality comes from Eq.~(\ref{wsrethmeq12}). Let 
\begin{equation}\setlength{\abovedisplayskip}{1.2pt}
\setlength{\belowdisplayskip}{1.2pt}\label{wsrethmeqF}\begin{array}{rl} 
f(\beta,\mu)= & \sum\limits_{q\in M(a, I^j_i)} \sum\limits^{\hat h =\langle a_1,\ldots,a_k\rangle\in I^j_i}_{h=\langle\hat h, a_{k+1}=a,\ldots,a_L\rangle
\in I^q_i}\mu^i_{I^j_i}(\hat h)\prod\limits_{g=k+1}^{L-1}\beta^{P(\langle a_1,\ldots,a_{g}\rangle)}_{\langle a_1,\ldots,a_{g}\rangle}(a_{g+1})\\
& \sum\limits_{a'\in  A(I^q_i)}\beta^i_{I^q_i}(a')u^i(a',\beta^{-I^q_i}, \mu|I^q_i) + \sum\limits_{h\in Z^0(a, I^j_i)}u^i(h)\nu^i_{I^j_i}(h|a,\beta^{-I^j_i},\mu).
\end{array}\end{equation}
Then, $f(\beta,\mu)$ is a continuous function of $(\beta, \mu)$ and \begin{equation}\setlength{\abovedisplayskip}{1.2pt}
\setlength{\belowdisplayskip}{1.2pt}\label{Swsrethm3}\lim\limits_{\ell\to\infty}f(\beta^\ell,\mu^\ell)=f(\beta^*,\mu^*)=u^i(a,\varrho^i_{I^j_i}(\beta^{*,-I^j_i},\tilde\beta^*), \mu^* |I^j_i).\end{equation} Moreover,
\begin{equation}\setlength{\abovedisplayskip}{1.2pt}
\setlength{\belowdisplayskip}{1.2pt}\label{wsrethmeq14}\begin{array}{rl} 
f(\beta^\ell,\mu^\ell)= & \sum\limits_{q\in M(a, I^j_i)} \sum\limits^{\hat h =\langle a_1,\ldots,a_k\rangle\in I^j_i}_{h=\langle\hat h, a_{k+1}=a,\ldots,a_L\rangle
\in I^q_i}\mu^{\ell i}_{I^j_i}(\hat h)\prod\limits_{g=k+1}^{L-1}\beta^{\ell, P(\langle a_1,\ldots,a_{g}\rangle)}_{\langle a_1,\ldots,a_{g}\rangle}(a_{g+1})\\
& \sum\limits_{a'\in  A(I^q_i)}\beta^{\ell i}_{I^q_i}(a')u^i(a',\beta^{\ell,-I^q_i}, \mu^\ell|I^q_i) + \sum\limits_{h\in Z^0(a, I^j_i)}u^i(h)\nu^i_{I^j_i}(h|a,\beta^{\ell,-I^j_i},\mu^\ell)\\

= & \sum\limits_{q\in M(a, I^j_i)}  \frac{\omega(I^q_i|\beta^{\ell,-I^j_i})}{\omega(I^j_i|\beta^\ell)}\sum\limits_{a'\in  A(I^q_i)}\beta^{\ell i}_{I^q_i}(a')u^i(a',\beta^{\ell,-I^q_i}, \mu^\ell|I^q_i)\\
& + \sum\limits_{h\in Z^0(a, I^j_i)}u^i(h)\nu^i_{I^j_i}(h|a,\beta^{\ell,-I^j_i},\mu^\ell)\\

 = & u^i(a,\beta^{\ell,-I^j_i},\mu^\ell|I^j_i).
\end{array}\end{equation} 
Thus it follows from Eq.~(\ref{wsrethmeq14}) and  Eq.~(\ref{Swsrethm3}) that  {\small \begin{equation}\setlength{\abovedisplayskip}{1.2pt}
\setlength{\belowdisplayskip}{1.2pt}\label{wsrethmeq15}
u^i(a,\beta^{*,-I^j_i},\mu^*|I^j_i)= \lim\limits_{\ell\to\infty}u^i(a,\beta^{\ell,-I^j_i},\mu^\ell|I^j_i)=\lim\limits_{\ell\to\infty}f(\beta^\ell,\mu^\ell)= f(\beta^*,\mu^*)=u^i(a,\varrho^i_{I^j_i}(\beta^{*,-I^j_i},\tilde\beta^*), \mu^* |I^j_i),
\end{equation}}

\noindent 
Multiplying $\beta^{*i}_{I^j_i}(a)$ to Eq.~(\ref{wsrethmeq15}) yields
\begin{equation}\setlength{\abovedisplayskip}{1.2pt}
\setlength{\belowdisplayskip}{1.2pt}\label{wsrethmeq16}
\sum\limits_{a\in  A(I^j_i)}\beta^{*i}_{I^j_i}(a)u^i(a,\varrho^i_{I^j_i}(\beta^{*,-I^j_i},\tilde\beta^*), \mu^* |I^j_i)=\sum\limits_{a\in  A(I^j_i)}\beta^{*i}_{I^j_i}(a)u^i(a,\beta^{*,-I^j_i}, \mu^* |I^j_i),
\end{equation}
which is precisely the same as Eq.~(\ref{wsrethmeq11}).
This result together with Eq.~(\ref{wsrethmeqB}) brings us the conclusion that, for any $i\in N$ and $j\in M_i$ with ${\cal D}(I^j_i|\beta^*)>0$,
\begin{equation}\setlength{\abovedisplayskip}{1.2pt}
\setlength{\belowdisplayskip}{1.2pt}\label{wsrethmeq17}
\max\limits_{\beta^i} u^i(\beta^i,\beta^{*-i},\mu^*|I^j_i)=\sum\limits_{a\in  A(I^j_i)}\beta^{*i}_{I^j_i}(a)u^i(a,\beta^{*,-I^j_i}, \mu^* |I^j_i)=u^i(\beta^*,\mu^*|I^j_i).
\end{equation}
Hence, $(\beta^*,\mu^*)$ is a WSRE according to Definition~\ref{wsred1}.

$(\Rightarrow)$ Let $(\beta^*,\mu^*)\in\Psi$ be a WSRE according to Definition~\ref{wsred1}. 
For any $i\in N$ and $j\in M_i$ with ${\cal D}(I^j_i|\beta^*)>0$, let $\tilde\beta^{*i}_{I^j_i}=\beta^{*i}_{I^j_i}$. For any $i\in N$ and $j\in M_i$ with ${\cal D}(I^j_i|\beta^*)=0$, we solve through the backward induction
 a sequence of the following linear optimization problems to find $\tilde\beta^{*i}_{I^j_i}$, \begin{equation}\setlength{\abovedisplayskip}{1.2pt}
\setlength{\belowdisplayskip}{1.2pt}
\label{wsrethmeq18}\begin{array}{rl}
\max\limits_{\tilde\beta^i_{I^j_i}} &\sum\limits_{a\in A(I^j_i)}\tilde\beta^i_{I^j_i}(a)u^i(a,\varrho(\beta^{*-I^j_i},\tilde\beta^*),\mu^*| I^j_i)\\
\text{s.t.} & \sum\limits_{a\in A(I^j_i)}\tilde\beta^i_{I^j_i}(a)=1,\;0\le\tilde\beta^i_{I^j_i}(a),\;a\in A(I^j_i).
\end{array}\end{equation}
We next prove that $(\beta^*,\mu^*,\tilde\beta^*)$  meets the properties in Theorem~\ref{ed1wsre}.

\noindent {\bf Case (I)}. Consider $i\in N$ and $j\in M_i$ with ${\cal D}(I^j_i|\beta^*)>0$. As a result of the choice of $\tilde\beta^*$, one can derive in the same way as the proof of Eq.~(\ref{wsrethmeq11}) that
\begin{equation}\setlength{\abovedisplayskip}{1.2pt}
\setlength{\belowdisplayskip}{1.2pt}\label{wsrethmeq19}
\begin{array}{rl}
 & u^i(\varrho^i_{I^j_i}(\beta^{*},\tilde\beta^*), \mu^* |I^j_i)=\sum\limits_{a\in  A(I^j_i)}\beta^{*i}_{I^j_i}(a)u^i(a,\varrho^i_{I^j_i}(\beta^{*,-I^j_i},\tilde\beta^*), \mu^* |I^j_i)\\
 = & \sum\limits_{a\in  A(I^j_i)}\beta^{*i}_{I^j_i}(a)u^i(a,\beta^{*,-I^j_i}, \mu^* |I^j_i)=u^i(\beta^{*}, \mu^* |I^j_i).
 \end{array}
\end{equation}
Suppose that there exist $p\in N$, $g\in M_p$, and $a',a''\in A(I^g_p)$ with ${\cal D}(I^g_p|\beta^*)>0$ such that $\beta^{*p}_{I^{g}_{p}}(a')>0$ and
$u^{p}(a'',\beta^{*-I^{g}_{p}},\mu^*|I^{g}_{p})>u^{p}(a',\beta^{*-I^{g}_{p}},\mu^*|I^{g}_{p})$.
Let $\hat\beta=(\hat\beta^{*i}_{I^j_i}(a):i\in N, j\in M_i, a\in A(I^j_i))$ with $\hat\beta^{i}_{I^j_i}(a)=\beta^{*i}_{I^j_i}(a)$ for all $i\in N$, $j\in M_i$ and $a\in A(I^j_i)$ with the exception of $\hat\beta^{p}_{I^{g}_{p}}(a'')=\beta^{*p}_{I^{g}_{p}}(a')+\beta^{*p}_{I^{g}_{p}}(a'')$ and $\hat\beta^{p}_{I^{g}_{p}}(a')=0$. Then, $u^p(
\hat\beta,\mu^*|I^{g}_{p})>u^p(\beta^{*},
\mu^*|I^{g}_{p})$, which contradicts Definition~\ref{wsred1}. Thus, as a result of Eq.~(\ref{wsrethmeq19}) that $\beta^{*i}_{I^{j}_{i}}(a')=\tilde\beta^{*i}_{I^{j}_{i}}(a')=0$ for any $i\in N$, $j\in M_i$, and $a',a''\in A(I^j_i)$ with ${\cal D}(I^j_i|\beta^*)>0$ and $u^{i}(a'',\varrho^i_{I^j_i}(\beta^{*-I^{j}_{i}},\tilde\beta^*),\mu^*|I^{j}_{i})>u^{i}(a',\varrho^i_{I^j_i}(\beta^{*-I^{j}_{i}},\tilde\beta^*),\mu^*|I^{j}_{i})$. 

\noindent {\bf Case (II)}. Consider $i\in N$ and $j\in M_i$ with ${\cal D}(I^j_i|\beta^*)=0$. It follows from the linear optimization problem~(\ref{wsrethmeq18}) that $\tilde\beta^{*i}_{I^{j}_{i}}(a')=0$ for any $i\in N$, $j\in M_i$, and $a',a''\in A(I^j_i)$ with  $u^{i}(a'',\varrho^i_{I^j_i}(\beta^{*-I^{j}_{i}},\tilde\beta^*),\mu^*|I^{j}_{i})>u^{i}(a',\varrho^i_{I^j_i}(\beta^{*-I^{j}_{i}},\tilde\beta^*),\mu^*|I^{j}_{i})$. 

Cases (I) and (II) together ensure that  $(\beta^*, \mu^*,\tilde\beta^*)$ meets the properties in Theorem~\ref{ed1wsre}. The proof is completed.
\end{proof}

In Thereom~\ref{ed1wsre}, the requirements (i) and (ii) are named as local sequential rationality.

{\setlength{\parskip}{1em}
\noindent\textit{\textbf{Local Rationality at an Information Set $I^j_i$}}: An assessment $(\beta^*,\mu^*)$ together with $\tilde\beta^*$ possesses local rationality at an information set $I^j_i$ if \({\cal D}(I^j_i|\beta^*)
u^i(\beta^{*i}_{I^j_i},\varrho^i_{I^j_i}(\beta^{*-I^j_i},\tilde\beta^*),\mu^*|I^j_i)\ge{\cal D}(I^j_i|\beta^*)$ $u^i(\beta^i_{I^j_i},\varrho^i_{I^j_i}(\beta^{*-I^j_i},\tilde\beta^*), \mu^*| I^j_i)\)
for every $\beta^i_{I^j_i}$ of player $i$. A behavioral strategy profile
$\tilde\beta^*$ together with $(\beta^*,\mu^*)$ possesses local rationality at an information set $I^j_i$ if
\(u^i(\tilde\beta^{*i}_{I^j_i},\varrho^i_{I^j_i}(\beta^{*-I^j_i},\tilde\beta^*),\mu^*|I^j_i)\ge u^i(\tilde\beta^i_{I^j_i},\varrho^i_{I^j_i}(\beta^{*-I^j_i},\tilde\beta^*),\mu^*|I^j_i)\)
for every $\tilde\beta^i_{I^j_i}$ of player $i$. }

{\setlength{\parskip}{1em}
\noindent\textit{\textbf{Local Sequential Rationality}}: An assessment $(\beta^*,\mu^*)$ together with $\tilde\beta^*$ possesses local sequential rationality if it meets the local rationality at every information set. A behavioral strategy profile $\tilde\beta^*$ together with $(\beta^*,\mu^*)$  possesses local sequential rationality if it meets the local rationality at every information set. }

A direct application of Theorem~\ref{ed1wsre} yields a polynomial system to serve as a necessary and sufficient condition for determining whether a consistent assessment  is a WSRE or not.
\begin{theorem}\label{nscwsrethm1}{\em $(\beta^*,\mu^*)\in\Psi$ is a WSRE if and only if there exists a vector $(\tilde\beta^*,\lambda^*,\tilde\lambda^*,\zeta^*,\tilde\zeta^*)$ together with $(\beta^*,\mu^*)$ satisfying the polynomial system,
\begin{equation}\label{nscwsre1}\setlength{\abovedisplayskip}{1.2pt}
\setlength{\belowdisplayskip}{1.2pt}
\begin{array}{l}
{\cal D}(I^j_i|\beta)u^i(a,\varrho^i_{I^j_i}(\beta^{-I^j_i},\tilde\beta),\mu|I^j_i)+\lambda^i_{I^j_i}(a)-\zeta^i_{I^j_i}=0,\;i\in N, j\in M_i, a\in A(I^j_i),\\
u^i(a,\varrho^i_{I^j_i}(\beta^{-I^j_i},\tilde\beta),\mu|I^j_i)+\tilde\lambda^i_{I^j_i}(a)-\tilde\zeta^i_{I^j_i}=0,\;i\in N, j\in M_i, a\in A(I^j_i),\\
\sum\limits_{a\in A(I^j_i)}\beta^i_{I^j_i}(a)=1,\;\sum\limits_{a\in A(I^j_i)}\tilde\beta^i_{I^j_i}(a)=1,\;i\in N, j\in M_i,\\
\beta^i_{I^j_i}(a)\lambda^i_{I^j_i}(a)=0,\;0\le\beta^i_{I^j_i}(a),\;0\le\lambda^i_{I^j_i}(a),\;i\in N, j\in M_i, a\in A(I^j_i),\\
\tilde\beta^i_{I^j_i}(a)\tilde\lambda^i_{I^j_i}(a)=0,\;0\le\tilde\beta^i_{I^j_i}(a),\;0\le\tilde\lambda^i_{I^j_i}(a),\;i\in N, j\in M_i, a\in A(I^j_i).
\end{array}
\end{equation}
}
\end{theorem}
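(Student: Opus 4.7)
The proof plan is to recognize that the polynomial system in Eq.~(\ref{nscwsre1}) encodes the KKT optimality conditions for two families of linear programs at each information set, and then to show these KKT conditions are equivalent to the local sequential rationality properties (i) and (ii) in Theorem~\ref{ed1wsre}. Once this correspondence is established, the claim follows directly from Theorem~\ref{ed1wsre}.

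First, I would observe that for each $i\in N$ and $j\in M_i$, the block of Eq.~(\ref{nscwsre1}) involving $(\beta^*,\lambda^*,\zeta^*)$ is precisely the set of KKT conditions for the linear program
\[
\max_{\beta^i_{I^j_i}\in \triangle^i_{I^j_i}} \sum_{a\in A(I^j_i)} \beta^i_{I^j_i}(a)\,{\cal D}(I^j_i|\beta^*)\,u^i(a,\varrho^i_{I^j_i}(\beta^{*-I^j_i},\tilde\beta^*),\mu^*|I^j_i),
\]
where $\zeta^{*i}_{I^j_i}$ is the multiplier on the simplex equality and $\lambda^{*i}_{I^j_i}(a)$ are the multipliers on the nonnegativity constraints; the block involving $(\tilde\beta^*,\tilde\lambda^*,\tilde\zeta^*)$ gives the analogous KKT system with ${\cal D}(I^j_i|\beta^*)$ removed. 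Since both programs are linear (hence convex) over compact feasible sets, the KKT conditions are necessary and sufficient for global optimality.

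For the $(\Leftarrow)$ direction, suppose $(\tilde\beta^*,\lambda^*,\tilde\lambda^*,\zeta^*,\tilde\zeta^*)$ solves Eq.~(\ref{nscwsre1}) with $(\beta^*,\mu^*)$. Fix $i,j,a',a''$ with ${\cal D}(I^j_i|\beta^*)\bigl(u^i(a'',\varrho^i_{I^j_i}(\beta^{*-I^j_i},\tilde\beta^*),\mu^*|I^j_i)-u^i(a',\varrho^i_{I^j_i}(\beta^{*-I^j_i},\tilde\beta^*),\mu^*|I^j_i)\bigr)>0$. Solving the first stationarity equation gives $\lambda^{*i}_{I^j_i}(a)=\zeta^{*i}_{I^j_i}-{\cal D}(I^j_i|\beta^*)u^i(a,\varrho^i_{I^j_i}(\beta^{*-I^j_i},\tilde\beta^*),\mu^*|I^j_i)$. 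The constraint $\lambda^{*i}_{I^j_i}(a'')\ge 0$ yields $\zeta^{*i}_{I^j_i}\ge {\cal D}(I^j_i|\beta^*)u^i(a'',\cdots)>{\cal D}(I^j_i|\beta^*)u^i(a',\cdots)$, hence $\lambda^{*i}_{I^j_i}(a')>0$, and complementary slackness forces $\beta^{*i}_{I^j_i}(a')=0$, which is property (i). Running the identical argument on the $(\tilde\beta^*,\tilde\lambda^*,\tilde\zeta^*)$ block yields property (ii). Theorem~\ref{ed1wsre} then gives that $(\beta^*,\mu^*)$ is a WSRE.

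For the $(\Rightarrow)$ direction, suppose $(\beta^*,\mu^*)$ is a WSRE. By Theorem~\ref{ed1wsre}, there exists $\tilde\beta^*$ such that $(\beta^*,\mu^*,\tilde\beta^*)$ satisfies properties (i) and (ii). I would define
\[
\zeta^{*i}_{I^j_i}=\max_{a\in A(I^j_i)}{\cal D}(I^j_i|\beta^*)u^i(a,\varrho^i_{I^j_i}(\beta^{*-I^j_i},\tilde\beta^*),\mu^*|I^j_i),
\]
\[
\tilde\zeta^{*i}_{I^j_i}=\max_{a\in A(I^j_i)}u^i(a,\varrho^i_{I^j_i}(\beta^{*-I^j_i},\tilde\beta^*),\mu^*|I^j_i),
\]
and set $\lambda^{*i}_{I^j_i}(a)=\zeta^{*i}_{I^j_i}-{\cal D}(I^j_i|\beta^*)u^i(a,\cdots)\ge 0$ and $\tilde\lambda^{*i}_{I^j_i}(a)=\tilde\zeta^{*i}_{I^j_i}-u^i(a,\cdots)\ge 0$. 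The stationarity and sign conditions hold by construction. For complementary slackness $\beta^{*i}_{I^j_i}(a')\lambda^{*i}_{I^j_i}(a')=0$, if $\lambda^{*i}_{I^j_i}(a')>0$ then some $a''$ achieves a strictly larger value of ${\cal D}(I^j_i|\beta^*)u^i(\cdot,\cdots)$, so property (i) forces $\beta^{*i}_{I^j_i}(a')=0$; the tilde complementary slackness follows analogously from property (ii).

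The only delicate bookkeeping will be the degenerate case ${\cal D}(I^j_i|\beta^*)=0$: there the weighted stationarity equation collapses to $\lambda^{*i}_{I^j_i}(a)=\zeta^{*i}_{I^j_i}$, so one simply takes $\zeta^{*i}_{I^j_i}=0$ and $\lambda^{*i}_{I^j_i}(a)=0$, which trivially satisfies complementary slackness for any admissible $\beta^{*i}_{I^j_i}$, while the $\tilde\beta^*$-block still constrains $\tilde\beta^*$ through property (ii). Beyond verifying this degenerate case, the remainder of the argument is a routine translation between linear-program KKT conditions and the pointwise inequalities of Theorem~\ref{ed1wsre}.
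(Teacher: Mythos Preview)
Your proposal is correct and follows essentially the same approach as the paper's own proof: both directions reduce to Theorem~\ref{ed1wsre}, with the multipliers defined as $\zeta^{*i}_{I^j_i}=\max_a{\cal D}(I^j_i|\beta^*)u^i(a,\cdots)$, $\tilde\zeta^{*i}_{I^j_i}=\max_a u^i(a,\cdots)$ and $\lambda^*,\tilde\lambda^*$ as the corresponding slacks, and complementary slackness is verified via properties (i)--(ii). Your explicit KKT/linear-program framing and your remark on the degenerate case ${\cal D}(I^j_i|\beta^*)=0$ are minor presentational additions, but the substance of the argument is identical to the paper's.
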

\begin{proof} ($\Rightarrow$). Let $(\beta^*,\mu^*)\in\Psi$ be a WSRE. We define $\zeta^{*i}_{I^j_i}=\max\limits_{a\in A(I^j_i)}{\cal D}(I^j_i|\beta^*)u^i(a$, $\varrho^i_{I^j_i}(\beta^{*-I^j_i},\tilde\beta^*), \mu^*|I^j_i)$, $\tilde\zeta^{*i}_{I^j_i}=\max\limits_{a\in A(I^j_i)}u^i(a,\varrho^i_{I^j_i}(\beta^{*-I^j_i},\tilde\beta^*),\mu^*| I^j_i)$, 
$\lambda^{*i}_{I^j_i}(a)=\zeta^{*i}_{I^j_i}-{\cal D}(I^j_i|\beta^*)u^i(a,\varrho^i_{I^j_i}(\beta^{*-I^j_i},\tilde\beta^*)$, $\mu^*|I^j_i)$, and $\tilde\lambda^{*i}_{I^j_i}(a)= \tilde\zeta^{*i}_{I^j_i}-u^i(a,\varrho^i_{I^j_i}(\beta^{*-I^j_i},\tilde\beta^*),\mu^*| I^j_i)$. Suppose that there exists some $a'\in A(I^j_i)$ such that $\lambda^{*i}_{I^j_i}(a')>0$. Then there exists $a''\in A(I^j_i)$ such that  ${\cal D}(I^j_i|\beta^*)u^i(a'',\varrho^i_{I^j_i}(\beta^{*-I^j_i},\tilde\beta^*), \mu^*|I^j_i)>{\cal D}(I^j_i|\beta^*)u^i(a',\varrho^i_{I^j_i}(\beta^{*-I^j_i},\tilde\beta^*), \mu^*|I^j_i)$. Thus it follows from Theorem~\ref{ed1wsre} that $\beta^{*i}_{I^j_i}(a')=0$.  One can show in a similar way that $\tilde\beta^{*i}_{I^j_i}(a')=0$ if  $\tilde\lambda^{*i}_{I^j_i}(a')>0$. Therefore, $(\beta^*,\tilde\beta^*,\mu^*,\zeta^*,\tilde\zeta^*,\lambda^*,\tilde\lambda^*)$ satisfies the system~(\ref{nscwsre1}).

($\Leftarrow$). Let $(\beta^*,\tilde\beta^*,\mu^*,\lambda^*,\tilde\lambda^*,\zeta^*,\tilde\zeta^*)$ be a solution to the system~(\ref{nscwsre1}). Multiplying $\beta^{*i}_{I^j_i}(a)$ to the first group of equations and $\tilde\beta^{*i}_{I^j_i}(a)$ to the second group of equations in the system~(\ref{nscwsre1}) and taking the sum over $A(I^j_i)$, we get 
$\zeta^{*i}_{I^j_i}=\sum\limits_{a\in A(I^j_i)}\beta^{*i}_{I^j_i}(a){\cal D}(I^j_i|\beta^*)u^i(a,\varrho^i_{I^j_i}(\beta^{*-I^j_i},\tilde\beta^*), \mu^*|I^j_i)$ and $\tilde\zeta^{*i}_{I^j_i}=\sum\limits_{a\in A(I^j_i)}\tilde\beta^{*i}_{I^j_i}(a)u^i(a,\varrho^i_{I^j_i}(\beta^{*-I^j_i},\tilde\beta^*), \mu^*|I^j_i)$. 
As a result of $\lambda^*\ge 0$, we have $\zeta^{*i}_{I^j_i}=\max\limits_{a\in A(I^j_i)}{\cal D}(I^j_i|\beta^*)u^i(a$, $\varrho^i_{I^j_i}(\beta^{*-I^j_i},\tilde\beta^*), \mu^*|I^j_i)$ and $\tilde\zeta^{*i}_{I^j_i}=\max\limits_{a\in A(I^j_i)}u^i(a,\varrho^i_{I^j_i}(\beta^{*-I^j_i},\tilde\beta^*),\mu^*|I^j_i)$. Thus, (i). $\beta^{*i}_{I^j_i}(a')=0$ whenever ${\cal D}(I^j_i|\beta^*)u^i(a'',\varrho^i_{I^j_i}(\beta^{*-I^j_i},\tilde\beta^*), \mu^*|I^j_i)>{\cal D}(I^j_i|\beta^*)u^i(a',\varrho^i_{I^j_i}(\beta^{*-I^j_i},\tilde\beta^*), \mu^*|I^j_i)$
since $\lambda^{*i}_{I^j_i}(a')>0$, and  (ii). $\tilde\beta^{*i}_{I^j_i}(a')=0$ whenever $u^i(a'',\varrho^i_{I^j_i}(\beta^{*-I^j_i},\tilde\beta^*),\mu^*|I^j_i)>u^i(a',\varrho^i_{I^j_i}(\beta^{*-I^j_i},\tilde\beta^*),\mu^*|I^j_i)$ since $\tilde\lambda^{*i}_{I^j_i}(a')>0$. Therefore it follows from Theorem~\ref{ed1wsre} that $(\beta^*,\mu^*)$ is a WSRE. The proof is completed.
\end{proof}

To further enhance the applications of WSRE, one can exploit Theorem~\ref{nscwsrethm1} to develop path-following methods to compute WSREs. Such a task is carried out in Section~\ref{spwsre}.
\section{A Characterization of WSRE through $\varepsilon$-Perfect $\gamma$-WSRE}

To employ Theorem~\ref{ed1wsre} to find a WSRE, one needs to construct a convergent sequence of totally mixed assessments. Nevertheless, Theorem~\ref{ed1wsre} provides insufficient information on constructing such a sequence. 
To bridge this gap, we introduce the concept of $\varepsilon$-perfect $\gamma$-WSRE.
\begin{definition}[\bf An Equivalent Definition of WSRE through $\varepsilon$-Perfect $\gamma$-WSRE with Local Sequential Rationality]\label{edAwsre}
{\em For any given $\gamma>0$ and $\varepsilon>0$, a totally mixed assessment $(\beta(\gamma,\varepsilon),\mu(\gamma,\varepsilon))$ together with $\tilde\beta(\gamma,\varepsilon)$ constitutes an $\varepsilon$-perfect $\gamma$-WSRE if $(\beta(\gamma,\varepsilon),\mu(\gamma,\varepsilon),\tilde\beta(\gamma,\varepsilon))$  satisfies the properties:\newline
 (i). $\beta^i_{I^j_i}(\gamma,\varepsilon; a')\le \varepsilon$ for any $i\in N$, $j\in M_i$ and $a',a''\in A(I^j_i)$ with 
 \[\setlength{\abovedisplayskip}{1.2pt}
\setlength{\belowdisplayskip}{1.2pt}{\cal D}(I^j_i|\beta(\gamma,\varepsilon))(u^i(a'',\varrho^i_{I^j_i}(\beta^{-I^j_i}(\gamma,\varepsilon),\tilde\beta(\gamma,\varepsilon)),\mu(\gamma,\varepsilon)|I^j_i)
-u^i(a',\varrho^i_{I^j_i}(\beta^{-I^j_i}(\gamma,\varepsilon),\tilde\beta(\gamma,\varepsilon)),\mu(\gamma,\varepsilon)|I^j_i))>\gamma,\] 
(ii). $\tilde\beta^i_{I^j_i}(\gamma,\varepsilon; a')=0$ for any $i\in N$, $j\in M_i$ and $a',a''\in A(I^j_i)$ with \[\setlength{\abovedisplayskip}{1.2pt}
\setlength{\belowdisplayskip}{1.2pt} u^i(a'',\varrho^i_{I^j_i}(\beta^{-I^j_i}(\gamma,\varepsilon),\tilde\beta(\gamma,\varepsilon)),\mu(\gamma,\varepsilon)|I^j_i)
-u^i(a',\varrho^i_{I^j_i}(\beta^{-I^j_i}(\gamma,\varepsilon),\tilde\beta(\gamma,\varepsilon)),\mu(\gamma,\varepsilon)|I^j_i)>\gamma,\]
(iii). $\mu(\gamma,\varepsilon)=(\mu^i_{I^j_i}(\gamma,\varepsilon;h):i\in N,j\in M_i,h\in I^j_i)$ with $\mu^i_{I^j_i}(\gamma,\varepsilon;h)=\frac{\omega(h|\beta(\gamma,\varepsilon))}{\omega(I^j_i|\beta(\gamma,\varepsilon))}$ for $h\in I^j_i$. 

\noindent For any given $\gamma>0$, $(\beta(\gamma),\mu(\gamma),\tilde\beta(\gamma))$ is a perfect $\gamma$-WSRE if it is a limit point of a sequence $\{(\beta(\gamma,\varepsilon_k), \mu(\gamma,\varepsilon_k), \tilde\beta(\gamma,\varepsilon_k)), k=1,2,\ldots\}$ with $\varepsilon_k>0$ and $\lim\limits_{k\to\infty}\varepsilon_k=0$, where $(\beta(\gamma,\varepsilon_k), \mu(\gamma,\varepsilon_k), \tilde\beta(\gamma,\varepsilon_k))$ is an $\varepsilon_k$-perfect $\gamma$-WSRE for every $k$.
$(\beta^*,\mu^*)$ is a WSRE if $(\beta^*,\mu^*,\tilde\beta^*)$ is a limit point of a sequence $\{(\beta(\gamma_q),\mu(\gamma_q), \tilde\beta(\gamma_q)),q=1,2,\ldots\}$ with $\gamma_q>0$ and $\lim\limits_{q\to\infty}\gamma_q=0$, where $(\beta(\gamma_q),\mu(\gamma_q), \tilde\beta(\gamma_q))$ is a perfect $\gamma_q$-WSRE for every $q$.}
\end{definition}
\begin{theorem}\label{edwsrethmA}{\em Definition~\ref{edAwsre} and Definition~\ref{ed1wsre} of WSRE are equivalent.
}
\end{theorem}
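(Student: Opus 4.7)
The plan is to prove the equivalence by showing both implications, using Theorem~\ref{ed1wsre} as a bridge: that theorem already equates Reny's Definition~\ref{wsred1} of WSRE with the existence of an auxiliary profile $\tilde\beta^*$ satisfying the local sequential rationality conditions~(i) and (ii), so it suffices to show that Definition~\ref{edAwsre} produces, in the double limit $\varepsilon\to 0$ then $\gamma\to 0$, exactly the pairs $(\beta^*,\mu^*)$ admitting such a $\tilde\beta^*$.

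For the direction ``limit point $\Rightarrow$ WSRE'', I would take any $(\beta^*,\mu^*,\tilde\beta^*)$ that arises as a limit of perfect $\gamma_q$-WSREs $(\beta(\gamma_q),\mu(\gamma_q),\tilde\beta(\gamma_q))$ with $\gamma_q\to 0$, each in turn a limit of totally mixed $\varepsilon_k$-perfect $\gamma_q$-WSREs. A diagonal extraction yields a sequence of totally mixed assessments $(\beta^\ell,\mu^\ell,\tilde\beta^\ell)$ with $\mu^\ell=\mu(\beta^\ell)$ converging to $(\beta^*,\mu^*,\tilde\beta^*)$, so property~(iii) of Definition~\ref{edAwsre} immediately gives $(\beta^*,\mu^*)\in\Psi$. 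For any $(i,j,a',a'')$ at which the strict inequality in Theorem~\ref{ed1wsre}(i) holds at $(\beta^*,\mu^*,\tilde\beta^*)$, continuity of the conditional expected payoffs as polynomial functions of the perturbed variables will ensure that the corresponding $\gamma_q$-margin inequality in Definition~\ref{edAwsre}(i) holds eventually along the sequence; property~(i) of the $\varepsilon_k$-perfect $\gamma_q$-WSRE then forces $\beta^i_{I^j_i}(\gamma_q,\varepsilon_k;a')\le\varepsilon_k$, and passing to the limit yields $\beta^{*i}_{I^j_i}(a')=0$. Property~(ii) of Theorem~\ref{ed1wsre} is obtained in the same way from property~(ii) of Definition~\ref{edAwsre}. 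Thus $(\beta^*,\mu^*,\tilde\beta^*)$ satisfies Theorem~\ref{ed1wsre} and $(\beta^*,\mu^*)$ is a WSRE.

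For the converse, suppose $(\beta^*,\mu^*)$ is a Reny WSRE. I would take $\tilde\beta^*$ to be the auxiliary profile produced in the proof of Theorem~\ref{ed1wsre} (namely $\tilde\beta^{*i}_{I^j_i}=\beta^{*i}_{I^j_i}$ on reachable information sets, and the backward-induction LP~(\ref{wsrethmeq18}) on the unreachable ones). Since $(\beta^*,\mu^*)\in\Psi$, there exists a sequence of totally mixed $\hat\beta^\ell\to\beta^*$ with $\mu(\hat\beta^\ell)\to\mu^*$. For each $\gamma>0$ and $\varepsilon>0$, I would set $\beta(\gamma,\varepsilon)=(1-\eta)\beta^*+\eta\hat\beta^\ell$ with $\eta=\eta(\gamma,\varepsilon)>0$ and $\ell=\ell(\gamma,\varepsilon)$ chosen so that $\beta^i_{I^j_i}(\gamma,\varepsilon;a')\le\varepsilon$ whenever $\beta^{*i}_{I^j_i}(a')=0$, put $\mu(\gamma,\varepsilon)=\mu(\beta(\gamma,\varepsilon))$, and take $\tilde\beta(\gamma,\varepsilon)=\tilde\beta^*$. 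Property~(iii) holds by construction. By continuity of the conditional expected payoffs on $\Psi$ and the fact that $\tilde\beta^*$ is held fixed, any $\gamma$-margin inequality at $(\beta(\gamma,\varepsilon),\mu(\gamma,\varepsilon),\tilde\beta^*)$ for small enough $\varepsilon$ implies the strict inequality (without margin) at $(\beta^*,\mu^*,\tilde\beta^*)$, so Theorem~\ref{ed1wsre} yields $\beta^{*i}_{I^j_i}(a')=\tilde\beta^{*i}_{I^j_i}(a')=0$; the construction above then enforces $\beta^i_{I^j_i}(\gamma,\varepsilon;a')\le\varepsilon$ and $\tilde\beta^i_{I^j_i}(\gamma,\varepsilon;a')=0$ as required. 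Letting $\varepsilon\to 0$ produces a perfect $\gamma$-WSRE that is already $(\beta^*,\mu^*,\tilde\beta^*)$, and letting $\gamma\to 0$ trivially then exhibits $(\beta^*,\mu^*,\tilde\beta^*)$ as the required limit point.

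The main obstacle will be the uniform continuity step underlying both directions: I must control finitely many action-pair inequalities involving $u^i(a,\varrho^i_{I^j_i}(\beta^{-I^j_i},\tilde\beta),\mu|I^j_i)$ simultaneously along the perturbation. Since these are polynomial in $(\beta,\mu,\tilde\beta)$ and since $\mu(\hat\beta^\ell)\to\mu^*$ is guaranteed by $(\beta^*,\mu^*)\in\Psi$, this continuity is available, but care is needed to show that the set of $(i,j,a',a'')$ flagged by the $\gamma$-margin condition at the perturbation coincides, for small enough $\varepsilon$ and $\gamma$, with the set flagged by the strict inequality at the limit. A second delicate point is to verify that in the double-limit construction $\tilde\beta^*$ can be kept fixed across both $\varepsilon_k\to 0$ and $\gamma_q\to 0$; this is handled by noting that Theorem~\ref{ed1wsre} determines $\tilde\beta^*$ once and for all from $(\beta^*,\mu^*)$, so the construction is internally consistent.
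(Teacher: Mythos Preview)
Your overall strategy matches the paper's: use Theorem~\ref{ed1wsre} as the bridge, establish consistency via a diagonal sequence in the forward direction, and in the backward direction perturb $\beta^*$ along the consistency sequence while holding $\tilde\beta^*$ fixed. The forward direction is correct and essentially identical to the paper's argument.

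The backward direction has a gap. Your perturbation $\beta(\gamma,\varepsilon)=(1-\eta)\beta^*+\eta\hat\beta^\ell$ is subject only to the constraint that actions with $\beta^{*i}_{I^j_i}(a')=0$ stay below $\varepsilon$; the natural way to enforce this is to take $\eta$ small. But then $\mu(\beta(\gamma,\varepsilon))$ need not converge to $\mu^*$: at an information set $I^j_i$ unreachable under $\beta^*$, histories in $I^j_i$ that pass through different numbers of zero-probability actions of $\beta^*$ pick up different powers of $\eta$ in $\omega(h\mid\beta(\gamma,\varepsilon))$, so as $\eta\to 0$ the induced belief concentrates on the histories with the fewest such actions, irrespective of what $\mu^*$ prescribes. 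Your subsequent appeals to ``continuity of the conditional expected payoffs on $\Psi$'' and the claim that ``letting $\varepsilon\to 0$ produces a perfect $\gamma$-WSRE that is already $(\beta^*,\mu^*,\tilde\beta^*)$'' both presuppose $\mu(\gamma,\varepsilon)\to\mu^*$, which you have not secured.

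The paper avoids this entirely by taking $\beta(\gamma_0,\varepsilon_\ell)=\hat\beta^\ell$ directly (no convex combination), with $\varepsilon_\ell\ge\max_{i,j}\max_{a\in A^0(I^j_i)}\hat\beta^{\ell i}_{I^j_i}(a)$; then $\mu(\gamma_0,\varepsilon_\ell)=\mu(\hat\beta^\ell)\to\mu^*$ holds by the very definition of consistency. Your scheme reduces to this when $\eta=1$, so the repair is a one-line change, but as written the convex combination is both an unnecessary complication and an actual error.
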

\begin{proof} $(\Rightarrow)$. We denote by $(\beta^*,\mu^*)$ a WSRE according to Definition~\ref{edAwsre}. Then there exists a convergent sequence $\{(\beta(\gamma_q),\mu(\gamma_q), \tilde\beta(\gamma_q)),q=1,2,\ldots\}$ with $\gamma_q>0$ and $\lim\limits_{q\to\infty}\gamma_q=0$ such that  $(\beta^*,\mu^*,\tilde\beta^*)=\lim\limits_{q\to\infty}(\beta(\gamma_q),\mu(\gamma_q),\tilde\beta(\gamma_q))$, where $(\beta(\gamma_q),\mu(\gamma_q),\tilde\beta(\gamma_q))$ is a perfect $\gamma_q$-WSRE for every $q$ and the limit point of a convergent sequence $\{(\beta(\gamma_q,\varepsilon_k), \mu(\gamma_q,\varepsilon_k), \tilde\beta(\gamma_q,\varepsilon_k)), k=1,2,\ldots\}$  satisfying that $\varepsilon_k>0$, $\lim\limits_{k\to\infty}\varepsilon_k=0$, and $(\beta(\gamma_q,\varepsilon_k), \mu(\gamma_q,\varepsilon_k), \tilde\beta(\gamma_q,\varepsilon_k))$ is an $\varepsilon_k$-perfect $\gamma_q$-WSRE for every $k$. 
Let $\{\delta_q>0,q=1,2,\ldots\}$ be a convergent sequence with $\lim\limits_{q\to\infty}\delta_q=0$. For each $q$, we choose $(\beta(\gamma_q, \varepsilon_{k_q}),\mu(\gamma_q, \varepsilon_{k_q}), \tilde\beta(\gamma_q, \varepsilon_{k_q}))$ such that $\|(\beta(\gamma_q, \varepsilon_{k_q}),\mu(\gamma_q, \varepsilon_{k_q}), \tilde\beta(\gamma_q, \varepsilon_{k_q}))-(\beta(\gamma_q),\mu(\gamma_q),\tilde\beta(\gamma_q))\|<\delta_q$. Then, $\{(\beta(\gamma_q, \varepsilon_{k_q}),\mu(\gamma_q, \varepsilon_{k_q}), \tilde\beta(\gamma_q, \varepsilon_{k_q})), q=1,2,\ldots\}$ is a convergent sequence with $\lim\limits_{q\to\infty}(\beta(\gamma_q, \varepsilon_{k_q})$, $\mu(\gamma_q, \varepsilon_{k_q}), \tilde\beta(\gamma_q, \varepsilon_{k_q}))=(\beta^*,\mu^*,\tilde\beta^*)$.  For some $i\in N$ and $j\in M_i$,
suppose that there exist $a',a''\in A(I^j_i)$ such that  $\beta^{*i}_{I^j_i}(a')>0$ and $r_0={\cal D}(I^j_i|\beta^*)(u^i(a'', \varrho^i_{I^j_i}(\beta^{*-I^j_i},\tilde\beta^*),\mu^*|I^j_i)-u^i(a', \varrho^i_{I^j_i}(\beta^{*-I^j_i},\tilde\beta^*),\mu^*|I^j_i))>0$. Then, as a result of the continuity of ${\cal D}(I^j_i|\beta)$ and $u^i(a, \varrho^i_{I^j_i}(\beta^{-I^j_i},\tilde\beta),\mu|I^j_i)$, there exists a sufficiently large $Q_0$ such that 
 ${\cal D}(I^j_i|\beta(\gamma_q,\varepsilon_{k_q}))(u^i(a'', \varrho^i_{I^j_i}(\beta^{-I^j_i}(\gamma_q, \varepsilon_{k_q}),\tilde\beta(\gamma_q,\varepsilon_{k_q})),\mu(\gamma_q, \varepsilon_{k_q})|I^j_i)-u^i(a', \varrho^i_{I^j_i}(\beta^{-I^j_i}(\gamma_q, \varepsilon_{k_q}),\tilde\beta(\gamma_q,\varepsilon_{k_q})),\mu(\gamma_q, \varepsilon_{k_q})|I^j_i))\ge r_0/2$ for all $q\ge Q_0$. Since $\lim\limits_{q\to\infty}\gamma_q=0$, there exists a sufficiently large $Q_1\ge Q_0$ such that $\gamma_q<r_0/2$ for all $q\ge Q_1$. This implies $\beta^i_{I^j_i}(\gamma_q, \varepsilon_{k_q};a')\le\varepsilon_{k_q}$ for all $q\ge Q_1$. Thus, $0<\beta^{*i}_{I^j_i}(a')=\lim\limits_{q\to\infty}\beta^i_{I^j_i}(\gamma_q, \varepsilon_{k_q};a')=0$. A contradiction occurs. 
Therefore, $(\beta^*,\mu^*)$ is a WSRE according to Theorem~\ref{ed1wsre}.

$(\Rightarrow)$. We denote by $(\beta^*,\mu^*)$ a WSRE according to Definition~\ref{ed1wsre}. Then there exists $\tilde\beta^*$ together with $(\beta^*,\mu^*)$ satisfying the properties of Theorem~\ref{ed1wsre}. As a result of the consistency, there exists a convergent sequence, $\{(\beta^\ell,\mu^\ell),\ell=1,2,\ldots\}$, such that $\beta^*=\lim\limits_{\ell\to\infty}\beta^\ell$ and $\mu^*=\lim\limits_{\ell\to\infty}\mu^\ell$, where $\beta^\ell=(\beta^{\ell i}_{I^j_i}(a):i\in N,j\in M_i,a\in A(I^j_i))>0$ and $\mu^\ell=(\mu^{\ell i}_{I^j_i}(h):i\in N, j\in M_i, h\in I^j_i)=\mu(\beta^\ell)=(\mu^i_{I^j_i}(h|\beta^\ell):i\in N,j\in M_i,h\in I^j_i)$ with  $\mu^{\ell i}_{I^j_i}(h)=\mu^i_{I^j_i}(h|\beta^\ell)=\frac{\omega(h|\beta^\ell)}{\omega(I^j_i|\beta^\ell)}$.  For $i\in N$ and $j\in M_i$, let $A^0(I^j_i)=\{a\in A(I^j_i)|\beta^{*i}_{I^j_i}(a)=0\}$, $A^+(I^j_i)=\{a\in A(I^j_i)|{\cal D}(I^j_i|\beta^*)(\max\limits_{\tilde a\in A(I^j_i)}u^i(\tilde a,\varrho^i_{I^j_i}(\beta^{*-I^j_i},\tilde\beta^*),\mu^*|I^j_i)-u^i(a, \varrho^i_{I^j_i}(\beta^{*-I^j_i},\tilde\beta^*),\mu^*|I^j_i))>0\}$, and $\tilde A^+(I^j_i)=\{a\in A(I^j_i)|\max\limits_{\tilde a\in A(I^j_i)}u^i(\tilde a,\varrho^i_{I^j_i}(\beta^{*-I^j_i},\tilde\beta^*),\mu^*|I^j_i)-u^i(a, \varrho^i_{I^j_i}(\beta^{*-I^j_i},\tilde\beta^*),\mu^*|I^j_i)>0\}$. When $\mathop{\cup}\limits_{i\in N,\;j\in M_i}A^+(I^j_i)=\emptyset$, we have $\beta^*>0$ and consequently, $(\beta^*,\mu^*)$ is a WSRE according to Definition~\ref{edAwsre}. We assume that $\mathop{\cup}\limits_{i\in N,\;j\in M_i}A^+(I^j_i)\ne\emptyset$.
Let $
\gamma_0=
 \frac{1}{2}\min\limits_{i\in N,\;j\in M_i}\min\{\min\limits_{a\in A^+(I^j_i)}{\cal D}(I^j_i|\beta^*)(\max\limits_{\tilde a\in A(I^j_i)}u^i(\tilde a,\varrho^i_{I^j_i}(\beta^{*-I^j_i},\tilde\beta^*),\mu^*|I^j_i)-u^i(a, \varrho^i_{I^j_i}(\beta^{*-I^j_i},\tilde\beta^*),\mu^*|I^j_i)), \min\limits_{a\in \tilde A^+(I^j_i)}\max\limits_{\tilde a\in A(I^j_i)}u^i(\tilde a,\varrho^i_{I^j_i}(\beta^{*-I^j_i},\tilde\beta^*),\mu^*|I^j_i)-u^i(a, \varrho^i_{I^j_i}(\beta^{*-I^j_i},\tilde\beta^*),\mu^*|I^j_i)\}$.
We take $\{\varepsilon_\ell,\ell=1,2,\ldots\}$ to be a convergent sequence with $\varepsilon_\ell\ge\max\limits_{i\in N,\;j\in M_i}\max\limits_{a\in A^0(I^j_i)}\beta^{\ell i}_{I^j_i}(a)$ and $\lim\limits_{\ell\to\infty}\varepsilon_\ell=0$. 
Let $\beta(\gamma_0,\varepsilon_\ell)=(\beta^i_{I^j_i}(\gamma_0,\varepsilon_\ell; a): i\in N,j\in M_i,a\in A(I^j_i))$ with $\beta^i_{I^j_i}(\gamma_0,\varepsilon_\ell)=\beta^{\ell i}_{I^j_i}$, $\mu(\gamma_0,\varepsilon_\ell)=(\mu^i_{I^j_i}(\gamma_0,\varepsilon_\ell; h): i\in N,j\in M_i,h\in I^j_i)$ with $\mu^i_{I^j_i}(\gamma_0,\varepsilon_\ell; h)=\mu^{\ell i}_{I^j_i}(h)$, and $\tilde\beta(\gamma_0,\varepsilon_\ell)=\tilde\beta^*$. We know from Definition~\ref{ed1wsre} that $\beta^{*i}_{I^j_i}(a')=0$ for any $i\in N$, $j\in M_i$ and $a',a''\in A(I^j_i)$ with ${\cal D}(I^ji|\beta^*)(u^i(a'',\varrho^i_{I^j_i}(\beta^{*-I^j_i},\tilde\beta^*),\mu^*|I^j_i)-u^i(a',\varrho^i_{I^j_i}(\beta^{*-I^j_i},\tilde\beta^*),\mu^*|I^j_i))>0$ and $\tilde\beta^{*i}_{I^j_i}(a')=0$ for any $i\in N$, $j\in M_i$ and $a',a''\in A(I^j_i)$ with $u^i(a'',\varrho^i_{I^j_i}(\beta^{*-I^j_i},\tilde\beta^*),\mu^*|I^j_i)-u^i(a',\varrho^i_{I^j_i}(\beta^{*-I^j_i},\tilde\beta^*),\mu^*|I^j_i)>0$. Thus it follows from the continuity of ${\cal D}(I^j_i|\beta)$ and $u^i(a,\varrho^i_{I^j_i}(\beta^{-I^j_i},\tilde\beta),\mu|I^j_i)$   that there exists a sufficiently large $L_0$ such that, for any $\ell\ge L_0$, $(\beta(\gamma_0,\varepsilon_\ell),\mu(\gamma_0,\varepsilon_\ell), \tilde\beta(\gamma_0,\varepsilon_\ell))$ is an $\varepsilon_\ell$-perfect $\gamma_0$-WSRE  according to Definition~\ref{edAwsre}. Therefore, $(\beta^*,\mu^*)$ is a WSRE according to Definition~\ref{edAwsre}. The proof is completed.
\end{proof}
As a direct application of Definition~\ref{edAwsre}, we acquire a proof to the existence of a WSRE in an extensive-form game with perfect recall.
\begin{theorem} {\em An extensive-form game with perfect recall always has a WSRE.}
\end{theorem}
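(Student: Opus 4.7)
The plan is to chain Definition~\ref{edAwsre} with a Kakutani fixed-point argument: for each pair $(\gamma,\varepsilon)$ with $\gamma,\varepsilon>0$ I would construct an $\varepsilon$-perfect $\gamma$-WSRE, then invoke compactness twice to send $\varepsilon\downarrow 0$ and then $\gamma\downarrow 0$. Concretely, fix $\gamma,\varepsilon>0$, choose $\bar\varepsilon\in(0,\min\{\varepsilon,1/\max_{i,j}|A(I^j_i)|\})$, and work on the compact convex domain $\triangle_{\bar\varepsilon}\times\triangle$, where $\triangle_{\bar\varepsilon}=\{\beta\in\triangle:\beta^i_{I^j_i}(a)\ge\bar\varepsilon\ \text{for all}\ i\in N,\ j\in M_i,\ a\in A(I^j_i)\}$. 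For each $(\beta,\tilde\beta)$ in this domain, set $\mu=\mu(\beta)$ by Bayes, which is well defined because $\bar\varepsilon>0$ keeps every $\omega(I^j_i|\beta)$ bounded below. Define the correspondence $\Phi=(\Phi_1,\Phi_2)$ componentwise: $\Phi_1^{i,j}(\beta,\tilde\beta)$ collects the solutions in the $\bar\varepsilon$-perturbed local simplex of the linear program $\max_{\eta}\sum_{a\in A(I^j_i)}\eta(a)\,{\cal D}(I^j_i|\beta)\,u^i(a,\varrho^i_{I^j_i}(\beta^{-I^j_i},\tilde\beta),\mu|I^j_i)$, and $\Phi_2^{i,j}(\beta,\tilde\beta)$ collects the solutions in $\triangle^i_{I^j_i}$ of the same program with the factor ${\cal D}(I^j_i|\beta)$ removed and no lower-bound perturbation.

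Next I would verify Kakutani's hypotheses on $\Phi$. Each component is a linear maximizer over a compact convex set, so $\Phi$ is nonempty-, convex-, and compact-valued. Upper hemicontinuity follows from Berge's maximum theorem once I observe that, on the restricted domain, both ${\cal D}(I^j_i|\beta)$ and $u^i(a,\varrho^i_{I^j_i}(\beta^{-I^j_i},\tilde\beta),\mu(\beta)|I^j_i)$ vary continuously in $(\beta,\tilde\beta)$; the latter reduces to continuity of $\mu(\beta)$ on $\triangle_{\bar\varepsilon}$, which holds because the denominators $\omega(I^j_i|\beta)$ are bounded below by a positive power of $\bar\varepsilon$. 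Kakutani then produces a fixed point $(\beta^*,\tilde\beta^*)$. At each local LP for $\Phi_1^{i,j}$ this fixed point assigns exactly the lower bound $\bar\varepsilon\le\varepsilon$ to every action with strictly smaller weighted payoff than some other action, and each LP for $\Phi_2^{i,j}$ assigns zero mass to every action strictly dominated in the unscaled payoff. Setting $\mu^*=\mu(\beta^*)$ thus yields an $\varepsilon$-perfect $0$-WSRE, which is a fortiori an $\varepsilon$-perfect $\gamma$-WSRE since the $\gamma$-conditions are weaker than the $0$-conditions.

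The argument then closes in two compactness steps. For fixed $\gamma>0$ and $\varepsilon_k\downarrow 0$, the corresponding $\varepsilon_k$-perfect $\gamma$-WSREs lie in a compact product of simplices, so a convergent subsequence delivers a perfect $\gamma$-WSRE whose belief component is automatically KW-consistent as a limit of Bayes updates from totally mixed strategies. Letting $\gamma_q\downarrow 0$ and extracting once more on the resulting perfect $\gamma_q$-WSREs yields a limit assessment that is a WSRE by Definition~\ref{edAwsre}, and hence a WSRE in Reny's sense by Theorem~\ref{edwsrethmA}.

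The main obstacle I foresee is the upper hemicontinuity of $\Phi$. Without restricting $\beta$ to $\triangle_{\bar\varepsilon}$, the Bayes-consistent beliefs $\mu(\beta)$ become ill-defined as $\omega(I^j_i|\beta)\to 0$ and the LP coefficients lose continuity; this is precisely why the perturbed domain is indispensable at the fixed-point stage, and the flexibility to pick $\bar\varepsilon\le\varepsilon$ is what ties the resulting fixed point back to the $\varepsilon$-perfect $\gamma$-WSRE definition.
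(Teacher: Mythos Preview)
Your proof is correct and follows essentially the same architecture as the paper's: apply Kakutani on a lower-bound-perturbed strategy space to manufacture an $\varepsilon$-perfect $\gamma$-WSRE, then pass to limits in $\varepsilon$ and in $\gamma$. The paper's version differs only in the mechanics of the correspondence: it works on $\triangle(\varepsilon_k)\times\Xi\times\triangle$, carrying $\mu$ as a separate fixed-point coordinate (sent to the Bayes update of the input $\beta$) rather than substituting $\mu(\beta)$ inline as you do, and it defines the image not via argmax sets but as the polytope of all triples satisfying the $\gamma_q$-threshold inequalities of Definition~\ref{edAwsre} directly. Your formulation is in one respect a bit tighter: since ${\cal D}(I^j_i|\beta)>0$ on $\triangle_{\bar\varepsilon}$, a fixed point of your $\Phi$ already meets the conditions with $\gamma=0$ and hence for every $\gamma>0$ simultaneously, so the same $\varepsilon_k$-sequence works for all $\gamma$ and the second extraction is essentially trivial; the paper instead builds a $\gamma_q$-specific correspondence so as to match Definition~\ref{edAwsre} verbatim at each stage. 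Both routes hinge on exactly the point you flag at the end, namely that restricting to $\triangle_{\bar\varepsilon}$ keeps each $\omega(I^j_i|\beta)$ bounded away from zero so that $\mu(\beta)$ and the conditional payoffs vary continuously, which is what delivers upper hemicontinuity.
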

\begin{proof} Let $\{\gamma_q>0, q=1,2,\ldots\}$ be a convergent sequence with  $\lim\limits_{q\to\infty}\gamma_q=0$ and $\{\varepsilon_k>0,k=1,2,\ldots\}$ a convergent sequence with $\lim\limits_{k\to\infty}\varepsilon_k=0$. We denote $m_0=\sum\limits_{i\in N}\sum\limits_{j\in M_i}|A(I^j_i)|$.
Let $\delta(\varepsilon_k)=\frac{\varepsilon^2_k}{m_0}$ and $\triangle(\varepsilon_k)=\mathop{\times}\limits_{i\in N,\;j\in M_i}\triangle^i_{I^j_i}(\varepsilon_k)$, where $\triangle^i_{I^j_i}(\varepsilon_k)=\{\beta^i_{I^j_i}(\varepsilon_k)\in\triangle^i_{I^j_i}|\beta^i_{I^j_i}(\varepsilon_k;a)\ge\delta(\varepsilon_k),a\in A(I^j_i)\}$. For $i\in N$ and $j\in M_i$, we define a point-to-set mapping $F^i_{I^j_i}(\cdot | \gamma_q):\triangle(\varepsilon_k)\times\Xi\times\triangle\to\triangle^i_{I^j_i}(\varepsilon_k)\times\Xi^i_{I^j_i}\times\triangle^i_{I^j_i}$ by
{\small\[\setlength{\abovedisplayskip}{1.2pt}
\setlength{\belowdisplayskip}{1.2pt}
\begin{array}{rl}
 & F^i_{I^j_i}(\beta(\varepsilon_k),\mu(\varepsilon_k),\tilde\beta(\varepsilon_k)|\gamma_q)\\
 = & \left\{\begin{array}{l}
(\beta^{*i}_{I^j_i}(\varepsilon_k),\mu^{*i}_{I^j_i}(\varepsilon_k),\tilde\beta^{*i}_{I^j_i}(\varepsilon_k))\\
\in\triangle^i_{I^j_i}(\varepsilon)\times\Xi^i_{I^j_i}\times\triangle^i_{I^j_i}
\end{array}\left|\begin{array}{ll}
\text{$\beta^{*i}_{I^j_i}(\varepsilon_k; a')\le\varepsilon_k$ for any $a',a''\in A(I^j_i)$ with}\\

\text{${\cal D}(I^j_i|\beta(\varepsilon_k))(u^i(a'',\varrho^i_{I^j_i}(\beta^{-I^j_i}(\varepsilon_k), \tilde\beta(\varepsilon_k)), \mu(\varepsilon_k)|I^j_i)$}\\

\text{$-u^i(a',\varrho^i_{I^j_i}(\beta^{-I^j_i}(\varepsilon_k), \tilde\beta(\varepsilon_k)),\mu(\varepsilon_k)|I^j_i))> \gamma_q$.}\\

\text{$\tilde\beta^{*i}_{I^j_i}(\varepsilon_k; a')=0$ for any $a',a''\in A(I^j_i)$ with}\\

\text{$u^i(a'',\varrho^i_{I^j_i}(\beta^{-I^j_i}(\varepsilon_k), \tilde\beta(\varepsilon_k)),\mu(\varepsilon_k)|I^j_i)$}\\

\text{$-u^i(a',\varrho^i_{I^j_i}(\beta^{-I^j_i}(\varepsilon_k), \tilde\beta(\varepsilon_k)),\mu(\varepsilon_k)|I^j_i)> \gamma_q$}
\end{array}\right.
\right\},
\end{array}\]}where $\mu^*(\varepsilon_k)=(\mu^{*i}_{I^j_i}(\varepsilon_k; h):i\in N,j\in M_i,h\in I^j_i)$ with $\mu^{*i}_{I^j_i}(\varepsilon_k; h)=\mu^{*i}_{I^j_i}(h|\beta(\varepsilon_k))=\frac{\omega(h|\beta(\varepsilon_k))}{\omega(I^j_i|\beta(\varepsilon_k))}$.
Clearly, for any $(\beta(\varepsilon_k),\mu(\varepsilon_k),\tilde\beta(\varepsilon_k))\in\triangle(\varepsilon_k)\times\Xi\times\triangle$, the points in $F^i_{I^j_i}(\beta(\varepsilon_k),\mu(\varepsilon_k),\tilde\beta(\varepsilon_k)|\gamma_q)$ satisfy a finite number of linear inequalities. Thus, $F^i_{I^j_i}(\beta(\varepsilon_k),\mu(\varepsilon_k),\tilde\beta(\varepsilon_k)|\gamma_q)$ is a nonempty convex and compact set. The continuity of $u^i(a,\varrho^i_{I^j_i}(\beta^{-I^j_i}(\varepsilon_k), \tilde\beta(\varepsilon_k)),\mu(\varepsilon_k)|I^j_i)$  on $\triangle(\varepsilon_k)\times\triangle\times\Xi$ and ${\cal D}(I^j_i|\beta(\varepsilon_k))$ on $\triangle(\varepsilon_k)$ implies that $F^i_{I^j_i}(\beta(\varepsilon_k),\mu(\varepsilon_k),\tilde\beta(\varepsilon_k)|\gamma_q)$ is upper-semicontinuous on $\triangle(\varepsilon_k)\times\Xi\times\triangle$. Let $F(\beta(\varepsilon_k),\mu(\varepsilon_k),\tilde\beta(\varepsilon_k)|\gamma_q)=\mathop{\times}\limits_{i\in N,\;j\in M_i}F^i_{I^j_i}(\beta(\varepsilon_k),\mu(\varepsilon_k),\tilde\beta(\varepsilon_k)|\gamma_q)$. Then, $F(\cdot|\gamma_q):\triangle(\varepsilon_k)\times\Xi\times\triangle\to\triangle(\varepsilon_k)\times\Xi\times\triangle$ meets all the requirements of Kakutani's fixed point theorem. Thus there exists some $(\beta^*(\gamma_q, \varepsilon_k), \mu^*(\gamma_q, \varepsilon_k), \tilde\beta^*(\gamma_q, \varepsilon_k))\in \triangle(\varepsilon_k)\times\Xi\times\triangle$ such that $(\beta^*(\gamma_q, \varepsilon_k), \mu^*(\gamma_q, \varepsilon_k), \tilde\beta^*(\gamma_q, \varepsilon_k))\in F(\beta^*(\gamma_q, \varepsilon_k), \mu^*(\gamma_q, \varepsilon_k), \tilde\beta^*(\gamma_q, \varepsilon_k)|\gamma_q)$. Since $\{(\beta^*(\gamma_q, \varepsilon_k), \mu^*(\gamma_q, \varepsilon_k)$, $\tilde\beta^*(\gamma_q, \varepsilon_k)),k=1,2,\ldots\}$ is a bounded sequence, it has a convergent subsequence. For convenience, we still denote such a convergent subsequence by $\{(\beta^*(\gamma_q, \varepsilon_k), \mu^*(\gamma_q, \varepsilon_k), \tilde\beta^*(\gamma_q, \varepsilon_k)),k=1,2,\ldots\}$.  Let $(\beta^*(\gamma_q),\mu^*(\gamma_q),\tilde\beta^*(\gamma_q))=\lim\limits_{k\to\infty}(\beta^*(\gamma_q, \varepsilon_k), \mu^*(\gamma_q, \varepsilon_k), \tilde\beta^*(\gamma_q, \varepsilon_k))$. Since $\{(\beta^*(\gamma_q),\mu^*(\gamma_q)\ \tilde\beta^*(\gamma_q)),q=1,2,\ldots\}$ is a bounded sequence, it has a convergent subsequence. For convenience, we still denote such a convergent subsequence by $\{(\beta^*(\gamma_q),\mu^*(\gamma_q),\tilde\beta^*(\gamma_q)),q=1,2,\ldots\}$. Let $(\beta^*,\mu^*,\tilde\beta^*)=\lim\limits_{q\to\infty}(\beta^*(\gamma_q),\mu^*(\gamma_q),\tilde\beta^*(\gamma_q))$.  It follows from the above construction that $(\beta^*(\gamma_q, \varepsilon_k), \mu^*(\gamma_q, \varepsilon_k), \tilde\beta^*(\gamma_q, \varepsilon_k))$ is an $\varepsilon_k$-perfect $\gamma_q$-WSRE for every $k$. Therefore, $(\beta^*,\mu^*,\tilde\beta^*)$ is a WSRE according to Definition~\ref{edAwsre}. The proof is completed.
\end{proof}

We illustrate with two examples  how one can employ Definition~\ref{edAwsre} to analytically determine all WSREs in small extensive-form games in Section~\ref{examples}.

With Definition~\ref{edAwsre}, we have a difficulty to acquire a polynomial system as a necessary and sufficient condition to determine whether a given $(\beta,\mu,\tilde\beta)$ is an $\varepsilon$-perfect $\gamma$-WSRE or not. To resolve this difficulty, we need to present Definition~\ref{edAwsre} in an alternative way.
For any sufficiently small $\varepsilon>0$, let $\eta(\varepsilon)=(\eta^i_{I^j_i}(\varepsilon):i\in N,j\in M_i)$  be a vector with $\eta^i_{I^j_i}(\varepsilon)=(\eta^{i}_{I^j_i}(\varepsilon; a):a\in A(I^j_i))^\top$ such that $0<\eta^{i}_{I^j_i}(\varepsilon;a)\le \varepsilon$ and $\tau^i_{I^j_i}(\varepsilon)=\sum\limits_{a\in A(I^j_i)}\eta^{i}_{I^j_i}(\varepsilon; a)< 1$. 
We specify $\varpi(\beta,\eta(\varepsilon))=(\varpi(\beta^p_{I^\ell_p},\eta(\varepsilon)):p\in N,\ell\in M_p)$, where $\varpi(\beta^p_{I^\ell_p},\eta(\varepsilon))=(\varpi(\beta^p_{I^\ell_p}(a),\eta(\varepsilon)):a\in A(I^\ell_p))^\top$ with \[
\setlength{\abovedisplayskip}{1.2pt} 
\setlength{\belowdisplayskip}{1.2pt}
\varpi(\beta^p_{I^\ell_p}(a),\eta(\varepsilon))=(1-\tau^p_{I^\ell_p}(\varepsilon))\beta^p_{I^\ell_p}(a)+\eta^{p}_{I^\ell_p}(\varepsilon;a).\]
Given $\varpi(\beta,\eta(\varepsilon))$, one can readily verify that Definition~\ref{edAwsre} can be equivalently rewritten as follows.
\begin{definition}[\bf An Equivalent Definition of WSRE through $\varepsilon$-Perfect $\gamma$-WSRE with Local Sequential Rationality and Separation of Perturbation from Strategies]\label{edBwsre}
{\em For any given $\gamma>0$ and $\varepsilon>0$, a totally mixed assessment $(\varpi(\beta(\gamma),\eta(\varepsilon)),\mu(\gamma,\varepsilon))$ together with $\tilde\beta(\gamma,\varepsilon)$ constitutes an $\varepsilon$-perfect $\gamma$-WSRE if $(\varpi(\beta(\gamma),\eta(\varepsilon)),\mu(\gamma,\varepsilon), \tilde\beta(\gamma,\varepsilon))$  satisfies the properties:\newline
 (i). $\beta^i_{I^j_i}(\gamma; a')=0$ for any $i\in N$, $j\in M_i$ and $a',a''\in A(I^j_i)$ with 
 \[\setlength{\abovedisplayskip}{1.2pt}
\setlength{\belowdisplayskip}{1.2pt}\begin{array}{l}
{\cal D}(I^j_i|\varpi(\beta(\gamma),\eta(\varepsilon)))(u^i(a'',\varrho^i_{I^j_i}(\varpi(\beta^{-I^j_i}(\gamma),\eta(\varepsilon)),\tilde\beta(\gamma,\varepsilon)),\mu(\gamma,\varepsilon)|I^j_i)\\
\hspace{5cm}-u^i(a',\varrho^i_{I^j_i}(\varpi(\beta^{-I^j_i}(\gamma),\eta(\varepsilon)),\tilde\beta(\gamma,\varepsilon)),\mu(\gamma,\varepsilon)|I^j_i))>\gamma,\end{array}\] 
(ii). $\tilde\beta^i_{I^j_i}(a')=0$ for any $i\in N$, $j\in M_i$ and $a',a''\in A(I^j_i)$ with\[\setlength{\abovedisplayskip}{1.2pt}
\setlength{\belowdisplayskip}{1.2pt}\begin{array}{l}
u^i(a'',\varrho^i_{I^j_i}(\varpi(\beta^{-I^j_i}(\gamma),\eta(\varepsilon)),\tilde\beta(\gamma, \varepsilon)),\mu(\gamma,\varepsilon)|I^j_i)\\
\hspace{5cm}-u^i(a',\varrho^i_{I^j_i}(\varpi(\beta^{-I^j_i}(\gamma),\eta(\varepsilon)),\tilde\beta(\gamma, \varepsilon)),\mu(\gamma,\varepsilon)|I^j_i)>\gamma,\end{array}\] 
(iii). $\mu(\gamma,\varepsilon)=(\mu^i_{I^j_i}(\gamma,\varepsilon;h):i\in N,j\in M_i,h\in I^j_i)$ with $\mu^i_{I^j_i}(\gamma,\varepsilon;h)=\frac{\omega(h|\varpi(\beta(\gamma),\eta(\varepsilon)))}{\omega(I^j_i|\varpi(\beta(\gamma),\eta(\varepsilon)))}$ for $h\in I^j_i$. 

\noindent For any given $\gamma>0$,  $(\beta(\gamma),\mu(\gamma),\tilde\beta(\gamma))$ is a perfect $\gamma$-WSRE if $(\mu(\gamma),\tilde\beta(\gamma))$ is a limit point of a sequence $\{(\mu(\gamma,\varepsilon_k), \tilde\beta(\gamma,\varepsilon_k)), k=1,2,\ldots\}$ with $\varepsilon_k>0$ and $\lim\limits_{k\to\infty}\varepsilon_k=0$, where $(\varpi(\beta(\gamma),\eta(\varepsilon_k)), \mu(\gamma,\varepsilon_k)$, $\tilde\beta(\gamma, \varepsilon))$ is an $\varepsilon_k$-perfect $\gamma$-WSRE for every $k$. 
$(\beta^*,\mu^*)$ is a WSRE if $(\beta^*,\mu^*,\tilde\beta^*)$ is a limit point of a sequence $\{(\beta(\gamma_q),\mu(\gamma_q), \tilde\beta(\gamma_q)),q=1,2,\ldots\}$ with $\gamma_q>0$ and $\lim\limits_{q\to\infty}\gamma_q=0$, where $(\beta(\gamma_q),\mu(\gamma_q), \tilde\beta(\gamma_q))$ is a perfect $\gamma_q$-WSRE for every $q$.}
\end{definition}

We attain in a similar way to the proof of Theorem~\ref{nscwsrethm1} the following conclusion. One can compute an $\varepsilon$-perfect $\gamma$-WSRE by directly solving the system~(\ref{nscwsreesB}).
\begin{theorem}\label{nscwsrethmB} {\em $(\varpi(\beta(\gamma),\eta(\varepsilon)),\mu(\gamma,\varepsilon), \tilde\beta(\gamma,\varepsilon))$ is an $\varepsilon$-perfect $\gamma$-WSRE if and only if there exists $(\lambda(\gamma,\varepsilon), \tilde\lambda(\gamma,\varepsilon), \zeta(\gamma,\varepsilon),\tilde\zeta(\gamma,\varepsilon))$  together with $(\beta(\gamma),\eta(\varepsilon),\mu(\gamma,\varepsilon), \tilde\beta(\gamma,\varepsilon))$ satisfying the polynomial system,
{\small
\begin{equation}\label{nscwsreesB}\setlength{\abovedisplayskip}{1.2pt}
\setlength{\belowdisplayskip}{1.2pt}\
\begin{array}{l}
-\gamma\le {\cal D}(I^j_i|\varpi(\beta,\eta))u^i(a,\varrho^i_{I^j_i}(\varpi(\beta^{-I^j_i},\eta),\tilde\beta),\mu|I^j_i)+\lambda^i_{I^j_i}(a)-\zeta^i_{I^j_i}\le\gamma,\;i\in N, j\in M_i, a\in A(I^j_i),\\

-\gamma\le u^i(a,\varrho^i_{I^j_i}(\varpi(\beta^{-I^j_i},\eta),\tilde\beta),\mu|I^j_i)+\tilde\lambda^i_{I^j_i}(a)-\tilde\zeta^i_{I^j_i}\le\gamma,\;i\in N, j\in M_i, a\in A(I^j_i),\\

\sum\limits_{a\in A(I^j_i)}\beta^i_{I^j_i}(a)=1,\;
\sum\limits_{a\in A(I^j_i)}\tilde\beta^i_{I^j_i}(a)=1,\;i\in N, j\in M_i,\\

\beta^i_{I^j_i}(a)\lambda^i_{I^j_i}(a)=0, \;0\le\beta^i_{I^j_i}(a),\;0\le\lambda^i_{I^j_i}(a),\; i\in N, j\in M_i, a\in A(I^j_i),\\

\tilde\beta^i_{I^j_i}(a)\tilde\lambda^i_{I^j_i}(a)=0, \;0\le\tilde\beta^i_{I^j_i}(a),\;0\le\tilde\lambda^i_{I^j_i}(a),\; i\in N, j\in M_i, a\in A(I^j_i).
\end{array}\end{equation}
}
}
\end{theorem}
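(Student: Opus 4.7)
The plan is to mirror the structure of the proof of Theorem~\ref{nscwsrethm1}, replacing the exact equalities there with the two-sided $\gamma$-slack inequalities here. For brevity, write $v^i_{I^j_i}(a) := {\cal D}(I^j_i|\varpi(\beta,\eta))\,u^i(a,\varrho^i_{I^j_i}(\varpi(\beta^{-I^j_i},\eta),\tilde\beta),\mu|I^j_i)$ and $\tilde v^i_{I^j_i}(a) := u^i(a,\varrho^i_{I^j_i}(\varpi(\beta^{-I^j_i},\eta),\tilde\beta),\mu|I^j_i)$. The two directions will be handled separately, with the dual variables $(\lambda,\tilde\lambda,\zeta,\tilde\zeta)$ playing the role of Lagrange multipliers for the local optimality conditions of Definition~\ref{edBwsre}.

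For the forward direction ($\Rightarrow$), given an $\varepsilon$-perfect $\gamma$-WSRE, I would set $\zeta^{*i}_{I^j_i}=\max_{a\in A(I^j_i)}v^i_{I^j_i}(a)$ and $\tilde\zeta^{*i}_{I^j_i}=\max_{a\in A(I^j_i)}\tilde v^i_{I^j_i}(a)$, and then define $\lambda^{*i}_{I^j_i}(a)=0$ whenever $\beta^i_{I^j_i}(\gamma;a)>0$ and $\lambda^{*i}_{I^j_i}(a)=\zeta^{*i}_{I^j_i}-v^i_{I^j_i}(a)$ otherwise; the analogous recipe yields $\tilde\lambda^{*i}_{I^j_i}(a)$. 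Non-negativity of $\lambda^{*},\tilde\lambda^{*}$ and complementary slackness $\beta^i_{I^j_i}(a)\lambda^{*i}_{I^j_i}(a)=0=\tilde\beta^i_{I^j_i}(a)\tilde\lambda^{*i}_{I^j_i}(a)$ are immediate from the construction. The two-sided $\gamma$-inequalities are verified on a case split: off the support of $\beta$, the quantity $v^i_{I^j_i}(a)+\lambda^{*i}_{I^j_i}(a)-\zeta^{*i}_{I^j_i}$ is exactly $0\in[-\gamma,\gamma]$; on the support, $\lambda^{*i}_{I^j_i}(a)=0$ and the contrapositive of condition (i) in Definition~\ref{edBwsre} forces $\zeta^{*i}_{I^j_i}-v^i_{I^j_i}(a)\le\gamma$, so $v^i_{I^j_i}(a)-\zeta^{*i}_{I^j_i}\in[-\gamma,0]$. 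Condition (ii) of Definition~\ref{edBwsre} handles the $(\tilde\beta,\tilde\lambda^{*},\tilde\zeta^{*})$ block analogously.

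For the reverse direction ($\Leftarrow$), given a solution $(\beta,\tilde\beta,\mu,\eta,\lambda,\tilde\lambda,\zeta,\tilde\zeta)$ to (\ref{nscwsreesB}), I would multiply the first block of inequalities by $\beta^i_{I^j_i}(a)$ and sum over $a\in A(I^j_i)$. Complementary slackness kills the $\lambda$-term and $\sum_a\beta^i_{I^j_i}(a)=1$ collapses the constants, yielding that $\zeta^i_{I^j_i}$ agrees with $\sum_{a}\beta^i_{I^j_i}(a)v^i_{I^j_i}(a)$ up to $\gamma$. Coupling this with $\lambda\ge 0$, which forces $\zeta^i_{I^j_i}\ge v^i_{I^j_i}(a)-\gamma$ for every action $a$, shows that $\zeta^i_{I^j_i}$ tracks $\max_a v^i_{I^j_i}(a)$ up to $\gamma$. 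Then for any $a'$ with $\beta^i_{I^j_i}(a')>0$, complementary slackness gives $\lambda^i_{I^j_i}(a')=0$ and the two-sided bound pins $v^i_{I^j_i}(a')$ close to $\zeta^i_{I^j_i}$ within $\gamma$; hence any $a''$ with $v^i_{I^j_i}(a'')-v^i_{I^j_i}(a')$ strictly exceeding the system's tolerance cannot have $\beta^i_{I^j_i}(a')>0$, recovering condition (i) of Definition~\ref{edBwsre}. The same argument applied to $\tilde\beta,\tilde\lambda,\tilde\zeta,\tilde v$ yields condition (ii), while condition (iii) (the Bayesian consistency of $\mu$) is explicit in the assumption $(\beta,\eta,\mu)$ enters through the assessment $\varpi(\beta,\eta)$.

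The main technical care needed — and the single place the argument departs from the cleaner equality form of Theorem~\ref{nscwsrethm1} — is bookkeeping the $\gamma$-slack as it propagates through the complementary-slackness step: a single $\gamma$-slack in the upper inequality for $a''$ and another in the lower inequality for $a'$ accumulate, so the threshold that the reverse direction certifies is calibrated accordingly against the strict gap $>\gamma$ appearing in Definition~\ref{edBwsre}. Aside from this calibration, every manipulation is a verbatim translation of the corresponding step in Theorem~\ref{nscwsrethm1}.
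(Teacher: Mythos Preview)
Your approach matches the paper's, which simply states that the result is obtained ``in a similar way to the proof of Theorem~\ref{nscwsrethm1}''. The forward direction is correct exactly as you describe.

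The reverse direction, however, has a genuine gap that your last paragraph flags but does not resolve. With your notation, the system gives, for every $a$, $v^i_{I^j_i}(a)\le \zeta^i_{I^j_i}+\gamma$ (since $\lambda\ge 0$), and for any $a'$ with $\beta^i_{I^j_i}(a')>0$ complementary slackness yields $v^i_{I^j_i}(a')\ge \zeta^i_{I^j_i}-\gamma$. Combining these produces only $v^i_{I^j_i}(a'')-v^i_{I^j_i}(a')\le 2\gamma$, not $\le\gamma$, and this bound cannot be tightened: take a single information set with two actions and $v(a_1)=0$, $v(a_2)=\tfrac{3}{2}\gamma$, $\beta(a_1)=\beta(a_2)=\tfrac{1}{2}$, $\lambda(a_1)=\lambda(a_2)=0$, $\zeta=\tfrac{3}{4}\gamma$. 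Every line of system~(\ref{nscwsreesB}) is satisfied (each expression equals $\pm\tfrac{3}{4}\gamma\in[-\gamma,\gamma]$), yet $v(a_2)-v(a_1)=\tfrac{3}{2}\gamma>\gamma$ while $\beta(a_1)>0$, so condition~(i) of Definition~\ref{edBwsre} fails. Your phrase ``calibrated accordingly'' glosses over this: the system as written certifies Definition~\ref{edBwsre} only with threshold $2\gamma$, not $\gamma$. Since the paper itself supplies no details beyond the analogy to Theorem~\ref{nscwsrethm1}, this slack is an artifact of the statement rather than of your method; for the limiting use the paper makes of the system (sending $\gamma\to 0$) the factor of two is immaterial, but the literal ``if and only if'' for fixed $\gamma$ does not follow from the argument you outline.
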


\section{Illustrative Examples\label{examples}}

This section presents two examples to illustrate how one can employ Definition~\ref{edAwsre} to analytically find all WSREs for small extensive-form games.
\begin{figure}[H]
    \centering
    \begin{minipage}{0.49\textwidth}
        \centering
        \includegraphics[width=0.80\textwidth, height=0.15\textheight]{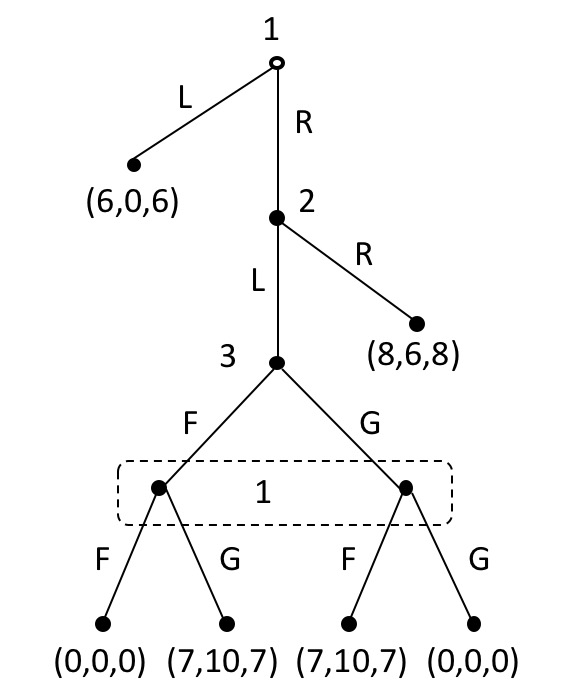}
                \caption{\label{TFigure1}\scriptsize An Extensive-Form Game}
\end{minipage}\hfill
    \begin{minipage}{0.49\textwidth}
        \centering
        \includegraphics[width=0.80\textwidth, height=0.15\textheight]{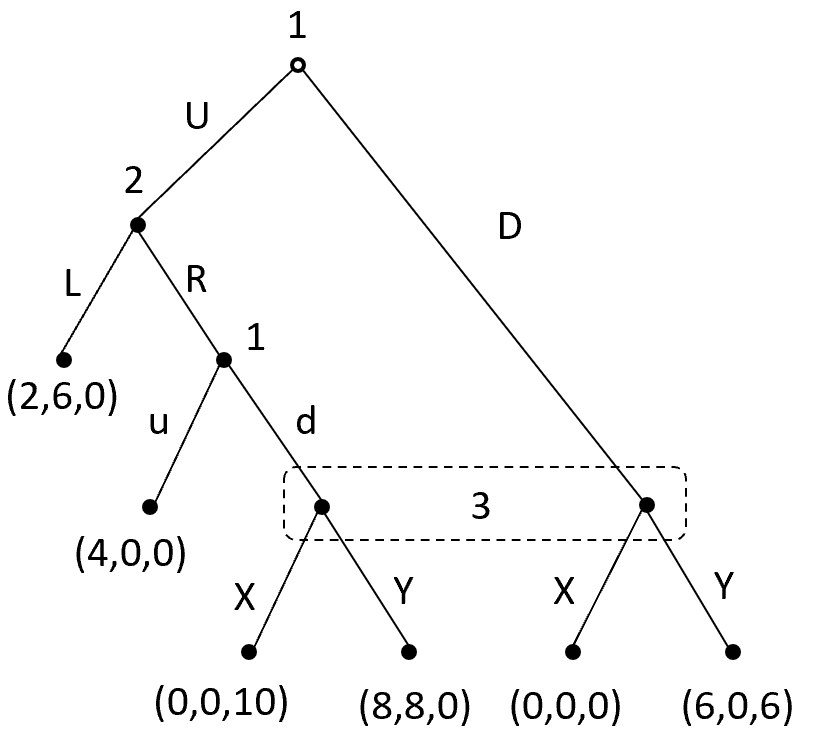}
\caption{\label{TFigure2}\scriptsize  An Extensive-Form Game}\end{minipage}
 \end{figure}

\begin{example} {\em
Consider the game in Fig.~\ref{TFigure1}. The information sets consist of $I^1_1=\{\emptyset\}$, $I^2_1=\{\langle R, L, F\rangle, \langle R, L, G\rangle\}$, $I^1_2=\{\langle R\rangle\}$, and $I^1_3=\{\langle R, L\rangle\}$. We denote by $(\beta(\gamma, \varepsilon), \mu(\gamma, \varepsilon),\tilde\beta(\gamma,\varepsilon))$
 an $\varepsilon$-perfect $\gamma$-WSRE. For convenience, we will omit $(\gamma,\varepsilon)$ from $(\beta(\gamma, \varepsilon), \mu(\gamma, \varepsilon),\tilde\beta(\gamma,\varepsilon))$  in the rest of this example.
 Each WSRE is presented in the form of $((\beta^{*1}_{I^1_1}(L), \beta^{*1}_{I^1_1}(R)), (\beta^{*1}_{I^2_1}(F),\beta^{*1}_{I^2_1}(G)), (\beta^{*2}_{I^1_2}(L),\beta^{*2}_{I^1_2}(R))$, $(\beta^{*3}_{I^1_3}(F),\beta^{*3}_{I^1_3}(G)))$.
 It follows from Fig.~\ref{TFigure1} that 
{\small\[\setlength{\abovedisplayskip}{1.2pt}
\setlength{\belowdisplayskip}{1.2pt}
\begin{array}{l}
{\cal D}(I^1_1|\beta)=1,\; {\cal D}(I^2_1|\beta)=\beta^1_{I^1_1}(R),\;{\cal D}(I^1_2|\beta)=1,\;{\cal D}(I^1_3|\beta)=1,\\

\mu^1_{I^1_1}(\emptyset)=1,\; \mu^1_{I^2_1}(\langle R, L,F\rangle)=\beta^3_{I^1_3}(F),\; \mu^1_{I^2_1}(\langle R, L,G\rangle)=\beta^3_{I^1_3}(G),\;

\mu^2_{I^1_2}(\langle R\rangle)=1,\; \mu^3_{I^1_3}(\langle R, L\rangle)=1,
\end{array}\]}
{\small\[\setlength{\abovedisplayskip}{1.2pt}
\setlength{\belowdisplayskip}{1.2pt}
\begin{array}{l}
u^1(L,\varrho^1_{I^1_1}(\beta^{-I^1_1},\tilde\beta),\mu| I^1_1)=6,\\

u^1(R,\varrho^1_{I^1_1}(\beta^{-I^1_1},\tilde\beta),\mu| I^1_1)=7\beta^2_{I^1_2}(L)(\beta^3_{I^1_3}(F)\tilde\beta^1_{I^2_1}(G)+ \beta^3_{I^1_3}(G)\tilde\beta^1_{I^2_1}(F))+8\beta^2_{I^1_2}(\gamma,\varepsilon; R),\\

u^1(F,\varrho^1_{I^2_1}(\beta^{-I^2_1},\tilde\beta),\mu| I^2_1)=7\beta^3_{I^1_3}(G),\;

u^1(G,\varrho^1_{I^2_1}(\beta^{-I^2_1},\tilde\beta),\mu| I^2_1)=7\beta^3_{I^1_3}(F),\\

u^2(L,\varrho^2_{I^1_2}(\beta^{-I^1_2},\tilde\beta),\mu| I^1_2)=10(\beta^3_{I^1_3}(F)\beta^1_{I^2_1}(G)+ \beta^3_{I^1_3}(G)\beta^1_{I^2_1}(F)),\;

u^2(R,\varrho^2_{I^1_2}(\beta^{-I^1_2},\tilde\beta),\mu| I^1_2)=6,\\

u^3(F,\varrho^3_{I^1_3}(\beta^{-I^1_3},\tilde\beta),\mu| I^1_3)= 7\beta^1_{I^2_1}(G),\;

u^3(G,\varrho^3_{I^1_3}(\beta^{-I^1_3},\tilde\beta),\mu| I^1_3)= 7\beta^1_{I^2_1}(F).
\end{array}\]}

\noindent {\bf Case (1)}. Suppose that  ${\cal D}(I^2_1|\beta)(u^1(F,\varrho^1_{I^2_1}(\beta^{-I^2_1},\tilde\beta),\mu| I^2_1)-u^1(G,\varrho^1_{I^2_1}(\beta^{-I^2_1},\tilde\beta),\mu| I^2_1))>\gamma$. 
Then, $\beta^1_{I^2_1}(G)\le\varepsilon$ and $\tilde\beta^1_{I^2_1}(G)=0$. Thus, ${\cal D}(I^1_3|\beta)(u^3(G,\varrho^3_{I^1_3}(\beta^{-I^1_3},\tilde\beta),\mu| I^1_3)-u^3(F,\varrho^3_{I^1_3}(\beta^{-I^1_3},\tilde\beta),\mu| I^1_3))>\gamma$ and consequently, $\beta^3_{I^1_3}(F)\le\varepsilon$. Therefore, 
${\cal D}(I^1_2|\beta)(u^2(L,\varrho^2_{I^1_2}(\beta^{-I^1_2},\tilde\beta),\mu| I^1_2)-u^2(R,\varrho^2_{I^1_2}(\beta^{-I^1_2},\tilde\beta),\mu| I^1_2))>\gamma$ and 
accordingly, $\beta^2_{I^1_2}(R)\le\varepsilon$. Hence, ${\cal D}(I^1_1|\beta)(u^1(R,\varrho^1_{I^1_1}(\beta^{-I^1_1},\tilde\beta),\mu| I^1_1)-u^1(L,\varrho^1_{I^1_1}(\beta^{-I^1_1},\tilde\beta),\mu| I^1_1))>\gamma$ and subsequently, $\beta^1_{I^1_1}(L)\le\varepsilon$. The game has a WSRE given by $(R, F, L, G)$ with $\mu^{*1}_{I^2_1}(\langle R, L, G\rangle)=1$.

\noindent {\bf Case (2)}. Suppose that  ${\cal D}(I^2_1|\beta)(u^1(G,\varrho^1_{I^2_1}(\beta^{-I^2_1},\tilde\beta),\mu| I^2_1)-u^1(F,\varrho^1_{I^2_1}(\beta^{-I^2_1},\tilde\beta),\mu| I^2_1))>\gamma$. 
Then, $\beta^1_{I^2_1}(F)\le\varepsilon$ and $\tilde\beta^1_{I^2_1}(F)=0$. Thus, ${\cal D}(I^1_3|\beta)(u^3(F,\varrho^3_{I^1_3}(\beta^{-I^1_3},\tilde\beta),\mu| I^1_3)-u^3(G,\varrho^3_{I^1_3}(\beta^{-I^1_3},\tilde\beta),\mu| I^1_3))>\gamma$ and consequently, $\beta^3_{I^1_3}(G)\le\varepsilon$. Therefore, 
${\cal D}(I^1_2|\beta)(u^2(L,\varrho^2_{I^1_2}(\beta^{-I^1_2},\tilde\beta),\mu| I^1_2)-u^2(R,\varrho^2_{I^1_2}(\beta^{-I^1_2},\tilde\beta),\mu| I^1_2))>\gamma$ and accordingly, $\beta^2_{I^1_2}(R)\le\varepsilon$. Hence, ${\cal D}(I^1_1|\beta)(u^1(R,\varrho^1_{I^1_1}(\beta^{-I^1_1},\tilde\beta),\mu| I^1_1)-u^1(L,\varrho^1_{I^1_1}(\beta^{-I^1_1},\tilde\beta),\mu| I^1_1))>\gamma$ and subsequently, $\beta^1_{I^1_1}(L)\le\varepsilon$.
The game has a WSRE given by $(R, G, L, F)$ with $\mu^{*1}_{I^2_1}(\langle R, L,F\rangle)=1$.

\noindent {\bf Case (3)}. Suppose that  ${\cal D}(I^2_1|\beta)|u^1(F,\varrho^1_{I^2_1}(\beta^{-I^2_1},\tilde\beta),\mu| I^2_1)-u^1(G,\varrho^1_{I^2_1}(\beta^{-I^2_1},\tilde\beta),\mu| I^2_1)|\le\gamma$. 
Then,  \begin{equation}\label{edAwsreeqA}\setlength{\abovedisplayskip}{1.2pt}
\setlength{\belowdisplayskip}{1.2pt}
\beta^1_{I^1_1}(R)|\beta^3_{I^1_3}(F)-\frac{1}{2}|\le\frac{\gamma}{14}.\end{equation}

\noindent {\bf (a)}. Assume that ${\cal D}(I^1_3|\beta)(u^3(F,\varrho^3_{I^1_3}(\beta^{-I^1_3},\tilde\beta),\mu| I^1_3)-u^3(G,\varrho^3_{I^1_3}(\beta^{-I^1_3},\tilde\beta),\mu| I^1_3))>\gamma$.
Then, $\beta^3_{I^1_3}(G)\le\varepsilon$. Thus, $u^1(G,\varrho^1_{I^2_1}(\beta^{-I^2_1},\tilde\beta),\mu| I^2_1)-u^1(F,\varrho^1_{I^2_1}(\beta^{-I^2_1},\tilde\beta),\mu| I^2_1)>\gamma$ and consequently, $\tilde\beta^1_{I^2_1}(F)=0$. Therefore,  ${\cal D}(I^1_1|\beta)(u^1(R,\varrho^1_{I^1_1}(\beta^{-I^1_1},\tilde\beta),\mu| I^1_1)-u^1(L,\varrho^1_{I^1_1}(\beta^{-I^1_1},\tilde\beta),\mu| I^1_1))>\gamma$ and accordingly, $\beta^1_{I^1_1}(L)\le\varepsilon$. Hence it follows from Eq.~(\ref{edAwsreeqA}) that 
$(1-\varepsilon)^2\le \frac{1}{2}+\frac{1}{14}\gamma$. A contradiction occurs. The assumption is excluded.

\noindent {\bf (b)}. Assume that ${\cal D}(I^1_3|\beta)(u^3(G,\varrho^3_{I^1_3}(\beta^{-I^1_3},\tilde\beta),\mu| I^1_3)-u^3(F,\varrho^3_{I^1_3}(\beta^{-I^1_3},\tilde\beta),\mu| I^1_3))>\gamma$.
Then,  $\beta^3_{I^1_3}(F)\le\varepsilon$ and $\frac{1}{2}-\frac{\gamma}{14}>\beta^1_{I^2_1}(\gamma,\varepsilon; G)$. Thus, $u^1(F,\varrho^1_{I^2_1}(\beta^{-I^2_1},\tilde\beta),\mu| I^2_1)-u^1(G,\varrho^1_{I^2_1}(\beta^{-I^2_1},\tilde\beta),\mu| I^2_1)>\gamma$ and consequently, $\tilde\beta^1_{I^2_1}(G)=0$. Moreover, Eq.~(\ref{edAwsreeqA}) and $\beta^3_{I^1_3}(F)\le\varepsilon$ together imply that either  ${\cal D}(I^1_1|\beta)(u^1(L,\varrho^1_{I^1_1}(\beta^{-I^1_1},\tilde\beta),\mu| I^1_1)-u^1(R,\varrho^1_{I^1_1}(\beta^{-I^1_1},\tilde\beta),\mu| I^1_1))>\gamma$ or ${\cal D}(I^1_1|\beta)|(u^1(L,\varrho^1_{I^1_1}(\beta^{-I^1_1},\tilde\beta),\mu| I^1_1)-u^1(R,\varrho^1_{I^1_1}(\beta^{-I^1_1},\tilde\beta),\mu| I^1_1))|\le\gamma$.

\noindent {\bf (i)}. Consider the scenario that ${\cal D}(I^1_1|\beta)(u^1(L,\varrho^1_{I^1_1}(\beta^{-I^1_1},\tilde\beta),\mu| I^1_1)-u^1(R,\varrho^1_{I^1_1}(\beta^{-I^1_1},\tilde\beta),\mu| I^1_1))>\gamma$. Then, $\frac{3}{4}-\frac{1}{8}\gamma>\beta^2_{I^1_2}(R)$ and $\beta^1_{I^1_1}(R)\le\varepsilon$. Thus either 
${\cal D}(I^1_2|\beta)(u^2(L,\varrho^2_{I^1_2}(\beta^{-I^1_2},\tilde\beta),\mu| I^1_2)-u^2(R,\varrho^2_{I^1_2}(\beta^{-I^1_2},\tilde\beta),\mu| I^1_2))>\gamma$ or ${\cal D}(I^1_2|\beta)|u^2(L,\varrho^2_{I^1_2}(\beta^{-I^1_2},\tilde\beta),\mu| I^1_2)-u^2(R,\varrho^2_{I^1_2}(\beta^{-I^1_2},\tilde\beta),\mu| I^1_2)|\le\gamma$. 

\noindent $\bullet$ Presume that ${\cal D}(I^1_2|\beta)(u^2(R,\varrho^2_{I^1_2}(\beta^{-I^1_2},\tilde\beta),\mu| I^1_2)-u^2(L,\varrho^2_{I^1_2}(\beta^{-I^1_2},\tilde\beta),\mu| I^1_2))>\gamma$. Then, $\beta^2_{I^1_2}(R)\le\varepsilon$ and  $\beta^3_{I^1_3}(F)+ (\beta^3_{I^1_3}(G)-\beta^3_{I^1_3}(F))\beta^1_{I^2_1}(F)>\frac{3}{5}-\frac{\gamma}{10}$. The game has a class of WSREs given by $(L, (\beta^{*1}_{I^2_1}(F), 1-\beta^{*1}_{I^2_1}(F)), L, G)$ with $\beta^{*1}_{I^2_1}(F)>\frac{3}{5}$ and $\mu^{*1}_{I^2_1}(\langle R, L,F\rangle)=0$. 

\noindent $\bullet$ Presume that ${\cal D}(I^1_2|\beta)|u^2(R,\varrho^2_{I^1_2}(\beta^{-I^1_2},\tilde\beta),\mu| I^1_2)-u^2(L,\varrho^2_{I^1_2}(\beta^{-I^1_2},\tilde\beta),\mu| I^1_2)|\le\gamma$. Then, $\frac{3}{5}-\frac{\gamma}{10}\le\beta^3_{I^1_3}(F)+ (\beta^3_{I^1_3}(G)-\beta^3_{I^1_3}(F))\beta^1_{I^2_1}(F)\le\frac{3}{5}+\frac{\gamma}{10}$. The game has a class of WSREs given by $(L, (\frac{3}{5}, \frac{2}{5}), (\beta^{*2}_{I^1_2}(L), 1-\beta^{*2}_{I^1_2}(L)), G)$ with $\beta^{*2}_{I^1_2}(L)>\frac{1}{4}$ and $\mu^{*1}_{I^2_1}(\langle R, L,F\rangle)=0$. 

\noindent {\bf (ii)}. Consider the scenario that ${\cal D}(I^1_1|\beta)|u^1(R,\varrho^1_{I^1_1}(\beta^{-I^1_1},\tilde\beta),\mu| I^1_1)-u^1(L,\varrho^1_{I^1_1}(\beta^{-I^1_1},\tilde\beta),\mu| I^1_1)|\le\gamma$. Then, $|\beta^2_{I^1_2}(R)-\frac{3}{4}|\le\frac{1}{8}\gamma$. Thus, ${\cal D}(I^1_2|\beta)|u^2(L,\varrho^2_{I^1_2}(\beta^{-I^1_2},\tilde\beta),\mu| I^1_2)-u^2(R,\varrho^2_{I^1_2}(\beta^{-I^1_2},\tilde\beta),\mu| I^1_2)|\le\gamma$ and accordingly, $\frac{3}{5}-\frac{\gamma}{10}\le\beta^3_{I^1_3}(F)+ (\beta^3_{I^1_3}(G)-\beta^3_{I^1_3}(F))\beta^1_{I^2_1}(F)\le\frac{3}{5}+\frac{\gamma}{10}$. The game has a WSRE given by $(L, (\frac{3}{5}, \frac{2}{5}), (\frac{1}{4}, \frac{3}{4}), G)$ with  $\mu^{*1}_{I^2_1}(\langle R, L,F\rangle)=0$. 

\noindent {\bf (c)}. Assume that ${\cal D}(I^1_3|\beta)|u^3(G,\varrho^3_{I^1_3}(\beta^{-I^1_3},\tilde\beta),\mu| I^1_3)-u^3(F,\varrho^3_{I^1_3}(\beta^{-I^1_3},\tilde\beta),\mu| I^1_3)|\le\gamma$.
Then, $|\beta^1_{I^2_1}(G)-\frac{1}{2}|\le \frac{1}{14}\gamma$. Thus,  
${\cal D}(I^1_2|\beta)(u^2(R,\varrho^2_{I^1_2}(\beta^{-I^1_2},\tilde\beta),\mu| I^1_2)-u^2(L,\varrho^2_{I^1_2}(\beta^{-I^1_2},\tilde\beta),\mu| I^1_2))>\gamma$ and 
 accordingly, $\beta^2_{I^1_2}(L)\le\varepsilon$.
Thus,  ${\cal D}(I^1_1|\beta)(u^1(R,\varrho^1_{I^1_1}(\beta^{-I^1_1},\tilde\beta),\mu| I^1_1)-u^1(L,\varrho^1_{I^1_1}(\beta^{-I^1_1},\tilde\beta),\mu| I^1_1))>\gamma$ and consequently, $\beta^1_{I^1_1}(L)\le\varepsilon$. 
The game has a WSRE given by $(R, (\frac{1}{2}, \frac{1}{2}), R, 
(\frac{1}{2}, \frac{1}{2}))$ with  $\mu^{*1}_{I^2_1}(\langle R, L,F\rangle)=\frac{1}{2}$.

Cases (1)-(3) together show that the game has five classes of WSREs given by\\
(1). $(R, G, L, F)$ with $\mu^{*1}_{I^2_1}(\langle R, L,F\rangle)=1$.\\
(2). $(R, F, L, G)$ with $\mu^{*1}_{I^2_1}(\langle R, L, G\rangle)=1$.\\
 (3). $(L, (\beta^{*1}_{I^2_1}(F), 1-\beta^{*1}_{I^2_1}(F)), L, G)$ with $\beta^{*1}_{I^2_1}(F)>\frac{3}{5}$ and $\mu^{*1}_{I^2_1}(\langle R, L,F\rangle)=0$. \\
 (4). $(L, (\frac{3}{5}, \frac{2}{5}), (\beta^{*2}_{I^1_2}(L), 1-\beta^{*2}_{I^1_2}(L)), G)$ with $\beta^{*2}_{I^1_2}(L)\ge\frac{1}{4}$ and $\mu^{*1}_{I^2_1}(\langle R, L,F\rangle)=0$. \\
(5). $(R, (\frac{1}{2}, \frac{1}{2}), R, 
(\frac{1}{2}, \frac{1}{2}))$ with  $\mu^{*1}_{I^2_1}(\langle R, L,F\rangle)=\frac{1}{2}$.
}
\end{example}

\begin{example} {\em Consider the game in Fig.~\ref{TFigure2}. The information sets consist of $I^1_1=\{\emptyset\}$, $I^2_1=\{\langle U, R\rangle\}$, $I^1_2=\{\langle U\rangle\}$, and $I^1_3=\{\langle U, R, d\rangle, \langle D\rangle\}$. We denote by $(\beta(\gamma, \varepsilon), \mu(\gamma, \varepsilon),\tilde\beta(\gamma,\varepsilon))$
 an $\varepsilon$-perfect $\gamma$-WSRE. For simplicity of our exposition, we drop $(\gamma,\varepsilon)$ from $(\beta(\gamma, \varepsilon), \mu(\gamma, \varepsilon),\tilde\beta(\gamma,\varepsilon))$ so that $(\beta(\gamma, \varepsilon), \mu(\gamma, \varepsilon),\tilde\beta(\gamma,\varepsilon))$ becomes $(\beta, \mu,\tilde\beta)$ in the rest of this example.
 Each WSRE is presented in the form of 
 $((\beta^{*1}_{I^1_1}(U),\beta^{*1}_{I^1_1}(D)), (\beta^{*1}_{I^2_1}(u),\beta^{*1}_{I^2_1}(d)), (\beta^{*2}_{I^1_2}(L),\beta^{*2}_{I^1_2}(R)), (\beta^{*3}_{I^1_3}(X),\beta^{*3}_{I^1_3}(Y)))$. 
It follows from Fig.~\ref{TFigure2} that
{\small\[\setlength{\abovedisplayskip}{1.2pt}
\setlength{\belowdisplayskip}{1.2pt}
\begin{array}{l}
{\cal D}(I^1_1|\beta)=1,\; {\cal D}(I^2_1|\beta)=\beta^1_{I^1_1}(U),\; {\cal D}(I^1_2|\beta)=1,\;  {\cal D}(I^1_3|\beta)=1,\;
\mu^1_{I^1_1}(\emptyset)=1,\; \mu^1_{I^2_1}(\langle U, R\rangle)=1,\;\mu^2_{I^1_2}(\langle U\rangle)=1,\\

\mu^3_{I^1_3}(\langle U, R,d\rangle)=\frac{\beta^1_{I^1_1}(U) \beta^2_{I^1_2}(R) \beta^1_{I^2_1}(d)}{\beta^1_{I^1_1}(U) \beta^2_{I^1_2}(R) \beta^1_{I^2_1}(d)+\beta^1_{I^1_1}(D)},\;

\mu^3_{I^1_3}(\langle D\rangle)=\frac{\beta^1_{I^1_1}(D)}{\beta^1_{I^1_1}(U) \beta^2_{I^1_2}(R) \beta^1_{I^2_1}(d)+\beta^1_{I^1_1}(D)},\end{array}\]}

{\small\[\setlength{\abovedisplayskip}{1.2pt}
\setlength{\belowdisplayskip}{1.2pt}
\begin{array}{l}
u^1(U,\varrho^1_{I^1_1}(\beta^{-I^1_1},\tilde\beta),\mu|I^1_1)=2(\beta^2_{I^1_2}(L)+2\beta^2_{I^1_2}(R)\tilde\beta^1_{I^2_1}(u)+4\beta^2_{I^1_2}(R)\tilde\beta^1_{I^2_1}(d)\beta^3_{I^1_3}(Y)),\\

u^1(D,\varrho^1_{I^1_1}(\beta^{-I^1_1},\tilde\beta),\mu|I^1_1)=6\beta^3_{I^1_3}(Y),\\

u^1(u,\varrho^1_{I^2_1}(\beta^{-I^2_1},\tilde\beta),\mu|I^2_1)
=4,\; u^1(d,\varrho^1_{I^2_1}(\beta^{-I^2_1},\tilde\beta),\mu|I^2_1)
=8\beta^3_{I^1_3}(Y),\\

u^2(L,\varrho^2_{I^1_2}(\beta^{-I^1_2},\tilde\beta),\mu|I^1_2)
=6,\; u^2(R,\varrho^2_{I^1_2}(\beta^{-I^1_2},\tilde\beta),\mu|I^1_2)
=8\beta^1_{I^2_1}(d)\beta^3_{I^1_3}(Y),\\

u^3(X,\varrho^3_{I^1_3}(\beta^{-I^1_3}, \tilde\beta),\mu|I^1_3)
=10\mu^3_{I^1_3}(\langle U, R,d\rangle),\;u^3(Y,\varrho^3_{I^1_3}(\beta^{-I^1_3}, \tilde\beta),\mu|I^1_3)
=6\mu^3_{I^1_3}(\langle D\rangle).
\end{array}\]}

\noindent {\bf Case (1)}. Suppose that ${\cal D}(I^1_3|\beta)(u^3(X,\varrho^3_{I^1_3}(\beta^{-I^1_3}, \tilde\beta),\mu|I^1_3)-u^3(Y,\varrho^3_{I^1_3}(\beta^{-I^1_3}, \tilde\beta),\mu|I^1_3))>\gamma$. Then, $\beta^3_{I^1_3}(Y)\le\varepsilon$ and $\frac{5}{8}-\frac{1}{16}\gamma>\mu^3_{I^1_3}(\langle D\rangle)$, which yields $\beta^1_{I^1_1}(D)<\frac{10-\gamma}{6+\gamma}\beta^1_{I^1_1}(U)\beta^2_{I^1_2}(R)\beta^1_{I^2_1}(d)$. Thus, $u^1(u,\varrho^1_{I^2_1}(\beta^{-I^2_1},\tilde\beta),\mu|I^2_1)-u^1(d,\varrho^1_{I^2_1}(\beta^{-I^2_1},\tilde\beta),\mu|I^2_1)>\gamma$ and ${\cal D}(I^1_2|\beta)(u^2(L,\varrho^2_{I^1_2}(\beta^{-I^1_2},\tilde\beta),\mu|I^1_2)-u^2(R$, $\varrho^2_{I^1_2}(\beta^{-I^1_2},\tilde\beta),\mu|I^1_2))>\gamma$. Consequently, $\tilde\beta^1_{I^2_1}(d)=0$ and $\beta^2_{I^1_2}(R)\le\varepsilon$. Therefore, ${\cal D}(I^1_1|\beta)(u^1(U$, $\varrho^1_{I^1_1}(\beta^{-I^1_1},\tilde\beta),\mu|I^1_1)-u^1(D,\varrho^1_{I^1_1}(\beta^{-I^1_1},\tilde\beta),\mu|I^1_1))>\gamma$ and accordingly,  $\beta^1_{I^1_1}(D)\le\varepsilon$, which
 implies ${\cal D}(I^2_1|\beta)(u^1(u,\varrho^1_{I^2_1}(\beta^{-I^2_1},\tilde\beta),\mu|I^2_1)-u^1(u,\varrho^1_{I^2_1}(\beta^{-I^2_1},\tilde\beta),\mu|I^2_1))>\gamma$. Hence,  $\beta^1_{I^2_1}(d)\le\varepsilon$. The game has a WSRE given by $(U, u, L, X)$ with $\mu^{*3}_{I^1_3}(\langle D\rangle)<\frac{5}{8}$.

\noindent {\bf Case (2)}. Suppose that ${\cal D}(I^1_3|\beta)(u^3(Y,\varrho^3_{I^1_3}(\beta^{-I^1_3}, \tilde\beta),\mu|I^1_3)-u^3(X,\varrho^3_{I^1_3}(\beta^{-I^1_3}, \tilde\beta),\mu|I^1_3))>\gamma$. Then, $\beta^3_{I^1_3}(X)\le\varepsilon$ and $\mu^3_{I^1_3}(\langle D\rangle)>\frac{5}{8}+\frac{1}{16}\gamma$, which yields \begin{equation}
\label{edAwsrewsreeqB}\setlength{\abovedisplayskip}{1.2pt}
\setlength{\belowdisplayskip}{1.2pt}\beta^1_{I^1_1}(D)>\frac{10+\gamma}{6-\gamma}\beta^1_{I^1_1}(U)\beta^2_{I^1_2}(R)\beta^1_{I^2_1}(d).\end{equation} Thus, $u^1(d,\varrho^1_{I^2_1}(\beta^{-I^2_1},\tilde\beta),\mu|I^2_1)-u^1(u,\varrho^1_{I^2_1}(\beta^{-I^2_1},\tilde\beta),\mu|I^2_1)>\gamma$ and consequently, $\tilde\beta^1_{I^2_1}(u)=0$.

\noindent {\bf (a)}. Assume that ${\cal D}(I^1_2|\beta)(u^2(L,\varrho^2_{I^1_2}(\beta^{-I^1_2},\tilde\beta),\mu|I^1_2)-u^2(R,\varrho^2_{I^1_2}(\beta^{-I^1_2},\tilde\beta),\mu|I^1_2))>\gamma$. Then, $\beta^2_{I^1_2}(R)\le\varepsilon$  and $\frac{3}{4}-\frac{1}{8}\gamma>\beta^1_{I^2_1}(d)\beta^3_{I^1_3}(Y)$. Thus,  ${\cal D}(I^1_1|\beta)(u^1(D,\varrho^1_{I^1_1}(\beta^{-I^1_1},\tilde\beta),\mu|I^1_1)-u^1(U,\varrho^1_{I^1_1}(\beta^{-I^1_1},\tilde\beta),\mu|I^1_1))>\gamma$ and accordingly,  $\beta^1_{I^1_1}(U)\le\varepsilon$, which implies ${\cal D}(I^2_1|\beta)|u^1(u,\beta^{-I^2_1}$, $\mu|I^2_1)-u^1(d,\varrho^1_{I^2_1}(\beta^{-I^2_1},\tilde\beta),\mu|I^2_1)|\le\gamma$.
 The game has a class of WSREs given by $(D, (1-\beta^{*1}_{I^2_1}(d),\beta^{*1}_{I^2_1}(d)), L, Y)$ with $\beta^{*1}_{I^2_1}(d)<\frac{3}{4}$ and $\mu^{*3}_{I^1_3}(\langle D\rangle)=1$.

\noindent {\bf (b)}. Assume that ${\cal D}(I^1_2|\beta)(u^2(R,\varrho^2_{I^1_2}(\beta^{-I^1_2},\tilde\beta),\mu|I^1_2)-u^2(L,\varrho^2_{I^1_2}(\beta^{-I^1_2},\tilde\beta),\mu|I^1_2))>\gamma$. Then, $\beta^2_{I^1_2}(L)\le\varepsilon$  and $\beta^1_{I^2_1}(d)\beta^3_{I^1_3}(Y)>\frac{3}{4}+\frac{1}{8}\gamma$. Thus,  ${\cal D}(I^1_1|\beta)(u^1(U,\varrho^1_{I^1_1}(\beta^{-I^1_1},\tilde\beta),\mu|I^1_1)-u^1(D,\varrho^1_{I^1_1}(\beta^{-I^1_1},\tilde\beta),\mu|I^1_1))>\gamma$ and accordingly,  $\beta^1_{I^1_1}(D)\le\varepsilon$. Therefore, ${\cal D}(I^2_1|\beta)(u^1(d,\beta^{-I^2_1}$, $\mu|I^2_1)-u^1(u,\varrho^1_{I^2_1}(\beta^{-I^2_1},\tilde\beta),\mu|I^2_1))>\gamma$ and consequently, $\beta^1_{I^2_1}(u)\le\varepsilon$. These results together  with Eq.~(\ref{edAwsrewsreeqB}) bring us that $\varepsilon\ge\beta^1_{I^1_1}(D)>\frac{10+\gamma}{6-\gamma}\beta^1_{I^1_1}(U)\beta^2_{I^1_2}(R)\beta^1_{I^2_1}(d)\ge \frac{10+\gamma}{6-\gamma}(1-\varepsilon)(1-\varepsilon)(1-\varepsilon)$. A contradiction occurs and
the assumption cannot be sustained.

\noindent {\bf (c)}. Assume that ${\cal D}(I^1_2|\beta)|u^2(R,\varrho^2_{I^1_2}(\beta^{-I^1_2},\tilde\beta),\mu|I^1_2)-u^2(L,\varrho^2_{I^1_2}(\beta^{-I^1_2},\tilde\beta),\mu|I^1_2)|\le\gamma$. Then, $|\beta^1_{I^2_1}(d)\beta^3_{I^1_3}(Y)-\frac{3}{4}|\le\frac{1}{8}\gamma$, which yields 
${\cal D}(I^2_1|\beta)|u^1(d,\varrho^1_{I^2_1}(\beta^{-I^2_1},\tilde\beta),\mu|I^2_1)-u^1(u,\varrho^1_{I^2_1}(\beta^{-I^2_1},\tilde\beta),\mu|I^2_1)|\le\gamma$. Thus,  $\beta^1_{I^1_1}(U)|\frac{1}{2}- \beta^3_{I^1_3}(Y)|\le\frac{1}{8}\gamma$. Therefore either
${\cal D}(I^1_1|\beta)(u^1(D,\varrho^1_{I^1_1}(\beta^{-I^1_1},\tilde\beta),\mu|I^1_1)-u^1(U,\varrho^1_{I^1_1}(\beta^{-I^1_1},\tilde\beta),\mu|I^1_1))>\gamma$ or ${\cal D}(I^1_1|\beta)|u^1(D,\varrho^1_{I^1_1}(\beta^{-I^1_1},\tilde\beta),\mu|I^1_1)-u^1(U,\varrho^1_{I^1_1}(\beta^{-I^1_1},\tilde\beta),\mu|I^1_1)|\le\gamma$.

\noindent {\bf (i)}. Consider the scenario that ${\cal D}(I^1_1|\beta)(u^1(D,\varrho^1_{I^1_1}(\beta^{-I^1_1},\tilde\beta),\mu|I^1_1)-u^1(U,\varrho^1_{I^1_1}(\beta^{-I^1_1},\tilde\beta),\mu|I^1_1))>\gamma$. Then,  $\beta^2_{I^1_2}(R)<\frac{2}{3}-\frac{1-\beta^3_{I^1_3}(Y)+\frac{3}{2}\gamma}{3(4\beta^3_{I^1_3}(Y)-1)}$. The game has a class of WSREs given by $(D, (\frac{1}{4},\frac{3}{4}), (1-\beta^{*2}_{I^1_2}(R), \beta^{*2}_{I^1_2}(R)), Y)$ with $\beta^{*2}_{I^1_2}(R)<\frac{2}{3}$ and $\mu^{*3}_{I^1_3}(\langle D\rangle)=1$.

\noindent {\bf (ii)}. Consider the scenario that ${\cal D}(I^1_1|\beta)|u^1(D,\varrho^1_{I^1_1}(\beta^{-I^1_1},\tilde\beta),\mu|I^1_1)-u^1(U,\varrho^1_{I^1_1}(\beta^{-I^1_1},\tilde\beta),\mu|I^1_1)|\le\gamma$. Then,  $|\beta^2_{I^1_2}(R)(4\beta^3_{I^1_3}(Y)-1)-3\beta^3_{I^1_3}(Y)+1|\le\frac{1}{2}\gamma$. The game has a class of WSREs given by $(D, (\frac{1}{4},\frac{3}{4}), (\frac{1}{3}, \frac{2}{3}), Y)$ with $\mu^{*3}_{I^1_3}(\langle D\rangle)=1$.

\noindent {\bf Case (3)}. Suppose that ${\cal D}(I^1_3|\beta)|u^3(Y,\varrho^3_{I^1_3}(\beta^{-I^1_3}, \tilde\beta),\mu|I^1_3)-u^3(X,\varrho^3_{I^1_3}(\beta^{-I^1_3}, \tilde\beta),\mu|I^1_3)|\le\gamma$. Then,  $|\mu^3_{I^1_3}(\langle D\rangle)-\frac{5}{8}|\le\frac{1}{16}\gamma$, that is, {\small
\begin{equation}\label{edAwsrewsreeqC}\setlength{\abovedisplayskip}{1.2pt}
\setlength{\belowdisplayskip}{1.2pt}
\frac{10-\gamma}{6+\gamma}\beta^1_{I^1_1}(U) \beta^2_{I^1_2}(R) \beta^1_{I^2_1}(d)
\le\beta^1_{I^1_1}(D)\le \frac{10+\gamma}{6-\gamma}\beta^1_{I^1_1}(U) \beta^2_{I^1_2}(R) \beta^1_{I^2_1}(d).
\end{equation}}
Thus either ${\cal D}(I^1_1|\beta)(u^1(U,\varrho^1_{I^1_1}(\beta^{-I^1_1},\tilde\beta),\mu|I^1_1)-u^1(D,\varrho^1_{I^1_1}(\beta^{-I^1_1},\tilde\beta),\mu|I^1_1))>\gamma$ or ${\cal D}(I^1_1|\beta)|u^1(U$, $\varrho^1_{I^1_1}(\beta^{-I^1_1},\tilde\beta),\mu|I^1_1)-u^1(D,\varrho^1_{I^1_1}(\beta^{-I^1_1},\tilde\beta),\mu|I^1_1)|\le\gamma$.

\noindent {\bf (a)}. Assume that ${\cal D}(I^1_2|\beta)(u^2(L,\varrho^2_{I^1_2}(\beta^{-I^1_2},\tilde\beta),\mu|I^1_2)-u^2(R,\varrho^2_{I^1_2}(\beta^{-I^1_2},\tilde\beta),\mu|I^1_2))>\gamma$. Then, $\beta^2_{I^1_2}(R)\le\varepsilon$  and $\frac{3}{4}-\frac{1}{8}\gamma>\beta^1_{I^2_1}(d)\beta^3_{I^1_3}(Y)$. 

\noindent {\bf (i)}. Consider the scenario that ${\cal D}(I^1_1|\beta)(u^1(U,\varrho^1_{I^1_1}(\beta^{-I^1_1},\tilde\beta),\mu|I^1_1)-u^1(D,\varrho^1_{I^1_1}(\beta^{-I^1_1},\tilde\beta),\mu|I^1_1))>\gamma$.  Then, $\beta^3_{I^1_3}(Y)<\frac{\beta^2_{I^1_2}(L)+2\beta^2_{I^1_2}(R)\tilde\beta^1_{I^2_1}(u)-\frac{1}{2}\gamma}{3-4\beta^2_{I^1_2}(R)\tilde\beta^1_{I^2_1}(d)}$. Thus, ${\cal D}(I^2_1|\beta)(u^1(u,\beta^{-I^2_1}, \mu|I^2_1)-u^1(d,\varrho^1_{I^2_1}(\beta^{-I^2_1},\tilde\beta),\mu|I^2_1))>\gamma$ and consequently, $\beta^1_{I^2_1}(d)\le\varepsilon$. 
 The game has a class of WSREs given by $(U, u, L, (1-\beta^{*3}_{I^2_3}(Y), \beta^{*3}_{I^2_3}(Y)))$ with $\beta^{*3}_{I^2_3}(Y)<\frac{1}{3}$ and $\mu^{*3}_{I^1_3}(\langle D\rangle)=\frac{5}{8}$.
 
 \noindent {\bf (ii)}. Consider the scenario that ${\cal D}(I^1_1|\beta)|u^1(U,\varrho^1_{I^1_1}(\beta^{-I^1_1},\tilde\beta),\mu|I^1_1)-u^1(D,\varrho^1_{I^1_1}(\beta^{-I^1_1},\tilde\beta),\mu|I^1_1)|\le\gamma$.  Then, $|\beta^2_{I^1_2}(L)+2\beta^2_{I^1_2}(R)\tilde\beta^1_{I^2_1}(u)+(4\beta^2_{I^1_2}(R)\tilde\beta^1_{I^2_1}(d)-3)\beta^3_{I^1_3}(Y)|\le\frac{1}{2}\gamma$. Thus, ${\cal D}(I^2_1|\beta)(u^1(u,\beta^{-I^2_1}, \mu|I^2_1)-u^1(d,\varrho^1_{I^2_1}(\beta^{-I^2_1},\tilde\beta),\mu|I^2_1))>\gamma$ and consequently, $\beta^1_{I^2_1}(d)\le\varepsilon$. 
 The game has a class of WSREs given by $(U, u, L, (\frac{2}{3}, \frac{1}{3}))$ with $\mu^{*3}_{I^1_3}(\langle D\rangle)=\frac{5}{8}$.

\noindent {\bf (b)}. Assume that ${\cal D}(I^1_2|\beta)(u^2(R,\varrho^2_{I^1_2}(\beta^{-I^1_2},\tilde\beta),\mu|I^1_2)-u^2(L,\varrho^2_{I^1_2}(\beta^{-I^1_2},\tilde\beta),\mu|I^1_2))>\gamma$. Then, $\beta^2_{I^1_2}(L)\le\varepsilon$  and $\beta^1_{I^2_1}(d)\beta^3_{I^1_3}(Y)>\frac{3}{4}+\frac{1}{8}\gamma$. Thus, ${\cal D}(I^2_1|\beta)(u^1(d,\varrho^1_{I^2_1}(\beta^{-I^2_1},\tilde\beta),\mu|I^2_1)-u^1(u,\varrho^1_{I^2_1}(\beta^{-I^2_1},\tilde\beta),\mu|I^2_1))>\gamma$ and consequently, $\tilde\beta^1_{I^2_1}(u)=0$. 
These results together  with Eq.~(\ref{edAwsrewsreeqB}) ensure us that  ${\cal D}(I^1_1|\beta)|u^1(U$, $\varrho^1_{I^1_1}(\beta^{-I^1_1},\tilde\beta),\mu|I^1_1)-u^1(D,\varrho^1_{I^1_1}(\beta^{-I^1_1},\tilde\beta),\mu|I^1_1)|\le\gamma$. Thus, $\beta^2_{I^1_2}(L)+(4\beta^2_{I^1_2}(R)-3)\beta^3_{I^1_3}(Y)\le\frac{1}{2}\gamma$.
A contradiction occurs and
the assumption cannot be sustained.

\noindent {\bf (c)}. Assume that ${\cal D}(I^1_2|\beta)|u^2(R,\varrho^2_{I^1_2}(\beta^{-I^1_2},\tilde\beta),\mu|I^1_2)-u^2(L,\varrho^2_{I^1_2}(\beta^{-I^1_2},\tilde\beta),\mu|I^1_2)|\le\gamma$. Then, $|\beta^1_{I^2_1}(d)\beta^3_{I^1_3}(Y)-\frac{3}{4}|\le\frac{1}{8}\gamma$, which leads to $\frac{3}{4}-\frac{1}{8}\gamma\le \beta^3_{I^1_3}(Y)$ and
either ${\cal D}(I^2_1|\beta)(u^1(d,\varrho^1_{I^2_1}(\beta^{-I^2_1},\tilde\beta),\mu|I^2_1)-u^1(u$, $\varrho^1_{I^2_1}(\beta^{-I^2_1},\tilde\beta),\mu|I^2_1))>\gamma$ or
${\cal D}(I^2_1|\beta)|u^1(d,\varrho^1_{I^2_1}(\beta^{-I^2_1},\tilde\beta),\mu|I^2_1)-u^1(u,\varrho^1_{I^2_1}(\beta^{-I^2_1},\tilde\beta),\mu|I^2_1)|\le\gamma$. 

\noindent {\bf (i)}. Consider the scenario that ${\cal D}(I^2_1|\beta)(u^1(d,\varrho^1_{I^2_1}(\beta^{-I^2_1},\tilde\beta),\mu|I^2_1)-u^1(u,\varrho^1_{I^2_1}(\beta^{-I^2_1},\tilde\beta),\mu|I^2_1))>\gamma$. Then,  $\beta^1_{I^2_1}(u)\le\varepsilon$. Thus,   $u^1(d,\varrho^1_{I^2_1}(\beta^{-I^2_1},\tilde\beta),\mu|I^2_1)-u^1(u,\varrho^1_{I^2_1}(\beta^{-I^2_1},\tilde\beta),\mu|I^2_1)>\gamma$ and consequently, 
 $\tilde \beta^1_{I^2_1}(u)=0$. 

\noindent $\bullet$ Consider the situation that ${\cal D}(I^1_1|\beta)(u^1(U,\varrho^1_{I^1_1}(\beta^{-I^1_1},\tilde\beta),\mu|I^1_1)-u^1(D,\varrho^1_{I^1_1}(\beta^{-I^1_1},\tilde\beta),\mu|I^1_1))>\gamma$. Then, $ \beta^1_{I^1_1}(D)\le\varepsilon$. Thus it follows from Eq.~(\ref{edAwsrewsreeqC}) that ${\cal D}(I^1_1|\beta)(u^1(D,\varrho^1_{I^1_1}(\beta^{-I^1_1},\tilde\beta),\mu|I^1_1)-u^1(U$, $\varrho^1_{I^1_1}(\beta^{-I^1_1},\tilde\beta),\mu|I^1_1))>\gamma$. A contradiction occurs and the the situation is excluded.

\noindent $\bullet$ Consider the situation that ${\cal D}(I^1_1|\beta)|u^1(U,\varrho^1_{I^1_1}(\beta^{-I^1_1},\tilde\beta),\mu|I^1_1)-u^1(D,\varrho^1_{I^1_1}(\beta^{-I^1_1},\tilde\beta),\mu|I^1_1)|\le\gamma$. Then, 
$\frac{3\beta^3_{I^1_3}(Y)-1}{4\beta^3_{I^1_3}(Y)-1}-\frac{1}{2}\gamma\le\beta^2_{I^1_2}(R)\le\frac{3\beta^3_{I^1_3}(Y)-1}{4\beta^3_{I^1_3}(Y)-1}+\frac{1}{2}\gamma$.
The game has a  WSRE  given by $((\frac{24}{49}, \frac{25}{49}), d, (\frac{3}{8}, \frac{5}{8}), (\frac{2}{4},\frac{3}{4}))$ with $\mu^{*3}_{I^1_3}(\langle D\rangle)=\frac{5}{8}$

\noindent {\bf (ii)}. Consider the scenario that ${\cal D}(I^2_1|\beta)|u^1(d,\varrho^1_{I^2_1}(\beta^{-I^2_1},\tilde\beta),\mu|I^2_1)-u^1(u,\varrho^1_{I^2_1}(\beta^{-I^2_1},\tilde\beta),\mu|I^2_1)|\le\gamma$.  Then,  $\beta^1_{I^1_1}(U)\le\frac{\gamma}{8(\beta^3_{I^1_3}(Y)-\frac{1}{2})}$, which together with Eq.~(\ref{edAwsrewsreeqC}) yields
that $\beta^1_{I^1_1}(D)\le \frac{\gamma}{8(\beta^3_{I^1_3}(Y)-\frac{1}{2})}$. 
Thus, $1=\beta^1_{I^1_1}(U)+\beta^1_{I^1_1}(D)\le \frac{\gamma}{4(\beta^3_{I^1_3}(Y)-\frac{1}{2})}\le  \frac{\gamma}{4(\frac{3}{4}-\frac{1}{8}\gamma-\frac{1}{2})}=\frac{\gamma}{1-\frac{1}{2}\gamma}<1$. A contradiction occurs and the scenario is excluded.

The cases (1)-(3) together tell us that the game has five classes of WSREs given by
\begin{enumerate}
\item $(U, u, L, X)$ with $\mu^{*3}_{I^1_3}(\langle D\rangle)<\frac{5}{8}$;
\item  $(D, (1-\beta^{*1}_{I^2_1}(d),\beta^{*1}_{I^2_1}(d)), L, Y)$ with $\beta^{*1}_{I^2_1}(d)<\frac{3}{4}$ and $\mu^{*3}_{I^1_3}(\langle D\rangle)=1$;
\item $(D, (\frac{1}{4},\frac{3}{4}), (1-\beta^{*2}_{I^1_2}(R), \beta^{*2}_{I^1_2}(R)), Y)$ with $\beta^{*2}_{I^1_2}(R)\le\frac{2}{3}$ and $\mu^{*3}_{I^1_3}(\langle D\rangle)=1$; 
\item $(U, u, L, (1-\beta^{*3}_{I^1_3}(Y), \beta^{*3}_{I^1_3}(Y)))$ with $\beta^{*3}_{I^1_3}(Y)\le\frac{1}{3}$ and $\mu^{*3}_{I^1_3}(\langle D\rangle)=\frac{5}{8}$; and
\item $((\frac{24}{49}, \frac{25}{49}), d, (\frac{3}{8}, \frac{5}{8}), (\frac{2}{4},\frac{3}{4}))$ with $\mu^{*3}_{I^1_3}(\langle D\rangle)=\frac{5}{8}$.
\end{enumerate}
}
\end{example}

\section{\large Differentiable Path-Following Methods to Compute WSREs\label{spwsre}}

This section exploits Theorem~\ref{nscwsrethm1} to devise differentiable path-following methods to compute WSREs.\footnote{A general framework for establishing such a differentiable path-following method can be described as follows.
Step 1: Constitute with an extra variable $t\in (0,1]$ an artificial extensive-form game $\Gamma(t)$ in which each player at each of his information sets solves a convex optimization problem. The artificial game should continuously deform from a trivial game to the target game as $t$ descends from one to zero. $\Gamma(1)$ should have a unique equilibrium, which can be easily computed, and every convergent sequence of equilibria of $\Gamma(t_k)$, $k=1,2,\ldots$, with $\lim\limits_{k\to\infty}t_k=0$ should yield a desired equilibrium at its limit.
Step 2: Apply the optimality conditions to the convex optimization problems in the artificial game to acquire from the equilibrium condition an equilibrium system.
Step 3: Verify that the closure of the set of solutions of the equilibrium system contains a path-connected component that intersects both the levels of $t=1$ and $t=0$.  
Step 4: Ensure through an application of the Transversality Theorem in Eaves and Schmedders~\cite{Eaves and Schmedders (1999)} the existence of a smooth path that starts from the unique equilibrium at $t=1$ and approaches a desired equilibrium as $t\to 0$.
Step 5: Adapt a standard predictor-corrector method for numerically tracing the smooth path to a desired equilibrium.
} 
Let $\eta^0=(\eta^{0i}_{I^j_i}:i\in N,j\in M_i)$  be a given vector with $\eta^{0i}_{I^j_i}=(\eta^{0i}_{I^j_i}(a):a\in A(I^j_i))^\top$ such that $0<\eta^{0i}_{I^j_i}(a)$ and $\tau^i_{I^j_i}(\eta^0)=\sum\limits_{a\in A(I^j_i)}\eta^{0i}_{I^j_i}(a)< 1$. For $t\in [0,1]$,
let $\varpi(\beta,t)=(\varpi(\beta^q_{I^l_q},t):q\in N,l\in M_q)$ with $\varpi(\beta^q_{I^l_q},t)=(\varpi(\beta^q_{I^l_q}(a),t):a\in A(I^l_q))^\top$, where $\varpi(\beta^q_{I^l_q}(a),t)=(1-t(1-t)\tau^q_{I^l_q}(\eta^0))\beta^q_{I^l_q}(a)+t(1-t)\eta^{0q}_{I^l_q}(a)$. We denote by $\beta^0$ and $\tilde\beta^0$ two given totally mixed behavioral strategy profiles. 
When $\beta>0$,  we define $u^i(\beta|I^j_i)=\frac{u^i(\beta\land I^j_i)}{\omega(I^j_i|\beta)}$ and $u^i(a,\beta^{-I^j_i}|I^j_i)=\frac{u^i((a,\beta^{-I^j_i})\land I^j_i)}{\omega(I^j_i|\beta)}$.   Then, for $\beta>0$, it holds that $u^i(\beta,\mu(\beta)|I^j_i)=\frac{u^i(\beta\land I^j_i)}{\omega(I^j_i|\beta)}=u^i(\beta|I^j_i)$ and $u^i(a,\beta^{-I^j_i},\mu(\beta)|I^j_i)=\frac{u^i((a,\beta^{-I^j_i})\land I^j_i)}{\omega(I^j_i|\beta)}=u^i(a,\beta^{-I^j_i}|I^j_i)$. 
Given these notations, we establish barrier and penalty smooth paths to a WSRE.

\subsection{A Logarithmic-Barrier Smooth Path to a WSRE\label{logbwsre}}

For  $t\in (0,1]$, we constitute with $\varpi(\beta, t)$ a logarithmic-barrier extensive-form game $\Gamma_L(t)$ in which player $i$ at information set $I^j_i$ solves against a given pair  $(\hat\beta,\hat{\tilde\beta})$ the strictly convex optimization problem, {\footnotesize
\begin{equation}\setlength{\abovedisplayskip}{0pt}\setlength{\belowdisplayskip}{0pt}\label{logbop1wsre}
\begin{array}{rl}
\max\limits_{\beta^i_{I^j_i},\;\tilde\beta^i_{I^j_i}} & (1-t)\sum\limits_{a\in A(I^j_i)}(\beta^i_{I^j_i}(a){\cal D}(I^j_i|\varpi(\hat\beta,t)) +\tilde\beta^i_{I^j_i}(a))u^i(a,\varrho^i_{I^j_i}(\varpi(\hat\beta^{-I^j_i},t),\hat{\tilde\beta})| I^j_i)\\
& 
+t\sum\limits_{a\in A(I^j_i)}(\beta^{0i}_{I^j_i}(a)\ln\beta^i_{I^j_i}(a) +\tilde\beta^{0i}_{I^j_i}(a)\ln\tilde\beta^i_{I^j_i}(a))\\
\text{s.t.} & \sum\limits_{a\in A(I^j_i)}\beta^i_{I^j_i}(a)=1,\;\sum\limits_{a\in A(I^j_i)}\tilde\beta^i_{I^j_i}(a)=1.
\end{array}
\end{equation}}Applying the optimality conditions to the problem~(\ref{logbop1wsre}), we acquire from the equilibrium condition of $(\hat\beta,\hat{\tilde\beta})=(\beta,\tilde\beta)$ the equilibrium system of $\Gamma_L(t)$, {\footnotesize
\begin{equation}\setlength{\abovedisplayskip}{0pt}\setlength{\belowdisplayskip}{0pt}\label{logbes1wsre}
\begin{array}{l}
 (1-t){\cal D}(I^j_i|\varpi(\beta,t)) u^i(a,\varrho^i_{I^j_i}(\varpi(\beta^{-I^j_i},t),\tilde\beta)| I^j_i)
+t\beta^{0i}_{I^j_i}(a)/\beta^i_{I^j_i}(a)-\zeta^i_{I^j_i}=0,\;i\in N,j\in M_i,a\in A(I^j_i),\\

(1-t)u^i(a,\varrho^i_{I^j_i}(\varpi(\beta^{-I^j_i},t),\tilde\beta)| I^j_i)
+t\tilde\beta^{0i}_{I^j_i}(a)/\tilde\beta^i_{I^j_i}(a)-\tilde\zeta^i_{I^j_i}=0,\;
i\in N,j\in M_i,a\in A(I^j_i),\\

 \sum\limits_{a\in A(I^j_i)}\beta^i_{I^j_i}(a)=1,\;\sum\limits_{a\in A(I^j_i)}\tilde\beta^i_{I^j_i}(a)=1,\;i\in N, j\in M_i;\;

 0<\beta^i_{I^j_i},\;0<\tilde\beta^i_{I^j_i},\;i\in N, j\in M_i, a\in A(I^j_i).
\end{array}
\end{equation}}Multiplying $\beta^i_{I^j_i}(a)$ and $\tilde\beta^i_{I^j_i}(a)$ to equations in the first and second groups of the system~(\ref{logbes1wsre}), respectively, we come to the system, {\footnotesize
\begin{equation}\setlength{\abovedisplayskip}{0pt}\setlength{\belowdisplayskip}{0pt}\label{logbes2wsre}
\begin{array}{l}
 (1-t)\beta^i_{I^j_i}(a){\cal D}(I^j_i|\varpi(\beta,t)) u^i(a,\varrho^i_{I^j_i}(\varpi(\beta^{-I^j_i},t),\tilde\beta)| I^j_i)
+t\beta^{0i}_{I^j_i}(a)-\beta^i_{I^j_i}(a)\zeta^i_{I^j_i}=0,\\
\hspace{11.6cm}\;i\in N,j\in M_i,a\in A(I^j_i),\\

(1-t)\tilde\beta^i_{I^j_i}(a)u^i(a,\varrho^i_{I^j_i}(\varpi(\beta^{-I^j_i},t),\tilde\beta)| I^j_i)
+t\tilde\beta^{0i}_{I^j_i}(a)-\tilde\beta^i_{I^j_i}(a)\tilde\zeta^i_{I^j_i}=0,\;
i\in N,j\in M_i,a\in A(I^j_i),\\

 \sum\limits_{a\in A(I^j_i)}\beta^i_{I^j_i}(a)=1,\;\sum\limits_{a\in A(I^j_i)}\tilde\beta^i_{I^j_i}(a)=1,\;i\in N, j\in M_i;\;
 
 0<\beta^i_{I^j_i},\;0<\tilde\beta^i_{I^j_i},\;i\in N, j\in M_i, a\in A(I^j_i).
\end{array}
\end{equation}}Taking the sum of equations in the first group and the sum of equations in the second group
of the system~(\ref{logbes2wsre}) over $A(I^j_i)$, respectively, we get from a division operation the system, {\footnotesize
\begin{equation}\setlength{\abovedisplayskip}{0pt}\setlength{\belowdisplayskip}{0pt}\label{logbes3wsre}
\begin{array}{l}
\zeta^i_{I^j_i}= \frac{1}{\sum\limits_{a'\in A(I^j_i)}\beta^i_{I^j_i}(a')}( (1-t){\cal D}(I^j_i|\varpi(\beta,t))\sum\limits_{a'\in A(I^j_i)} \beta^i_{I^j_i}(a')u^i(a',\varrho^i_{I^j_i}(\varpi(\beta^{-I^j_i},t),\tilde\beta)| I^j_i)+t),\\

\tilde\zeta^i_{I^j_i}=\frac{1}{\sum\limits_{a'\in A(I^j_i)}\tilde\beta^i_{I^j_i}(a')}((1-t)\sum\limits_{a'\in A(I^j_i)}\tilde\beta^i_{I^j_i}(a')u^i(a',\varrho^i_{I^j_i}(\varpi(\beta^{-I^j_i},t),\tilde\beta)| I^j_i)
+t),\;i\in N,j\in M_i.
\end{array}
\end{equation}}We denote by $a^0_{I^j_i}$ a given reference action in $A(I^j_i)$.
Substituting $\zeta^i_{I^j_i}$ and $\tilde\zeta^i_{I^j_i}$ of the system~(\ref{logbes3wsre}) into the system~(\ref{logbes2wsre}), we arrive at the system with fewer variables, {\footnotesize
\begin{equation}\setlength{\abovedisplayskip}{0pt}\setlength{\belowdisplayskip}{0pt}\label{logbes4wsre}
\begin{array}{l}
 (1-t)\beta^i_{I^j_i}(a){\cal D}(I^j_i|\varpi(\beta,t))\sum\limits_{a'\in A(I^j_i)}\beta^i_{I^j_i}(a') (u^i(a,\varrho^i_{I^j_i}(\varpi(\beta^{-I^j_i},t),\tilde\beta)| I^j_i)-u^i(a',\varrho^i_{I^j_i}(\varpi(\beta^{-I^j_i},t),\tilde\beta)| I^j_i))\\
 \hspace{5cm}
+t(\beta^{0i}_{I^j_i}(a)\sum\limits_{a'\in A(I^j_i)}\beta^i_{I^j_i}(a')-\beta^i_{I^j_i}(a))=0,\;i\in N,j\in M_i,a\in A(I^j_i)\backslash\{a^0_{I^j_i}\},\\

(1-t)\tilde\beta^i_{I^j_i}(a)\sum\limits_{a'\in A(I^j_i)}\tilde\beta^i_{I^j_i}(a') (u^i(a,\varrho^i_{I^j_i}(\varpi(\beta^{-I^j_i},t),\tilde\beta)| I^j_i)-u^i(a',\varrho^i_{I^j_i}(\varpi(\beta^{-I^j_i},t),\tilde\beta)| I^j_i))\\
\hspace{5cm}+t(\tilde\beta^{0i}_{I^j_i}(a)\sum\limits_{a'\in A(I^j_i)}\tilde\beta^i_{I^j_i}(a')-\tilde\beta^i_{I^j_i}(a))=0,\;i\in N,j\in M_i,a\in A(I^j_i)\backslash\{a^0_{I^j_i}\},\\

 \sum\limits_{a\in A(I^j_i)}\beta^i_{I^j_i}(a)=1,\;\sum\limits_{a\in A(I^j_i)}\tilde\beta^i_{I^j_i}(a)=1,\;i\in N, j\in M_i;\;
 
 0<\beta^i_{I^j_i},\;0<\tilde\beta^i_{I^j_i},\;i\in N, j\in M_i, a\in A(I^j_i).
\end{array}
\end{equation}}Since $u^i(a,\varrho^i_{I^j_i}(\varpi(\beta^{-I^j_i},t),\tilde\beta)| I^j_i)=u^i((a,\varrho^i_{I^j_i}(\varpi(\beta^{-I^j_i},t),\tilde\beta))\land I^j_i)/\omega(I^j_i|\varpi(\beta, t))$,  we secure from the multiplications of $\omega(I^j_i|\varpi(\beta, t))$ to equations in the first and second groups of the system~(\ref{logbes4wsre}) the system, {\footnotesize
\begin{equation}\setlength{\abovedisplayskip}{0pt}\setlength{\belowdisplayskip}{0pt}\label{logbes5wsre}
\begin{array}{l}
 (1-t)\beta^i_{I^j_i}(a){\cal D}(I^j_i|\varpi(\beta,t))\sum\limits_{a'\in A(I^j_i)}\beta^i_{I^j_i}(a') (u^i((a,\varrho^i_{I^j_i}(\varpi(\beta^{-I^j_i},t),\tilde\beta))\land I^j_i)\\
 \hspace{0.7cm}-u^i((a',\varrho^i_{I^j_i}(\varpi(\beta^{-I^j_i},t),\tilde\beta))\land I^j_i))
+t\omega(I^j_i|\varpi(\beta, t))(\beta^{0i}_{I^j_i}(a)\sum\limits_{a'\in A(I^j_i)}\beta^i_{I^j_i}(a')-\beta^i_{I^j_i}(a))=0,\\
\hspace{10.4cm}i\in N,j\in M_i,a\in A(I^j_i)\backslash\{a^0_{I^j_i}\},\\

(1-t)\tilde\beta^i_{I^j_i}(a)\sum\limits_{a'\in A(I^j_i)}\tilde\beta^i_{I^j_i}(a') (u^i((a,\varrho^i_{I^j_i}(\varpi(\beta^{-I^j_i},t),\tilde\beta))\land I^j_i)-u^i((a',\varrho^i_{I^j_i}(\varpi(\beta^{-I^j_i},t),\tilde\beta))\land I^j_i))\\
 \hspace{2.7cm}
+t\omega(I^j_i|\varpi(\beta, t))(\tilde\beta^{0i}_{I^j_i}(a)\sum\limits_{a'\in A(I^j_i)}\tilde\beta^i_{I^j_i}(a')-\tilde\beta^i_{I^j_i}(a))=0,\;i\in N,j\in M_i,a\in A(I^j_i)\backslash\{a^0_{I^j_i}\},\\

 \sum\limits_{a\in A(I^j_i)}\beta^i_{I^j_i}(a)=1,\;\sum\limits_{a\in A(I^j_i)}\tilde\beta^i_{I^j_i}(a)=1,\;i\in N, j\in M_i;\;
 
 0<\beta^i_{I^j_i},\;0<\tilde\beta^i_{I^j_i},\;i\in N, j\in M_i, a\in A(I^j_i).
\end{array}
\end{equation}}When $t=1$, the system~(\ref{logbes5wsre}) has a unique solution given by $(\beta^*(1),\tilde\beta^*(1))$ with $\beta^{*i}_{I^j_i}(1; a)=\beta^{0i}_{I^j_i}(a)$ and $\tilde\beta^{*i}_{I^j_i}(1; a)=\tilde\beta^{0i}_{I^j_i}(a)$ for $i\in N, j\in M_i, a\in A(I^j_i)$.
Let $\widetilde{\mathscr{E}}_L$ be the set of all $(\beta,\tilde\beta,t)$ satisfying the system~(\ref{logbes5wsre}) with $t>0$ and $\mathscr{E}_L$ the closure of $\widetilde{\mathscr{E}}_L$. As a corollary of Theorem~\ref{nscwsrethm1}, one can draw the following conclusion.
\begin{corollary}\label{logbwsrethm1} {\em Every $(\beta^*,\tilde\beta^*,0)\in \mathscr{E}_L$ yields a WSRE.}
\end{corollary}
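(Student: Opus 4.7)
The plan is to invoke Theorem~\ref{nscwsrethm1}: I will show that any limit point $(\beta^*,\tilde\beta^*,0)\in\mathscr{E}_L$ can be accompanied by a KW-consistent belief system $\mu^*$ and auxiliary multipliers $(\lambda^*,\tilde\lambda^*,\zeta^*,\tilde\zeta^*)$ so that the polynomial system~(\ref{nscwsre1}) is satisfied, whence $(\beta^*,\mu^*)$ is a WSRE.

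First I pick a sequence $\{(\beta^k,\tilde\beta^k,t_k)\}\subset\widetilde{\mathscr{E}}_L$ with $t_k\in(0,1)$, $t_k\to 0$, and $(\beta^k,\tilde\beta^k)\to(\beta^*,\tilde\beta^*)$. Since $\varpi(\beta^k,t_k)$ is totally mixed for every $k$, the induced Bayesian belief $\mu^k:=\mu(\varpi(\beta^k,t_k))$ is well defined, and by compactness I extract a subsequence with $\mu^k\to\mu^*$; since $(\varpi(\beta^k,t_k),\mu^k)\in\widetilde\Psi$ and $\varpi(\beta^k,t_k)\to\beta^*$, the closure definition of $\Psi$ places $(\beta^*,\mu^*)\in\Psi$. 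Next I reintroduce the dual variables eliminated in passing from~(\ref{logbes1wsre}) to~(\ref{logbes5wsre}) by setting
\[
\lambda^{ki}_{I^j_i}(a):=t_k\beta^{0i}_{I^j_i}(a)/\beta^{ki}_{I^j_i}(a),\qquad \tilde\lambda^{ki}_{I^j_i}(a):=t_k\tilde\beta^{0i}_{I^j_i}(a)/\tilde\beta^{ki}_{I^j_i}(a),
\]
and taking $\zeta^{ki}_{I^j_i},\tilde\zeta^{ki}_{I^j_i}$ as defined in~(\ref{logbes3wsre}). Then $(\beta^k,\tilde\beta^k,\lambda^k,\tilde\lambda^k,\zeta^k,\tilde\zeta^k)$ satisfies the KKT conditions~(\ref{logbes1wsre}) of $\Gamma_L(t_k)$.

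From~(\ref{logbes3wsre}) the scalars $\zeta^{ki}_{I^j_i},\tilde\zeta^{ki}_{I^j_i}$ are uniformly bounded by the payoff magnitudes and $t_k\in[0,1]$, and rearranging the stationarity equations shows $\lambda^{ki}_{I^j_i}(a),\tilde\lambda^{ki}_{I^j_i}(a)$ are correspondingly bounded. Passing to a further subsequence so that all duals converge yields limits $\lambda^*,\tilde\lambda^*\ge 0$ since each entry is positive at finite $k$. Letting $k\to\infty$ in the stationarity equations, the conditional payoffs converge to $u^i(a,\varrho^i_{I^j_i}(\beta^{*,-I^j_i},\tilde\beta^*),\mu^*|I^j_i)$ by continuity together with the identity $u^i(a,\beta^{-I^j_i}|I^j_i)=u^i(a,\beta^{-I^j_i},\mu(\beta)|I^j_i)$ valid for $\beta>0$; this delivers the first two equation-blocks of~(\ref{nscwsre1}). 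The normalization $\sum_a\beta=\sum_a\tilde\beta=1$ and nonnegativity of $\beta^*,\tilde\beta^*$ pass directly, while complementary slackness follows from
\[
\beta^{*i}_{I^j_i}(a)\lambda^{*i}_{I^j_i}(a)=\lim_{k\to\infty}\beta^{ki}_{I^j_i}(a)\lambda^{ki}_{I^j_i}(a)=\lim_{k\to\infty}t_k\beta^{0i}_{I^j_i}(a)=0,
\]
and the analogous identity for $\tilde\beta^*,\tilde\lambda^*$. All requirements of~(\ref{nscwsre1}) are thus met, and Theorem~\ref{nscwsrethm1} concludes that $(\beta^*,\mu^*)$ is a WSRE.

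The principal obstacle is bookkeeping rather than depth: one must ensure that the ``$|I^j_i$'' notation used in~(\ref{logbes5wsre}), defined only for totally mixed strategies via $u^i(a,\beta^{-I^j_i}|I^j_i)=u^i((a,\beta^{-I^j_i})\land I^j_i)/\omega(I^j_i|\beta)$, matches the belief-based notation $u^i(a,\beta^{-I^j_i},\mu|I^j_i)$ of~(\ref{nscwsre1}) in the limit. This is precisely why I introduced $\mu^k=\mu(\varpi(\beta^k,t_k))$ and passed to a subsequence with $\mu^k\to\mu^*$: the consistency of $(\beta^*,\mu^*)$ together with joint continuity of the belief-based conditional payoff in $(\beta,\mu)$ converts the totally mixed expressions into the required static ones, so that the limiting multipliers certify $(\beta^*,\mu^*)$ as a WSRE.
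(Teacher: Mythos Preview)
Your proof is correct and follows exactly the route the paper intends: the paper states this result as an immediate corollary of Theorem~\ref{nscwsrethm1} without giving details, and you have supplied precisely the limiting argument (consistency of $(\beta^*,\mu^*)$ via $\mu^k=\mu(\varpi(\beta^k,t_k))$, boundedness and extraction of the dual variables, and the complementary-slackness identity $\beta^{ki}_{I^j_i}(a)\lambda^{ki}_{I^j_i}(a)=t_k\beta^{0i}_{I^j_i}(a)\to0$) that this entails. Your observation that $\omega(I^j_i\mid\cdot)$ and the induced belief at $I^j_i$ are unaffected by the $\varrho^i_{I^j_i}$-substitution, so that the totally-mixed conditional payoff coincides with the belief-based one evaluated at $\mu^k$, is the only point requiring care, and you handle it correctly.
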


To continue the developments, we need a fixed point theorem from Mas-Colell~\cite{Mas-Colell (1974)}.

\begin{theorem}[Mas-Colell's fixed point theorem] \label{fpthm} {\em Let $C$ be a nonempty, compact and convex subset of $\mathbb{R}^m$ and $F: C\times [0,1]\rightarrow C$ an upper-hemicontinuous mapping. Then the set $H=\{(z,t)\in C\times [0,1]\;|\;z\in F(z,t)\}$ contains a connected set $H^c$ such that $(C\times\{1\})\cap H^c\neq\emptyset$ and $(C\times\{0\})\cap H^c\neq\emptyset$.}\end{theorem}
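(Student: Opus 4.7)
The plan is to follow the strategy of Mas-Colell (1974), which proceeds in three stages: establishing nonemptiness and compactness of the fixed-point graph, setting up a contradiction via a topological separation lemma, and closing the argument via a continuous approximation of $F$ combined with a homotopy-invariance argument for Brouwer-type fixed points. Throughout, I read the hypothesis ``upper-hemicontinuous mapping'' in the standard correspondence sense, so that $F(z,t)$ is implicitly a nonempty convex compact subset of $C$ with closed graph.

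First I would verify that $H=\{(z,t)\in C\times[0,1] : z\in F(z,t)\}$ is a nonempty compact subset of the compact convex set $C\times[0,1]$. For each fixed $t$, Kakutani's fixed-point theorem applied to the correspondence $z\mapsto F(z,t)$ yields some $z_t\in F(z_t,t)$, so $H$ meets every slice and in particular meets both $C\times\{0\}$ and $C\times\{1\}$. Closedness of $H$ follows directly from the closed-graph characterization of upper-hemicontinuity: if $(z^k,t^k)\to(z^*,t^*)$ with $z^k\in F(z^k,t^k)$, then $z^*\in F(z^*,t^*)$. Hence $H$ is compact.

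Next I would argue by contradiction. Suppose no connected component of $H$ meets both $A=H\cap(C\times\{0\})$ and $B=H\cap(C\times\{1\})$. Invoke the classical topological separation lemma (Whyburn / Eilenberg--Steenrod): in a compact Hausdorff space $H$, if two disjoint closed subsets $A,B$ are not joined by any component, then $H$ decomposes as a disjoint union $H=H_0\sqcup H_1$ of closed sets with $A\subseteq H_0$ and $B\subseteq H_1$. By normality of $C\times[0,1]$, enclose $H_0$ and $H_1$ in disjoint open neighborhoods $U_0\supseteq H_0$ and $U_1\supseteq H_1$ with $\overline{U_0}\cap\overline{U_1}=\emptyset$. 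The compact set $K=(C\times[0,1])\setminus(U_0\cup U_1)$ is then disjoint from $H$, so $z\notin F(z,t)$ on $K$; by upper-hemicontinuity, $\inf_{(z,t)\in K}\operatorname{dist}(z,F(z,t))>0$.

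The main obstacle is deriving a contradiction from this separation, and this is where the bulk of the technical work lies. My plan is to apply Cellina's approximation theorem to obtain, for each sufficiently small $\varepsilon>0$, a continuous single-valued selection $f_\varepsilon:C\times[0,1]\to C$ whose graph lies in an $\varepsilon$-neighborhood of the graph of $F$; for $\varepsilon$ small enough all fixed points of $f_\varepsilon(\cdot,t)$ lie in $U_0\cup U_1$, with those at $t=0$ lying in $U_0$ and those at $t=1$ in $U_1$. Consider the displacement map $\Phi_\varepsilon(z,t)=z-f_\varepsilon(z,t)$ on the compact manifold-with-boundary $C\times[0,1]$; by a Sard-type argument one can perturb $f_\varepsilon$ so that $0$ is a regular value of $\Phi_\varepsilon$, whence $\Phi_\varepsilon^{-1}(0)$ is a disjoint union of smooth one-dimensional manifolds. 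Classification of compact one-manifolds with boundary forces each component either to be a closed loop in the interior or to have its two boundary endpoints on $\partial(C\times[0,1])$; homotopy invariance of the Brouwer fixed-point index rules out the possibility that every boundary endpoint at $t=0$ is paired only with another endpoint at $t=0$, so some component must connect $t=0$ to $t=1$. But such a component is a connected arc of fixed points of $f_\varepsilon$ passing from $U_0$ to $U_1$, contradicting $U_0\cap U_1=\emptyset$. Executing the Cellina approximation together with the Sard/regular-value step rigorously is the hard part; everything else is bookkeeping.
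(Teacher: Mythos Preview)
The paper does not prove this theorem at all: it is quoted verbatim as an external result from Mas-Colell (1974) and then immediately applied (as a corollary) to the equilibrium set $\mathscr{E}_L$. So there is no ``paper's proof'' to compare against; the statement functions purely as a cited black box.

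That said, your sketch is broadly along the lines of the Browder--Mas-Colell argument, but one step is not right as written. You treat $C\times[0,1]$ as a ``compact manifold-with-boundary'' and then invoke Sard and the classification of one-manifolds on the zero set of $\Phi_\varepsilon$. The set $C$ is merely compact convex in $\mathbb{R}^m$, not a smooth manifold with boundary, and even after identifying $C$ with a closed ball your perturbed map need not be smooth; also your index/parity claim (``rules out the possibility that every boundary endpoint at $t=0$ is paired only with another endpoint at $t=0$'') needs a genuine mod-$2$ or integer degree argument rather than a bare assertion. The cleaner route---and the one Mas-Colell actually uses---bypasses all the one-manifold machinery: after the Whyburn separation $H=H_0\sqcup H_1$, take a Urysohn function $\lambda:C\times[0,1]\to[0,1]$ with $\lambda\equiv 0$ near $H_0$ and $\lambda\equiv 1$ near $H_1$, and consider the reparametrized correspondence $G(z)=F(z,\lambda(z,0))$ (or an analogous twisted map). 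One then checks directly that $G$ is upper-hemicontinuous with nonempty convex values but has no fixed point, contradicting Kakutani. This avoids Cellina, Sard, and smoothness entirely.
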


As a corollary of Theorem~\ref{fpthm}, we come to the following result. 
\begin{corollary} \label{ccthm1} {\em $\mathscr{E}_L$ contains a unique connected component $\mathscr{E}_L^c$ such that $\mathscr{E}_L^c\cap(\triangle\times\triangle\times\{0\})\ne\emptyset$ and $\mathscr{E}_A^c\cap(\triangle\times\triangle\times\{1\})\ne\emptyset$.}
\end{corollary}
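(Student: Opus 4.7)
The plan is to construct a fixed-point correspondence whose graph coincides with $\mathscr{E}_L$ on the strip $t\in(0,1]$ and then invoke Mas-Colell's theorem. Set $C=\triangle\times\triangle$, which is nonempty, compact, and convex, and define $F:C\times[0,1]\to 2^C$ by letting $F(\hat\beta,\hat{\tilde\beta},t)$ be the set of maximizers of the aggregated objective in~(\ref{logbop1wsre}) over all players and information sets, with reference profile $(\hat\beta,\hat{\tilde\beta})$. For $t\in(0,1]$, the logarithmic-barrier term renders the objective strictly concave and keeps every maximizer in the relative interior of each simplex, so $F(\cdot,t)$ is single-valued and continuous by the maximum theorem. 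At $t=0$, the barrier vanishes and the problem becomes linear on each simplex, so $F(\cdot,0)$ is a best-response correspondence with nonempty, compact, convex values.

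The delicate point, which I regard as the main technical obstacle, is upper-hemicontinuity of $F$ jointly at $t=0$: given $(\hat\beta^n,\hat{\tilde\beta}^n,t_n)\to(\hat\beta^*,\hat{\tilde\beta}^*,0)$ with $t_n>0$ and $(\beta^n,\tilde\beta^n)\in F(\hat\beta^n,\hat{\tilde\beta}^n,t_n)\to(\beta^*,\tilde\beta^*)$, one must show $(\beta^*,\tilde\beta^*)\in F(\hat\beta^*,\hat{\tilde\beta}^*,0)$. I would argue this by a standard interior-point estimate: for any maximizer of the limiting linear problem one picks a sequence of strictly positive approximations, compares their objective with the (optimal) value at $(\beta^n,\tilde\beta^n)$, and passes to the limit using the upper bound $t_n\cdot(\text{barrier value at }\beta^n)\le t_n B_{\max}\to 0$, where $B_{\max}$ is the maximum of the barrier term on the simplex. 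Passing $n\to\infty$ and then letting the approximation tighten yields the required best-response property of $(\beta^*,\tilde\beta^*)$.

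Next I would identify the fixed-point graph with $\mathscr{E}_L$. Applying the KKT conditions to~(\ref{logbop1wsre}) and imposing the equilibrium condition $(\hat\beta,\hat{\tilde\beta})=(\beta,\tilde\beta)$ recovers the system~(\ref{logbes1wsre}); since $\sum_{a'}\beta^i_{I^j_i}(a')=1$ on $\triangle$, the algebraic manipulations leading to~(\ref{logbes5wsre}) are reversible for $t>0$ and introduce no spurious solutions. Hence $H\cap(C\times(0,1])=\widetilde{\mathscr{E}}_L$. Applying Mas-Colell's theorem (Theorem~\ref{fpthm}) to the upper-hemicontinuous correspondence $F$ then produces a connected subset $H^c\subseteq H$ with $H^c\cap(C\times\{0\})\ne\emptyset$ and $H^c\cap(C\times\{1\})\ne\emptyset$. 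Because the projection of $H^c$ to $[0,1]$ is connected and contains both endpoints, $H^c$ meets every intermediate level, so $H^c\cap(C\times(0,1])\subseteq\widetilde{\mathscr{E}}_L$; the closure of this subset is contained in $\mathscr{E}_L$, and its $t=0$ limit points lie in $\mathscr{E}_L$ by closedness. Let $\mathscr{E}_L^c$ denote the connected component of $\mathscr{E}_L$ containing $\overline{H^c}$; it inherits the two nonempty endpoint intersections.

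For uniqueness I would invoke the structure of the $t=1$ slice. Setting $t=1$ annihilates the payoff factor $(1-t)$ in~(\ref{logbop1wsre}), reducing the problem to the strictly concave programme of maximizing $\sum_{a\in A(I^j_i)}\bigl(\beta^{0i}_{I^j_i}(a)\ln\beta^i_{I^j_i}(a)+\tilde\beta^{0i}_{I^j_i}(a)\ln\tilde\beta^i_{I^j_i}(a)\bigr)$ subject to the simplex constraints, whose unique maximizer is $(\beta^0,\tilde\beta^0)$, as already recorded in the text below~(\ref{logbes5wsre}). Thus $\mathscr{E}_L\cap(C\times\{1\})=\{(\beta^0,\tilde\beta^0,1)\}$ is a singleton. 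Any connected component of $\mathscr{E}_L$ meeting $C\times\{1\}$ must therefore contain this single point, and two connected components sharing a point coincide; hence $\mathscr{E}_L^c$ is the unique connected component of $\mathscr{E}_L$ intersecting both $C\times\{0\}$ and $C\times\{1\}$.
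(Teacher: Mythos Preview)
Your approach matches the paper's intent—the paper offers no argument beyond citing Mas-Colell's theorem, and you are supplying the standard construction (best-response correspondence, verify Mas-Colell's hypotheses, extract the component, uniqueness from the singleton at $t=1$). The uniqueness argument is clean and correct.

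There is, however, a genuine technical gap in how you set up $F$ at $t=0$. The objective in~(\ref{logbop1wsre}) is written via the \emph{conditional} payoff $u^i(a,\varrho^i_{I^j_i}(\varpi(\hat\beta^{-I^j_i},t),\hat{\tilde\beta})\,|\,I^j_i)$, which the paper defines only for totally mixed profiles as the quotient $u^i((a,\cdot)\land I^j_i)/\omega(I^j_i|\varpi(\hat\beta,t))$. For $t\in(0,1)$ the perturbation $t(1-t)\eta^0$ forces $\varpi(\hat\beta,t)>0$, so the denominator is positive; at $t=1$ the factor $(1-t)$ kills the payoff term outright. But at $t=0$ one has $\varpi(\hat\beta,0)=\hat\beta$, and for $\hat\beta$ on the boundary of $\triangle$ the denominator $\omega(I^j_i|\hat\beta)$ can vanish, so your $F(\cdot,0)$ is undefined on part of $C$ and your ``linear part'' $L$ is discontinuous there—the very region where you need upper hemicontinuity. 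The remedy is precisely the manipulation the paper performs in going from~(\ref{logbes4wsre}) to~(\ref{logbes5wsre}): build $F$ from the polynomial payoffs $u^i((a,\cdot)\land I^j_i)$ rather than the quotient, so that all coefficients extend continuously to $C\times[0,1]$ and your interior-point estimate goes through. A smaller issue in the same spirit: you take ``the connected component of $\mathscr{E}_L$ containing $\overline{H^c}$'', but $\overline{H^c}\subseteq\mathscr{E}_L$ is not automatic, since fixed points of $F(\cdot,0)$ need not be limits of $t>0$ solutions; it is safer to take the component of $\mathscr{E}_L$ through the point $(\beta^0,\tilde\beta^0,1)$, which lies in both $H^c$ and $\mathscr{E}_L$, and then argue that this component reaches $t=0$.
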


Let $m_0=\sum\limits_{i\in N}\sum\limits_{j\in M_i}|A(I^j_i)|$. 
We denote by $\widetilde{\mathscr{G}}_0$ the set of all $(\beta,\tilde\beta,t)$ satisfying the system~(\ref{logbes5wsre}) with $0<t\le 1$ and by $\mathscr{G}_0$ the closure of $\widetilde{\mathscr{G}}_0$.
Let $g_0(\beta,\tilde\beta,t)$ denote the left-hand sides of equations in the system~(\ref{logbes5wsre}). Subtracting a perturbation term of $t(1-t)\alpha\in\mathbb{R}^{2m_0}$ from $g_0(\beta,\tilde\beta,t)$, we arrive at the system,
\(g_0(\beta,\tilde\beta,t)-t(1-t)\alpha=0.\)
Let $g(\beta,\tilde\beta,t;\alpha)=g_0(\beta,\tilde\beta,t)-t(1-t)\alpha$. For any given $\alpha\in\mathbb{R}^{2m_0}$, we denote $g_\alpha(\beta,\tilde\beta,t)=g(\beta,\tilde\beta,t;\alpha)$.
Let $\widetilde{\mathscr{G}}_\alpha=\{(\beta,\tilde\beta,t)|g_\alpha(\beta,\tilde\beta,t)=0\text{ with }0< t\le 1\}$ and $\mathscr{G}_\alpha$ be the closure of $\widetilde {\mathscr{G}}_\alpha$.
One can obtain that $\mathscr{G}_\alpha$ is a compact set and $g(\beta,\tilde\beta,t;\alpha)$ is continuously differentiable on $\mathbb{R}^{2m_0}\times (0,1)\times \mathbb{R}^{2m_0}$ with its Jacobian matrix given by $D_{\alpha}g(\beta,\tilde\beta,t;\alpha)=t(1-t)I_{2m_0}$, where $I_{2m_0}$ is an identity matrix. Clearly, as $0<t<1$, $D_{\alpha}g(\beta,\tilde\beta,t;\alpha)$ is nonsingular. Furthermore, when $t=1$, the Jacobian matrix, $D_{(\beta,\;\tilde\beta)}g_0(\beta,\tilde\beta,1)$, is nonsingular. When $\|\alpha\|>0$ is sufficiently small, Corollary~\ref{ccthm1} together with the continuity of $g(\beta,\tilde\beta,t;\alpha)$  tells us that there is a unique connected component in $\mathscr{G}_\alpha$ intersecting both $\mathbb{R}^{2m_0}\times\{1\}$ and $\mathbb{R}^{2m_0}\times\{0\}$.
These results together with the well-known transversality theorem in Eaves and Schmedders~\cite{Eaves and Schmedders (1999)} and implicit function theorem lead us to the following conclusion.
\begin{theorem}\label{efgthm6}{\em For generic choice of $\alpha$ with sufficiently small $\|\alpha\|>0$, there exists a smooth path  $P_{\alpha}\subseteq\mathscr{G}_{\alpha}$ that starts from the unique solution $(\beta^*(1),\tilde\beta^*(1),1)$ on the level of $t=1$ and ends at a WSRE of $\Gamma$ on the target level of $t=0$.}
\end{theorem}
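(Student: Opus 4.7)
The plan is to invoke the parametric transversality theorem of Eaves and Schmedders~\cite{Eaves and Schmedders (1999)} together with the implicit function theorem to show that, for generic small $\alpha$, a smooth one-dimensional manifold of solutions of $g(\beta,\tilde\beta,t;\alpha)=0$ emerges from the unique solution $(\beta^*(1),\tilde\beta^*(1),1)$ on the level $t=1$ and extends all the way down to a limit point on the level $t=0$, which by Corollary~\ref{logbwsrethm1} must correspond to a WSRE of $\Gamma$.

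I would first verify the regularity hypotheses on the open slab $0<t<1$. As noted in the excerpt, $g$ is continuously differentiable there and $D_\alpha g(\beta,\tilde\beta,t;\alpha)=-t(1-t)I_{2m_0}$ is nonsingular, so $0$ is a regular value of $g$ viewed as a map in all of its variables. The transversality theorem therefore guarantees that for Lebesgue-a.e.\ (hence generic) $\alpha$, $0$ is a regular value of the restricted map $g_\alpha$ on $\mathbb{R}^{2m_0}\times(0,1)$, so that $\widetilde{\mathscr{G}}_\alpha\cap(\mathbb{R}^{2m_0}\times(0,1))$ is a smooth one-dimensional submanifold.

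Next, I would analyze the two boundary levels. At $t=1$, the reduced Jacobian $D_{(\beta,\tilde\beta)}g_0(\beta,\tilde\beta,1)$ is nonsingular at the unique solution $(\beta^*(1),\tilde\beta^*(1),1)$. By the implicit function theorem, this solution persists for all sufficiently small $\|\alpha\|$ and is joined locally to a smooth curve in $\mathscr{G}_\alpha$. Splicing this curve to the generic one-manifold structure on the interior produces a connected smooth arc $P_\alpha\subseteq\mathscr{G}_\alpha$ emanating from $(\beta^*(1),\tilde\beta^*(1),1)$. Compactness of $\mathscr{G}_\alpha$ prevents $P_\alpha$ from escaping to infinity, and the nonsingularity of $D_\alpha g$ throughout $0<t<1$ combined with the manifold structure forbids internal loops, so $P_\alpha$ must exit the slab at some boundary level.

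The main obstacle, as I see it, is ruling out that the perturbed path $P_\alpha$ bends back to exit at $t=1$ (rather than at $t=0$) once $\alpha$ is turned on. To handle this, I would combine Corollary~\ref{ccthm1}, which supplies a connected component $\mathscr{E}_L^c$ of the unperturbed set $\mathscr{E}_L$ meeting both boundary slices $t=0$ and $t=1$, with the continuity of $g(\beta,\tilde\beta,t;\alpha)$ in $\alpha$ and the compactness of $\mathscr{G}_\alpha$. A standard upper-semicontinuity argument for solution sets of parametrized systems then yields that, for all $\alpha$ with $\|\alpha\|$ sufficiently small, the connected component of $\mathscr{G}_\alpha$ containing $(\beta^*(1),\tilde\beta^*(1),1)$ still intersects $\mathbb{R}^{2m_0}\times\{0\}$. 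Combined with the interior smoothness and the implicit-function-theorem behavior at $t=1$, this identifies $P_\alpha$ as the desired smooth path, and Corollary~\ref{logbwsrethm1} identifies its terminal point on $t=0$ as a WSRE of $\Gamma$.
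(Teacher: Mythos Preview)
Your proposal is correct and follows essentially the same route the paper takes: the paper does not give a standalone proof of this theorem but simply notes that the preceding observations (nonsingularity of $D_\alpha g$ on $0<t<1$, nonsingularity of $D_{(\beta,\tilde\beta)}g_0$ at $t=1$, compactness of $\mathscr{G}_\alpha$, and the persistence for small $\|\alpha\|$ of a connected component joining the two levels via Corollary~\ref{ccthm1} and continuity) ``together with the well-known transversality theorem in Eaves and Schmedders~\cite{Eaves and Schmedders (1999)} and implicit function theorem lead us to the following conclusion.'' Your write-up is precisely a fleshed-out version of that sketch, including the explicit handling of the only nontrivial point---ruling out that the perturbed path returns to $t=1$---which you address with the same upper-semicontinuity/continuity-in-$\alpha$ argument the paper alludes to.
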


One can adapt a standard predictor-corrector method in Eaves and Schmedders~\cite{Eaves and Schmedders (1999)} for numerically tracing the smooth paths specified by the system~(\ref{logbes5wsre}) to find a WSRE. The smooth paths specified by the system~(\ref{logbes5wsre}) for the games in Figs.~\ref{TFigure1}-\ref{TFigure2} are illustrated in Figs.~\ref{TFigure1MLogB1}-\ref{TFigure2MLogB2}.
\begin{figure}[H]
    \centering
    \begin{minipage}{0.49\textwidth}
        \centering
        \includegraphics[width=0.80\textwidth, height=0.15\textheight]{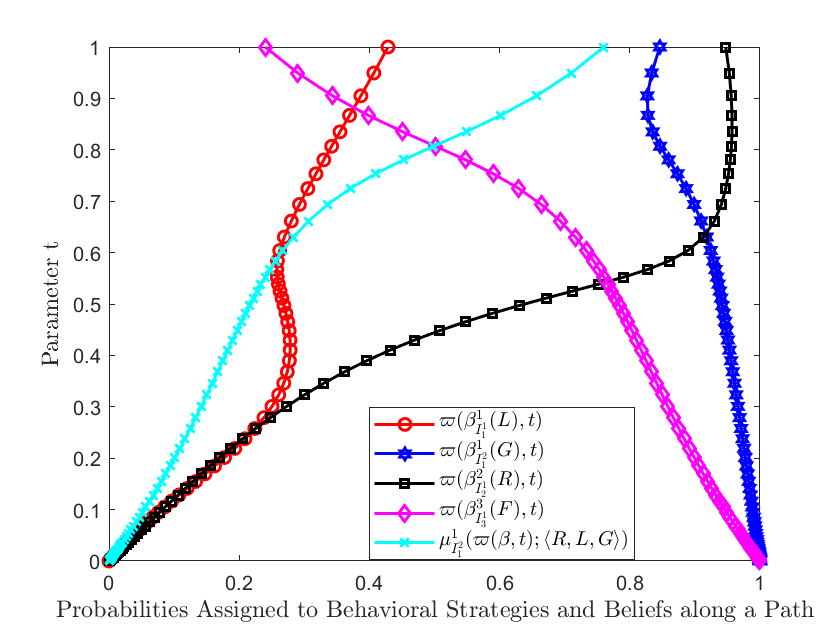}
               \caption{\label{TFigure1MLogB1}\scriptsize The Smooth Path of $\varpi(\beta,t)$ Specified by the System~(\ref{logbes5wsre}) for the Game in Fig.~\ref{TFigure1}}
\end{minipage}\hfill
    \begin{minipage}{0.49\textwidth}
        \centering
        \includegraphics[width=0.80\textwidth, height=0.15\textheight]{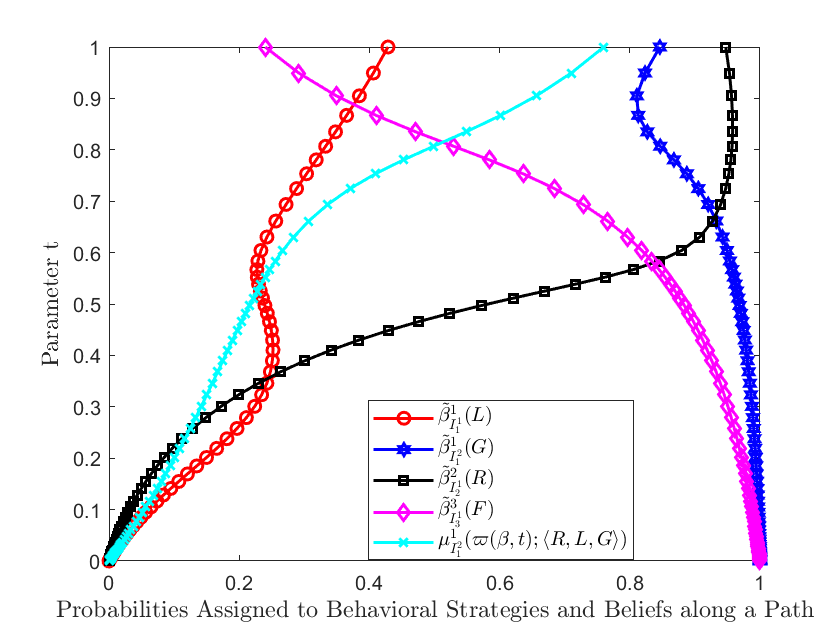}
        \caption{\label{TFigure1MLogB2}\scriptsize The Smooth Path of $\tilde\beta$ Specified by the System~(\ref{logbes5wsre}) for the Game in Fig.~\ref{TFigure1}}
\end{minipage}
 \end{figure}

\begin{figure}[H]
    \centering
    \begin{minipage}{0.49\textwidth}
        \centering
        \includegraphics[width=0.80\textwidth, height=0.15\textheight]{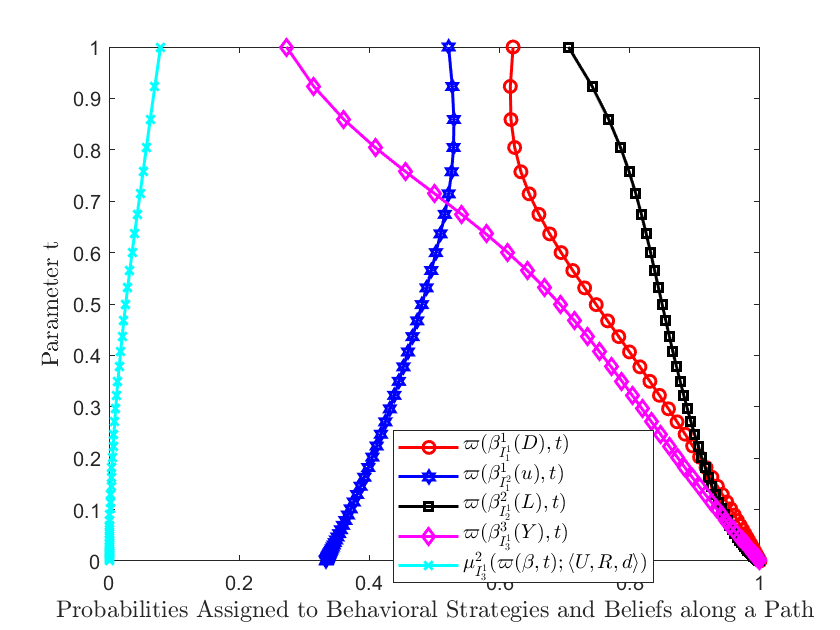}
               \caption{\label{TFigure2MLogB1}\scriptsize  The Smooth Path of $\varpi(\beta, t)$ Specified by the System~(\ref{logbes5wsre}) for the Game in Fig.~\ref{TFigure2}}
\end{minipage}\hfill
    \begin{minipage}{0.49\textwidth}
        \centering
        \includegraphics[width=0.80\textwidth, height=0.15\textheight]{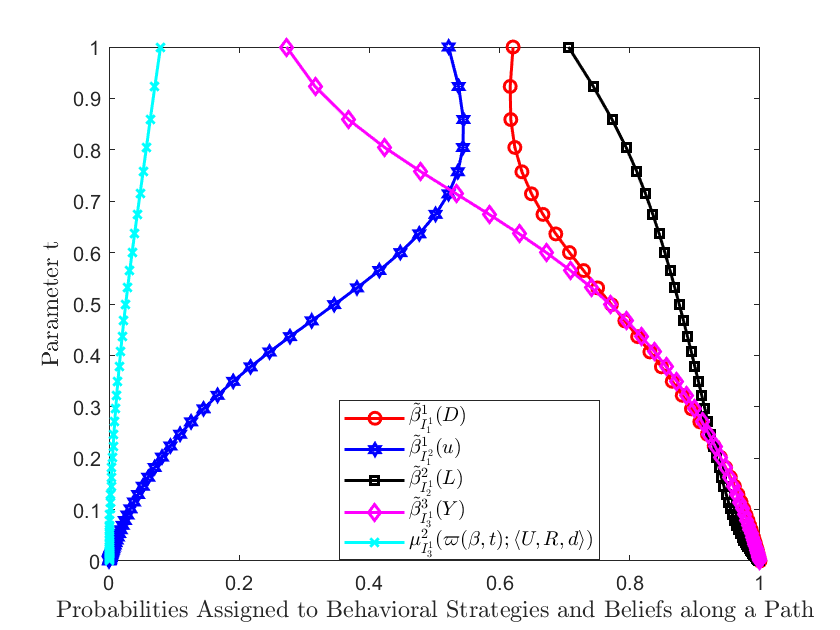}
        \caption{\label{TFigure2MLogB2}\scriptsize  The Smooth Path of $\tilde\beta$ Specified by the System~(\ref{logbes5wsre}) for the Game in Fig.~\ref{TFigure2}}
\end{minipage}
\end{figure}

\subsection{An Entropy-Barrier Smooth Path to a WSRE}

To derive a more efficient method, we propose an alternative scheme in this subsection. For $t\in (0,1]$, we constitute with $\varpi(\beta, t)$ an entropy-barrier extensive-form game $\Gamma_E(t)$ in which player $i$ at information set $I^j_i$ solves against a given pair  $(\hat\beta,\hat{\tilde\beta})$ the strictly convex optimization problem, {\footnotesize
\begin{equation}\setlength{\abovedisplayskip}{0pt}\setlength{\belowdisplayskip}{0pt}\label{entbop1wsre}
\begin{array}{rl}
\max\limits_{\beta^i_{I^j_i},\;\tilde\beta^i_{I^j_i}} & (1-t)\sum\limits_{a\in A(I^j_i)}(\beta^i_{I^j_i}(a){\cal D}(I^j_i|\varpi(\hat\beta,t)) +\tilde\beta^i_{I^j_i}(a))u^i(a,\varrho^i_{I^j_i}(\varpi(\hat\beta^{-I^j_i},t),\hat{\tilde\beta})| I^j_i)\\
& 
-t\sum\limits_{a\in A(I^j_i)}(\beta^i_{I^j_i}(a)(\ln(\beta^i_{I^j_i}(a)/\beta^{0i}_{I^j_i}(a))-1)+\tilde\beta^i_{I^j_i}(a)(\ln(\tilde\beta^i_{I^j_i}(a)/\tilde\beta^{0i}_{I^j_i}(a)) -1))\\
\text{s.t.} & \sum\limits_{a\in A(I^j_i)}\beta^i_{I^j_i}(a)=1,\;\sum\limits_{a\in A(I^j_i)}\tilde\beta^i_{I^j_i}(a)=1.
\end{array}
\end{equation}}Applying the optimality conditions to the problem~(\ref{entbop1wsre}), we acquire from the equilibrium condition of $(\hat\beta,\hat{\tilde\beta})=(\beta,\tilde\beta)$ the equilibrium system of $\Gamma_E(t)$, {\footnotesize
\begin{equation}\setlength{\abovedisplayskip}{0pt}\setlength{\belowdisplayskip}{0pt}\label{entbes1wsre}
\begin{array}{l}
 (1-t){\cal D}(I^j_i|\varpi(\beta,t)) u^i(a,\varrho^i_{I^j_i}(\varpi(\beta^{-I^j_i},t),\tilde\beta)| I^j_i)
-t\ln(\beta^i_{I^j_i}(a)/\beta^{0i}_{I^j_i}(a))-\zeta^i_{I^j_i}=0,\;i\in N,j\in M_i,a\in A(I^j_i),\\

(1-t)u^i(a,\varrho^i_{I^j_i}(\varpi(\beta^{-I^j_i},t),\tilde\beta)| I^j_i)
-t\ln(\tilde\beta^i_{I^j_i}(a)/\tilde\beta^{0i}_{I^j_i}(a))-\tilde\zeta^i_{I^j_i}=0,\;
i\in N,j\in M_i,a\in A(I^j_i),\\

 \sum\limits_{a\in A(I^j_i)}\beta^i_{I^j_i}(a)=1,\;\sum\limits_{a\in A(I^j_i)}\tilde\beta^i_{I^j_i}(a)=1,\;i\in N, j\in M_i;\;

 0<\beta^i_{I^j_i},\;0<\tilde\beta^i_{I^j_i},\;i\in N, j\in M_i, a\in A(I^j_i).
\end{array}
\end{equation}}

\noindent We define \[\setlength{\abovedisplayskip}{0pt}\setlength{\belowdisplayskip}{0pt}
d(x)=\left\{\begin{array}{ll}
\exp(1-1/x) & \text{if $0<x$,}\\
 0 & \text{otherwise.}
 \end{array}\right.\]
Let $v=(v^i_{I^j_i}(a):i\in N, j\in M_i, a\in A(I^j_i))$ and $\tilde v=(\tilde v^i_{I^j_i}(a):i\in N, j\in M_i, a\in A(I^j_i))$. Substituting 
$\beta^i_{I^j_i}(v;a)=d(v^i_{I^i_j}(a))$ and $\tilde\beta^i_{I^j_i}(\tilde v;a)=d(
\tilde v^i_{I^i_j}(a))$ into the system~(\ref{entbes1wsre}) for $\beta^i_{I^j_i}(a)$ and $\tilde\beta^i_{I^j_i}(a)$, we come to an equivalent system, {\footnotesize
\begin{equation}\setlength{\abovedisplayskip}{0pt}\setlength{\belowdisplayskip}{0pt}\label{entbes2wsre}
\begin{array}{l}
 (1-t){\cal D}(I^j_i|\varpi(\beta(v),t)) u^i(a,\varrho^i_{I^j_i}(\varpi(\beta^{-I^j_i}(v),t),\tilde\beta(\tilde v))| I^j_i)
+t/v^i_{I^j_i}(a)-t(1-\ln\beta^{0i}_{I^j_i}(a))-\zeta^i_{I^j_i}=0,\\
\hspace{12.2cm}i\in N,j\in M_i,a\in A(I^j_i),\\

(1-t)u^i(a,\varrho^i_{I^j_i}(\varpi(\beta^{-I^j_i}(v),t),\tilde\beta(\tilde v))| I^j_i)
+t/\tilde v^i_{I^j_i}(a)-t(1-\ln\tilde\beta^{0i}_{I^j_i}(a))-\tilde\zeta^i_{I^j_i}=0,\;
i\in N,j\in M_i,a\in A(I^j_i),\\

 \sum\limits_{a\in A(I^j_i)}\beta^i_{I^j_i}(v; a)=1,\;\sum\limits_{a\in A(I^j_i)}\tilde\beta^i_{I^j_i}(\tilde v; a)=1,\;i\in N, j\in M_i;\;

 0<v^i_{I^j_i},\;0<\tilde v^i_{I^j_i},\;i\in N, j\in M_i, a\in A(I^j_i).
\end{array}
\end{equation}}

\noindent Multiplying $v^i_{I^j_i}(a)$ to the first group of equations and $\tilde v^i_{I^j_i}(a)$ to the second group of equations in the system~(\ref{entbes2wsre}), we arrive at the system,
 {\footnotesize
\begin{equation}\setlength{\abovedisplayskip}{0pt}\setlength{\belowdisplayskip}{0pt}\label{entbes3wsre}
\begin{array}{l}
 (1-t){\cal D}(I^j_i|\varpi(\beta(v),t)) v^i_{I^j_i}(a)u^i(a,\varrho^i_{I^j_i}(\varpi(\beta^{-I^j_i}(v),t),\tilde\beta(\tilde v))| I^j_i)
+t-tv^i_{I^j_i}(a)(1-\ln\beta^{0i}_{I^j_i}(a))\\
\hspace{9.6cm}-\zeta^i_{I^j_i}v^i_{I^j_i}(a)=0,\;i\in N,j\in M_i,a\in A(I^j_i),\\

(1-t)\tilde v^i_{I^j_i}(a)u^i(a,\varrho^i_{I^j_i}(\varpi(\beta^{-I^j_i}(v),t),\tilde\beta(\tilde v))| I^j_i)
+t-t\tilde v^i_{I^j_i}(a)(1-\ln\tilde\beta^{0i}_{I^j_i}(a))-\tilde\zeta^i_{I^j_i}\tilde v^i_{I^j_i}(a)=0,\\
\hspace{12.2cm}i\in N,j\in M_i,a\in A(I^j_i),\\

 \sum\limits_{a\in A(I^j_i)}\beta^i_{I^j_i}(v; a)=1,\;\sum\limits_{a\in A(I^j_i)}\tilde\beta^i_{I^j_i}(\tilde v; a)=1,\;i\in N, j\in M_i;\;

 0<v^i_{I^j_i},\;0<\tilde v^i_{I^j_i},\;i\in N, j\in M_i, a\in A(I^j_i).
\end{array}
\end{equation}}

\noindent Taking the sum of equations in the first group and the sum of equations in the second group of the system~(\ref{entbes3wsre}) over $A(I^j_i)$, we get after a simplification the system,
 {\footnotesize
\begin{equation}\setlength{\abovedisplayskip}{0pt}\setlength{\belowdisplayskip}{0pt}\label{entbes4wsre}
\begin{array}{rl}
\zeta^i_{I^j_i}= & \frac{1}{\sum\limits_{a'\in A(I^j_i)}v^i_{I^j_i}(a')}((1-t){\cal D}(I^j_i|\varpi(\beta(v),t))\sum\limits_{a'\in A(I^j_i)}v^i_{I^j_i}(a')u^i(a',\varrho^i_{I^j_i}(\varpi(\beta^{-I^j_i}(v),t),\tilde\beta(\tilde v))| I^j_i)\\
& +t|A(I^j_i)|-t\sum\limits_{a'\in A(I^j_i)}v^i_{I^j_i}(a')(1-\ln\beta^{0i}_{I^j_i}(a'))),\;i\in N,j\in M_i,\\

\tilde\zeta^i_{I^j_i} = & \frac{1}{\sum\limits_{a'\in A(I^j_i)}\tilde v^i_{I^j_i}(a')}((1-t)\sum\limits_{a'\in A(I^j_i)}\tilde v^i_{I^j_i}(a')u^i(a',\varrho^i_{I^j_i}(\varpi(\beta^{-I^j_i}(v),t),\tilde\beta(\tilde v))| I^j_i)
\\ 
& +t|A(I^j_i)|-t\sum\limits_{a'\in A(I^j_i)}\tilde v^i_{I^j_i}(a')(1-\ln\tilde\beta^{0i}_{I^j_i}(a'))),\;i\in N,j\in M_i.
\end{array}
\end{equation}}

\noindent 
Let $a^0_{I^j_i}\in A(I^i_j)$ be a given reference action. Substituting $\zeta^i_{I^j_i}$ and $\tilde\zeta^i_{I^j_i}$ of the system~(\ref{entbes4wsre}) into the system~(\ref{entbes3wsre}), we harvest after a simplifiation an equivalent system with much fewer variables,
{\footnotesize
\begin{equation}\setlength{\abovedisplayskip}{0pt}\setlength{\belowdisplayskip}{0pt}\label{entbes5wsre}
\begin{array}{l}
 (1-t){\cal D}(I^j_i|\varpi(\beta(v),t)) v^i_{I^j_i}(a)\sum\limits_{a'\in A(I^j_i)}v^i_{I^j_i}(a')(u^i(a,\varrho^i_{I^j_i}(\varpi(\beta^{-I^j_i}(v),t),\tilde\beta(\tilde v))| I^j_i)\\
 \hspace{3cm}-u^i(a',\varrho^i_{I^j_i}(\varpi(\beta^{-I^j_i}(v),t),\tilde\beta(\tilde v))| I^j_i))
+t\sum\limits_{a'\in A(I^j_i)}(v^i_{I^j_i}(a')-v^i_{I^j_i}(a)) \\
\hspace{4cm}-tv^i_{I^j_i}(a)\sum\limits_{a'\in A(I^j_i)}v^i_{I^j_i}(a')\ln(\beta^{0i}_{I^j_i}(a')/\beta^{0i}_{I^j_i}(a))=0,\;i\in N,j\in M_i,a\in A(I^j_i)\backslash\{a^0_{I^j_i}\},\\

 (1-t)\tilde v^i_{I^j_i}(a)\sum\limits_{a'\in A(I^j_i)}\tilde v^i_{I^j_i}(a')(u^i(a,\varrho^i_{I^j_i}(\varpi(\beta^{-I^j_i}(v),t),\tilde\beta(\tilde v))| I^j_i)\\
 \hspace{3cm}-u^i(a',\varrho^i_{I^j_i}(\varpi(\beta^{-I^j_i}(v),t),\tilde\beta(\tilde v))| I^j_i))
+t\sum\limits_{a'\in A(I^j_i)}(\tilde v^i_{I^j_i}(a')-\tilde v^i_{I^j_i}(a)) \\
\hspace{4cm}-t\tilde v^i_{I^j_i}(a)\sum\limits_{a'\in A(I^j_i)}\tilde v^i_{I^j_i}(a')\ln(\tilde\beta^{0i}_{I^j_i}(a')/\tilde\beta^{0i}_{I^j_i}(a))=0,\;i\in N,j\in M_i,a\in A(I^j_i)\backslash\{a^0_{I^j_i}\},\\

 \sum\limits_{a\in A(I^j_i)}\beta^i_{I^j_i}(v; a)=1,\;\sum\limits_{a\in A(I^j_i)}\tilde\beta^i_{I^j_i}(\tilde v; a)=1,\;i\in N, j\in M_i;\;

 0<v^i_{I^j_i},\;0<\tilde v^i_{I^j_i},\;i\in N, j\in M_i, a\in A(I^j_i).
\end{array}
\end{equation}}

\noindent
 When $t=1$, the system~(\ref{entbes5wsre}) has a unique solution given by $(v^*(1), \tilde v^*(1))$ with $v^{*i}_{I^j_i}(1;a)=(1-\ln\beta^{0i}_{I^j_i}(a))^{-1}$ and $\tilde v^{*i}_{I^j_i}(1;a)=(1-\ln\tilde\beta^{0i}_{I^j_i}(a))^{-1}$.

After subtracting a perturbation term from the system~(\ref{entbes5wsre}),  one can show as in the previous subsection that the perturbed system determines a unique smooth path that starts from $(v^*(1),\tilde v^*(1), 1)$ and approaches a WSRE.
One can adapt a standard predictor-corrector method in Eaves and Schmedders~\cite{Eaves and Schmedders (1999)} for numerically tracing the smooth path specified by the system~(\ref{entbes5wsre}) to find a WSRE. The smooth paths specified by the system~(\ref{entbes5wsre}) for the games in Figs.~\ref{TFigure1}-\ref{TFigure2} are illustrated in Figs.~\ref{TFigure1MEntB1}-\ref{TFigure2MEntB2}.

\begin{figure}[H]
    \centering
    \begin{minipage}{0.49\textwidth}
        \centering
        \includegraphics[width=0.80\textwidth, height=0.15\textheight]{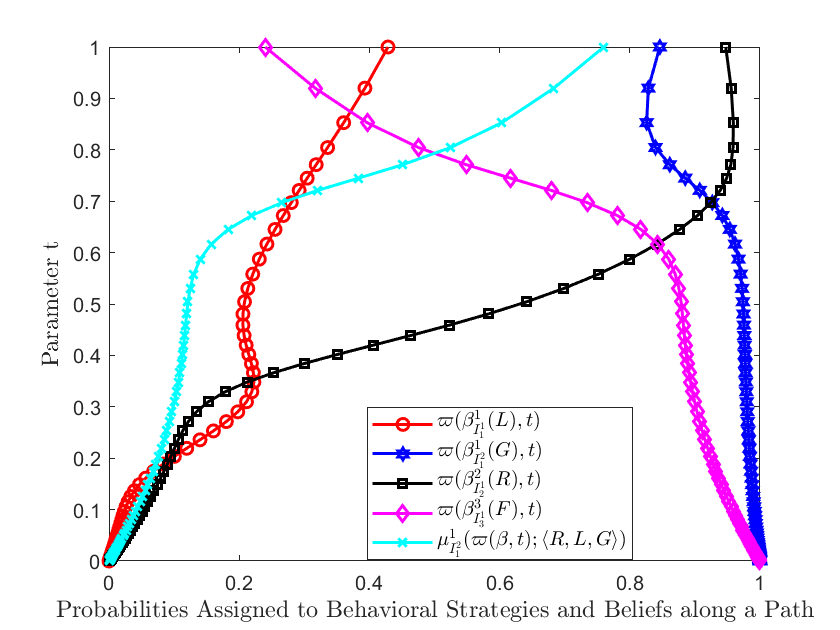}
               \caption{\label{TFigure1MEntB1}\scriptsize The Smooth Path of $\varpi(\beta,t)$ Specified by the System~(\ref{entbes5wsre}) for the Game in Fig.~\ref{TFigure1}}
\end{minipage}\hfill
    \begin{minipage}{0.49\textwidth}
        \centering
        \includegraphics[width=0.80\textwidth, height=0.15\textheight]{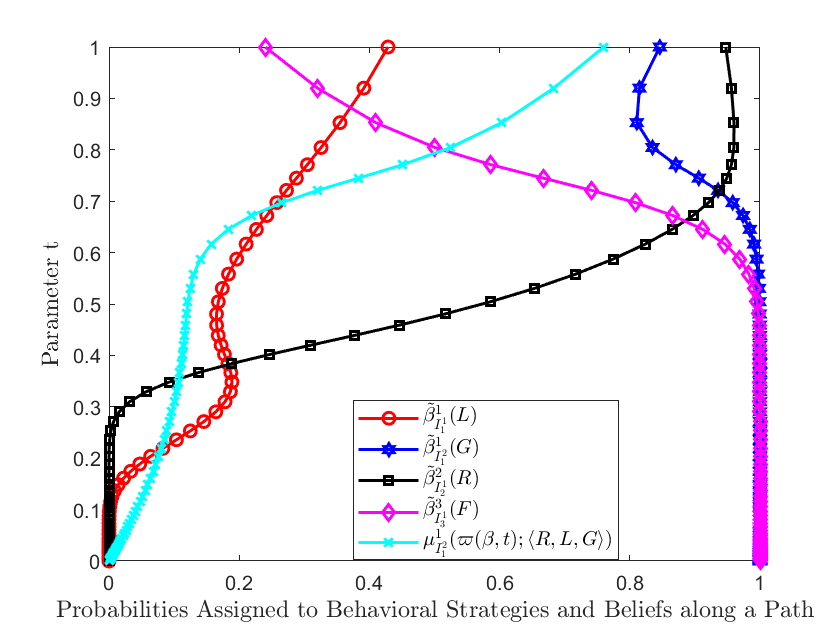}
        \caption{\label{TFigure1MEntB2}\scriptsize The Smooth Path of $\tilde\beta$ Specified by the System~(\ref{entbes5wsre}) for the Game in Fig.~\ref{TFigure1}}
\end{minipage}
 \end{figure}

\begin{figure}[H]
    \centering
    \begin{minipage}{0.49\textwidth}
        \centering
        \includegraphics[width=0.80\textwidth, height=0.15\textheight]{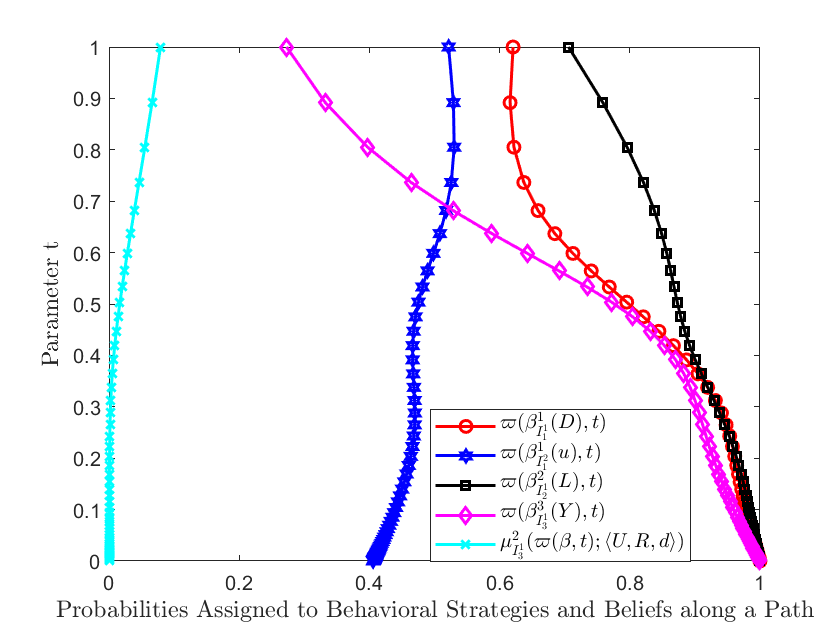}
               \caption{\label{TFigure2MEntB1}\scriptsize The Smooth Path of $\varpi(\beta,t)$ Specified by the System~(\ref{entbes5wsre}) for the Game in Fig.~\ref{TFigure2}}
\end{minipage}\hfill
    \begin{minipage}{0.49\textwidth}
        \centering
        \includegraphics[width=0.80\textwidth, height=0.15\textheight]{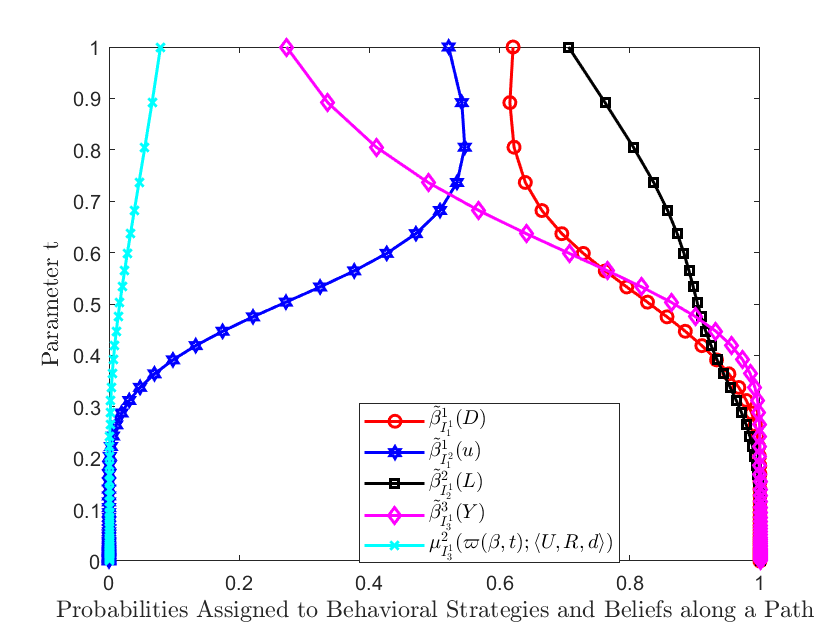}
        \caption{\label{TFigure2MEntB2}\scriptsize The Smooth Path of $\tilde\beta$ Specified by the System~(\ref{entbes5wsre}) for the Game in Fig.~\ref{TFigure2}}
\end{minipage}
 \end{figure}

\subsection{A Square-Root-Barrier Smooth Path to a WSRE}

As an alternative scheme, for $t\in (0,1]$, we constitute with $\varpi(\beta, t)$ a square-root-barrier extensive-form game $\Gamma_S(t)$ in which player $i$ at information set $I^j_i$ solves against a given pair  $(\hat\beta,\hat{\tilde\beta})$ the strictly convex optimization problem, {\footnotesize
\begin{equation}\setlength{\abovedisplayskip}{0pt}\setlength{\belowdisplayskip}{0pt}\label{srtbop1wsre}
\begin{array}{rl}
\max\limits_{\beta^i_{I^j_i},\;\tilde\beta^i_{I^j_i}} & (1-t)\sum\limits_{a\in A(I^j_i)}(\beta^i_{I^j_i}(a){\cal D}(I^j_i|\varpi(\hat\beta,t)) +\tilde\beta^i_{I^j_i}(a))u^i(a,\varrho^i_{I^j_i}(\varpi(\hat\beta^{-I^j_i},t),\hat{\tilde\beta})| I^j_i)\\
& 
+t\sum\limits_{a\in A(I^j_i)}(\sqrt{\beta^{0i}_{I^j_i}(a)\beta^i_{I^j_i}(a)} +\sqrt{\tilde\beta^{0i}_{I^j_i}(a)\tilde\beta^i_{I^j_i}(a)})\\
\text{s.t.} & \sum\limits_{a\in A(I^j_i)}\beta^i_{I^j_i}(a)=1,\;\sum\limits_{a\in A(I^j_i)}\tilde\beta^i_{I^j_i}(a)=1.
\end{array}
\end{equation}}Applying the optimality conditions to the problem~(\ref{srtbop1wsre}), we acquire from the equilibrium condition of $(\hat\beta,\hat{\tilde\beta})=(\beta,\tilde\beta)$ the equilibrium system of $\Gamma_S(t)$, {\footnotesize
\begin{equation}\setlength{\abovedisplayskip}{0pt}\setlength{\belowdisplayskip}{0pt}\label{srtbes1wsre}
\begin{array}{l}
 (1-t){\cal D}(I^j_i|\varpi(\beta,t)) u^i(a,\varrho^i_{I^j_i}(\varpi(\beta^{-I^j_i},t),\tilde\beta)| I^j_i)
+t\sqrt{\beta^{0i}_{I^j_i}(a)}/\sqrt{\beta^i_{I^j_i}(a)}-\zeta^i_{I^j_i}=0,\;i\in N,j\in M_i,a\in A(I^j_i),\\

(1-t)u^i(a,\varrho^i_{I^j_i}(\varpi(\beta^{-I^j_i},t),\tilde\beta)| I^j_i)
+t\sqrt{\tilde\beta^{0i}_{I^j_i}(a)}/\sqrt{\tilde\beta^i_{I^j_i}(a)}-\tilde\zeta^i_{I^j_i}=0,\;
i\in N,j\in M_i,a\in A(I^j_i),\\

 \sum\limits_{a\in A(I^j_i)}\beta^i_{I^j_i}(a)=1,\;\sum\limits_{a\in A(I^j_i)}\tilde\beta^i_{I^j_i}(a)=1,\;i\in N, j\in M_i;\;

 0<\beta^i_{I^j_i},\;0<\tilde\beta^i_{I^j_i},\;i\in N, j\in M_i, a\in A(I^j_i).
\end{array}
\end{equation}}

\noindent We define $c(x)=x^2$. 
Let $v=(v^i_{I^j_i}(a):i\in N, j\in M_i,a\in A(I^j_i))$ and $\tilde v=(\tilde v^i_{I^j_i}(a):i\in N, j\in M_i,a\in A(I^j_i))$. Substituting $\beta^i_{I^j_i}(v; a)=c(v^i_{I^j_i}(a))$ and $\tilde\beta^i_{I^j_i}(
\tilde v; a)=c(\tilde v^i_{I^j_i}(a))$ into the system~(\ref{srtbes1wsre}) for $\beta^i_{I^j_i}(a)$ and $\tilde\beta^i_{I^j_i}(a)$, we come to the system, {\footnotesize
\begin{equation}\setlength{\abovedisplayskip}{0pt}\setlength{\belowdisplayskip}{0pt}\label{srtbes1Awsre}
\begin{array}{l}
 (1-t){\cal D}(I^j_i|\varpi(\beta(v),t)) u^i(a,\varrho^i_{I^j_i}(\varpi(\beta^{-I^j_i}(v),t),\tilde\beta(\tilde v))| I^j_i)
+t\sqrt{\beta^{0i}_{I^j_i}(a)}/v^i_{I^j_i}(a)-\zeta^i_{I^j_i}=0,\\
\hspace{12cm}i\in N,j\in M_i,a\in A(I^j_i),\\

(1-t)u^i(a,\varrho^i_{I^j_i}(\varpi(\beta^{-I^j_i}(v),t),\tilde\beta(\tilde v))| I^j_i)
+t\sqrt{\tilde\beta^{0i}_{I^j_i}(a)}/\tilde v^i_{I^j_i}(a)-\tilde\zeta^i_{I^j_i}=0,\;
i\in N,j\in M_i,a\in A(I^j_i),\\

 \sum\limits_{a\in A(I^j_i)}\beta^i_{I^j_i}(v; a)=1,\;\sum\limits_{a\in A(I^j_i)}\tilde\beta^i_{I^j_i}(\tilde v; a)=1,\;i\in N, j\in M_i;\;

 0<\beta^i_{I^j_i},\;0<\tilde\beta^i_{I^j_i},\;i\in N, j\in M_i, a\in A(I^j_i).
\end{array}
\end{equation}}

\noindent 
Multiplying $v^i_{I^j_i}(a)$ and $\tilde v^i_{I^j_i}(a)$ to equations in the first and second groups of the system~(\ref{srtbes1Awsre}), respectively, we come to the system, {\footnotesize
\begin{equation}\setlength{\abovedisplayskip}{0pt}\setlength{\belowdisplayskip}{0pt}\label{srtbes2wsre}
\begin{array}{l}
 (1-t){\cal D}(I^j_i|\varpi(\beta,t)) v^i_{I^j_i}(a)u^i(a,\varrho^i_{I^j_i}(\varpi(\beta^{-I^j_i}(v),t),\tilde\beta(\tilde v))| I^j_i)
+t\sqrt{\beta^{0i}_{I^j_i}(a)}-v^i_{I^j_i}(a)\zeta^i_{I^j_i}=0,\\
\hspace{11.6cm}\;i\in N,j\in M_i,a\in A(I^j_i),\\

(1-t)\tilde v^i_{I^j_i}(a)u^i(a,\varrho^i_{I^j_i}(\varpi(\beta^{-I^j_i}(v),t),\tilde\beta(\tilde v))| I^j_i)
+t\sqrt{\tilde\beta^{0i}_{I^j_i}(a)}-\tilde v^i_{I^j_i}(a)\tilde\zeta^i_{I^j_i}=0,\;
i\in N,j\in M_i,a\in A(I^j_i),\\

 \sum\limits_{a\in A(I^j_i)}\beta^i_{I^j_i}(v; a)=1,\;\sum\limits_{a\in A(I^j_i)}\tilde\beta^i_{I^j_i}(\tilde v; a)=1,\;i\in N, j\in M_i;\;
 
 0<\beta^i_{I^j_i},\;0<\tilde\beta^i_{I^j_i},\;i\in N, j\in M_i, a\in A(I^j_i).
\end{array}
\end{equation}}

\noindent
Taking the sum of equations in the first group and the sum of equations in the second group
of the system~(\ref{srtbes2wsre}) over $A(I^j_i)$, respectively, we get from a division operation the system, {\footnotesize
\begin{equation}\setlength{\abovedisplayskip}{0pt}\setlength{\belowdisplayskip}{0pt}\label{srtbes3wsre}
\begin{array}{l}
\zeta^i_{I^j_i}= \frac{1}{\sum\limits_{a'\in A(I^j_i)} v^i_{I^j_i}(a')}( (1-t){\cal D}(I^j_i|\varpi(\beta(v),t))\sum\limits_{a'\in A(I^j_i)} v^i_{I^j_i}(a')u^i(a',\varrho^i_{I^j_i}(\varpi(\beta^{-I^j_i}(v),t),\tilde\beta(\tilde v))| I^j_i)+t\sum\limits_{a'\in A(I^j_i)}\sqrt{\beta^{0i}_{I^j_i}(a')}),\\

\tilde\zeta^i_{I^j_i}=\frac{1}{\sum\limits_{a'\in A(I^j_i)}\tilde\beta^i_{I^j_i}(a')}((1-t)\sum\limits_{a'\in A(I^j_i)}\tilde\beta^i_{I^j_i}(a')u^i(a',\varrho^i_{I^j_i}(\varpi(\beta^{-I^j_i},t),\tilde\beta)| I^j_i)
+t\sum\limits_{a'\in A(I^j_i)}\sqrt{\tilde\beta^{0i}_{I^j_i}(a')}),\;i\in N,j\in M_i.
\end{array}
\end{equation}}We denote by $a^0_{I^j_i}$ a given reference action in $A(I^j_i)$.
Substituting $\zeta^i_{I^j_i}$ and $\tilde\zeta^i_{I^j_i}$ of the system~(\ref{srtbes3wsre}) into the system~(\ref{srtbes2wsre}), we arrive at the system, {\footnotesize
\begin{equation}\setlength{\abovedisplayskip}{0pt}\setlength{\belowdisplayskip}{0pt}\label{srtbes4wsre}
\begin{array}{l}
 (1-t){\cal D}(I^j_i|\varpi(\beta(v),t))v^i_{I^j_i}(a)\sum\limits_{a'\in A(I^j_i)}v^i_{I^j_i}(a') (u^i(a,\varrho^i_{I^j_i}(\varpi(\beta^{-I^j_i}(v),t),\tilde\beta(\tilde v))| I^j_i)\\
\hspace{2cm} -u^i(a',\varrho^i_{I^j_i}(\varpi(\beta^{-I^j_i}(v),t),\tilde\beta(\tilde v))| I^j_i))
 
+t\sum\limits_{a'\in A(I^j_i)}(v^i_{I^j_i}(a')\sqrt{\beta^{0i}_{I^j_i}(a)}-v^i_{I^j_i}(a)\sqrt{\beta^{0i}_{I^j_i}(a')})=0,\\

\hspace{11cm} i\in N,j\in M_i,a\in A(I^j_i)\backslash\{a^0_{I^j_i}\},\\

(1-t)\tilde v^i_{I^j_i}(a)\sum\limits_{a'\in A(I^j_i)}\tilde v^i_{I^j_i}(a') (u^i(a,\varrho^i_{I^j_i}(\varpi(\beta^{-I^j_i}(v),t),\tilde\beta(\tilde v))| I^j_i)-u^i(a',\varrho^i_{I^j_i}(\varpi(\beta^{-I^j_i}(v),t),\tilde\beta(\tilde v))| I^j_i))\\
\hspace{3cm}+t\sum\limits_{a'\in A(I^j_i)}(\tilde v^i_{I^j_i}(a')\sqrt{\tilde\beta^{0i}_{I^j_i}(a)}-\tilde v^i_{I^j_i}(a)\sqrt{\tilde\beta^{0i}_{I^j_i}(a')})=0,\;i\in N,j\in M_i,a\in A(I^j_i)\backslash\{a^0_{I^j_i}\},\\

 \sum\limits_{a\in A(I^j_i)}\beta^i_{I^j_i}(v; a)=1,\;\sum\limits_{a\in A(I^j_i)}\tilde\beta^i_{I^j_i}(\tilde v; a)=1,\;i\in N, j\in M_i;\;
 
 0<\beta^i_{I^j_i},\;0<\tilde\beta^i_{I^j_i},\;i\in N, j\in M_i, a\in A(I^j_i).
\end{array}
\end{equation}}

\noindent When $t=1$, the system~(\ref{srtbes4wsre}) has a unique solution given by $(v^*(1),\tilde v^*(1))$ with $v^{*i}_{I^j_i}(a)=\sqrt{\beta^{0i}_{I^j_i}(a)}$ and $\tilde v^{*i}_{I^j_i}(a)=\sqrt{\tilde\beta^{0i}_{I^j_i}(a)}$.

After subtracting a perturbation term from the system~(\ref{srtbes4wsre}),  one can show as in the previous subsection that the perturbed system determines a unique smooth path that starts from $(v^*(1),\tilde v^*(1), 1)$ and approaches a WSRE.
One can adapt a standard predictor-corrector method in Eaves and Schmedders~\cite{Eaves and Schmedders (1999)} for numerically tracing the smooth path specified by the system~(\ref{srtbes4wsre}) to find a WSRE. The smooth paths specified by the system~(\ref{srtbes4wsre}) for the games in Figs.~\ref{TFigure1}-\ref{TFigure2} are illustrated in Figs.~\ref{TFigure1MSrtB1}-\ref{TFigure2MSrtB2}.

\begin{figure}[H]
    \centering
    \begin{minipage}{0.49\textwidth}
        \centering
        \includegraphics[width=0.80\textwidth, height=0.15\textheight]{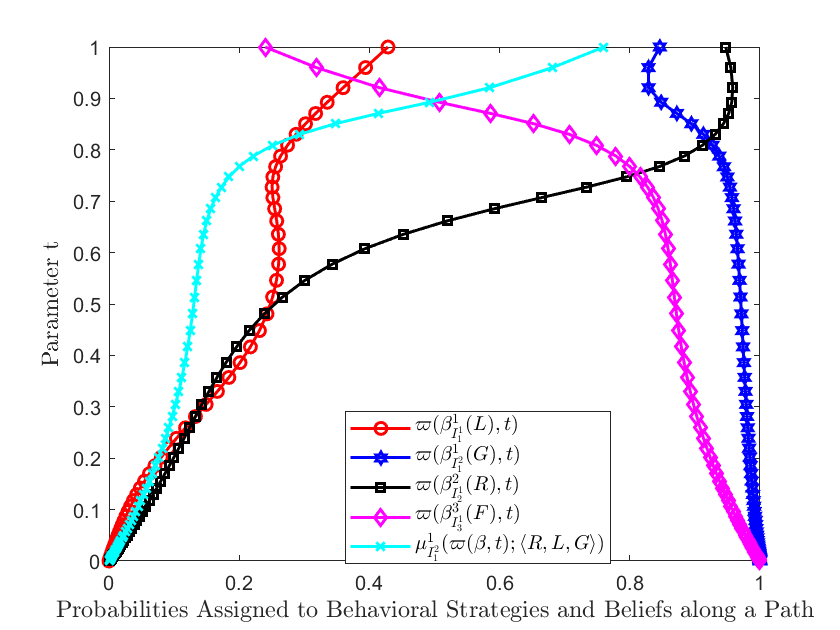}
               \caption{\label{TFigure1MSrtB1}\scriptsize The Smooth Path of $\varpi(\beta,t)$ Specified by the System~(\ref{srtbes4wsre}) for the Game in Fig.~\ref{TFigure1}}
\end{minipage}\hfill
    \begin{minipage}{0.49\textwidth}
        \centering
        \includegraphics[width=0.80\textwidth, height=0.15\textheight]{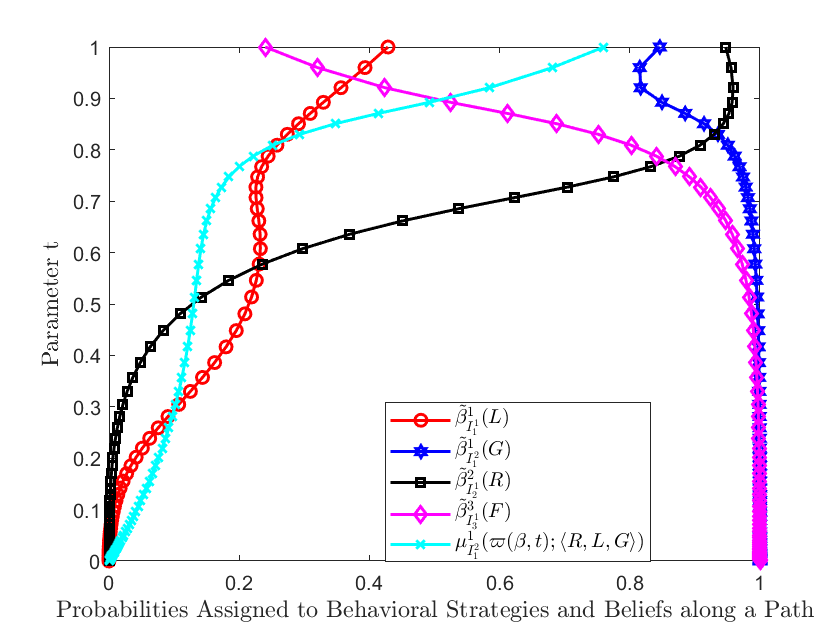}
        \caption{\label{TFigure1MSrtB2}\scriptsize The Smooth Path of $\tilde\beta$ Specified by the System~(\ref{srtbes4wsre}) for the Game in Fig.~\ref{TFigure1}}
\end{minipage}
 \end{figure}

\begin{figure}[H]
    \centering
    \begin{minipage}{0.49\textwidth}
        \centering
        \includegraphics[width=0.80\textwidth, height=0.15\textheight]{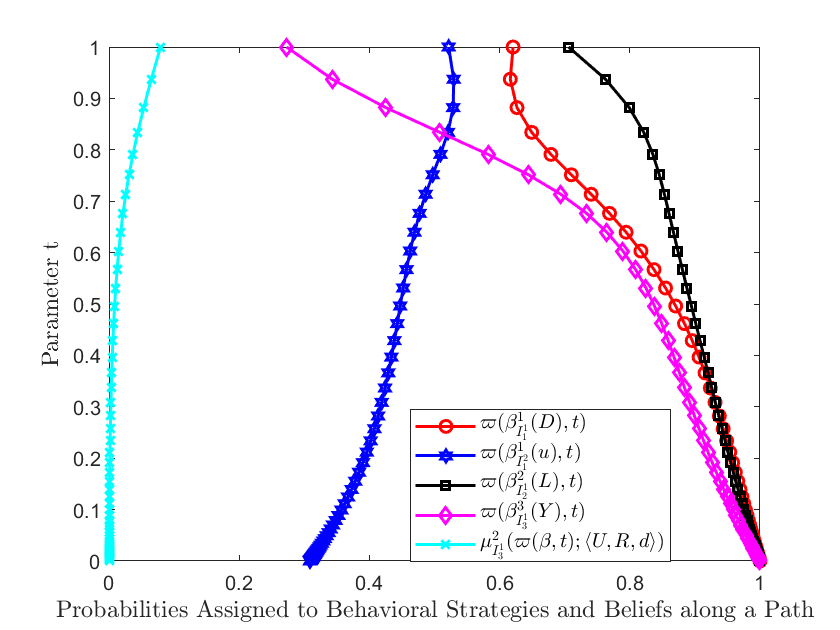}
               \caption{\label{TFigure2MSrtB1}\scriptsize The Smooth Path of $\varpi(\beta,t)$ Specified by the System~(\ref{srtbes4wsre}) for the Game in Fig.~\ref{TFigure2}}
\end{minipage}\hfill
    \begin{minipage}{0.49\textwidth}
        \centering
        \includegraphics[width=0.80\textwidth, height=0.15\textheight]{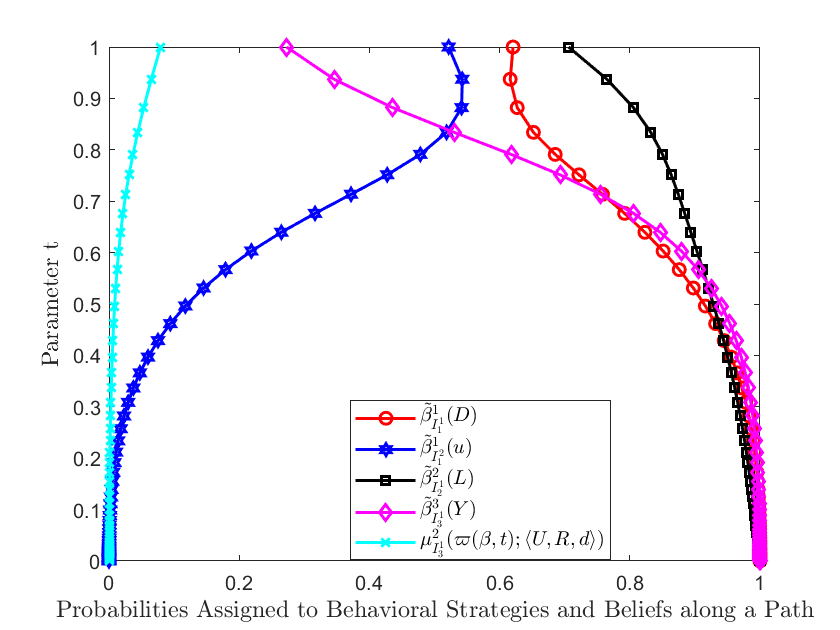}
        \caption{\label{TFigure2MSrtB2}\scriptsize The Smooth Path of $\tilde\beta$ Specified by the System~(\ref{srtbes4wsre}) for the Game in Fig.~\ref{TFigure2}}
\end{minipage}
 \end{figure}

\subsection{A Convex-Quadratic-Penalty Smooth Path to a WSRE\label{cqupwsre}}
 
For  $t\in (0,1]$, we constitute with $\varpi(\beta,t)$ a convex-quadratic-penalty extensive-form game $\Gamma_Q(t)$ in which player $i$ at information set $I^j_i$ solves against a given pair  $(\hat\beta,\hat{\tilde\beta})$ the strictly convex optimization problem, {\footnotesize
\begin{equation}\setlength{\abovedisplayskip}{0pt}\setlength{\belowdisplayskip}{0pt}\label{cqupop1wsre}
\begin{array}{rl}
\max\limits_{\beta^i_{I^j_i},\;\tilde\beta^i_{I^j_i}} & (1-t)\sum\limits_{a\in A(I^j_i)}(\beta^i_{I^j_i}(a){\cal D}(I^j_i|\varpi(\hat\beta,t)) +\tilde\beta^i_{I^j_i}(a))u^i(a,\varrho^i_{I^j_i}(\varpi(\hat\beta^{-I^j_i},t),\hat{\tilde\beta})| I^j_i)\\
& 
-\frac{1}{2}t\sum\limits_{a\in A(I^j_i)}((\beta^i_{I^j_i}(a)-\beta^{0i}_{I^j_i}(a))^2 +(\tilde\beta^i_{I^j_i}(a)-\tilde\beta^{0i}_{I^j_i}(a))^2)\\
\text{s.t.} & \sum\limits_{a\in A(I^j_i)}\beta^i_{I^j_i}(a)=1,\;\sum\limits_{a\in A(I^j_i)}\tilde\beta^i_{I^j_i}(a)=1,\;0\le\beta^i_{I^j_i}(a),\;0\le\tilde\beta^i_{I^j_i}(a),\;a\in A(I^j_i).
\end{array}
\end{equation}}Applying the optimality conditions to the problem~(\ref{cqupop1wsre}), we acquire from the equilibrium condition of $(\hat\beta,\hat{\tilde\beta})=(\beta,\tilde\beta)$ the equilibrium system of $\Gamma_Q(t)$, {\footnotesize
\begin{equation}\setlength{\abovedisplayskip}{0pt}\setlength{\belowdisplayskip}{0pt}\label{cqupes1wsre}
\begin{array}{l}
 (1-t){\cal D}(I^j_i|\varpi(\beta,t)) u^i(a,\varrho^i_{I^j_i}(\varpi(\beta^{-I^j_i},t),\tilde\beta)| I^j_i)+\lambda^i_{I^j_i}(a)
-t(\beta^i_{I^j_i}(a)-\beta^{0i}_{I^j_i}(a))-\zeta^i_{I^j_i}=0,\\
\hspace{11.6cm}i\in N,j\in M_i,a\in A(I^j_i),\\

(1-t)u^i(a,\varrho^i_{I^j_i}(\varpi(\beta^{-I^j_i},t),\tilde\beta)| I^j_i)+\tilde\lambda^i_{I^j_i}(a)
-t(\tilde\beta^i_{I^j_i}(a)-\tilde\beta^{0i}_{I^j_i}(a))-\tilde\zeta^i_{I^j_i}=0,\;i\in N,j\in M_i,a\in A(I^j_i),\\

 \sum\limits_{a\in A(I^j_i)}\beta^i_{I^j_i}(a)=1,\;\sum\limits_{a\in A(I^j_i)}\tilde\beta^i_{I^j_i}(a)=1,\;i\in N,j\in M_i,\\
 
\beta^i_{I^j_i}(a)\lambda^i_{I^j_i}(a) =0,\;\tilde\beta^i_{I^j_i}(a)\tilde\lambda^i_{I^j_i}(a) =0,\;0\le\beta^i_{I^j_i}(a),\;0\le\lambda^i_{I^j_i}(a),\;0\le\tilde\beta^i_{I^j_i}(a),\;0\le\tilde\lambda^i_{I^j_i}(a),\\
\hspace{11.6cm}i\in N,j\in M_i,a\in A(I^j_i).
\end{array}
\end{equation}}Given a reference action $a^0_{I^j_i}\in A(I^j_i)$, 
we obtain from simplification and subtraction to the system~(\ref{cqupes1wsre})   the system, {\footnotesize
\begin{equation}\setlength{\abovedisplayskip}{0pt}\setlength{\belowdisplayskip}{0pt}\label{cqupes2wsre}
\begin{array}{l}
 (1-t){\cal D}(I^j_i|\varpi(\beta,t))( u^i(a,\varrho^i_{I^j_i}(\varpi(\beta^{-I^j_i},t),\tilde\beta)| I^j_i)-u^i(a^0_{I^j_i},\varrho^i_{I^j_i}(\varpi(\beta^{-I^j_i},t),\tilde\beta)| I^j_i))+\lambda^i_{I^j_i}(a)-\lambda^i_{I^j_i}(a^0_{I^j_i})\\
 \hspace{3.2cm}
-t(\beta^i_{I^j_i}(a)-\beta^i_{I^j_i}(a^0_{I^j_i})-(\beta^{0i}_{I^j_i}(a)-\beta^{0i}_{I^j_i}(a^0_{I^j_i})))=0,\;i\in N,j\in M_i,a\in A(I^j_i)\backslash\{a^0_{I^j_i}\},\\

(1-t)(u^i(a,\varrho^i_{I^j_i}(\varpi(\beta^{-I^j_i},t),\tilde\beta)| I^j_i)-u^i(a^0_{I^j_i},\varrho^i_{I^j_i}(\varpi(\beta^{-I^j_i},t),\tilde\beta)| I^j_i))+\tilde\lambda^i_{I^j_i}(a)-\tilde\lambda^i_{I^j_i}(a^0_{I^j_i})\\

\hspace{1.6cm}-t(\tilde\beta^i_{I^j_i}(a)-\tilde\beta^i_{I^j_i}(a^0_{I^j_i})-(\tilde\beta^{0i}_{I^j_i}(a)-\tilde\beta^{0i}_{I^j_i}(a^0_{I^j_i})))=0,\;
i\in N,j\in M_i,a\in A(I^j_i)\backslash\{a^0_{I^j_i}\},\\

 \sum\limits_{a\in A(I^j_i)}\beta^i_{I^j_i}(a)=1,\;\sum\limits_{a\in A(I^j_i)}\tilde\beta^i_{I^j_i}(a)=1,\;i\in N,j\in M_i,\\
 
\beta^i_{I^j_i}(a)\lambda^i_{I^j_i}(a) =0,\;\tilde\beta^i_{I^j_i}(a)\tilde\lambda^i_{I^j_i}(a) =0,\;0\le\beta^i_{I^j_i}(a),\;0\le\lambda^i_{I^j_i}(a),\;0\le\tilde\beta^i_{I^j_i}(a),\;0\le\tilde\lambda^i_{I^j_i}(a),\\
\hspace{11.6cm}i\in N,j\in M_i,a\in A(I^j_i).
\end{array}
\end{equation}}A multiplication of $\omega(I^j_i|\varpi(\beta,t))$ to equations in the first and second groups of the system~(\ref{cqupes2wsre}) yields the system, {\footnotesize
\begin{equation}\setlength{\abovedisplayskip}{0pt}\setlength{\belowdisplayskip}{0pt}\label{cqupes3wsre}
\begin{array}{l}
 (1-t){\cal D}(I^j_i|\varpi(\beta,t))(u^i((a,\varrho^i_{I^j_i}(\varpi(\beta^{-I^j_i},t),\tilde\beta))\land I^j_i)-u^i((a^0_{I^j_i},\varrho^i_{I^j_i}(\varpi(\beta^{-I^j_i},t),\tilde\beta))\land I^j_i))\\
 \hspace{0.6cm}+\omega(I^j_i|\varpi(\beta, t))(\lambda^i_{I^j_i}(a)-\lambda^i_{I^j_i}(a^0_{I^j_i})
-t(\beta^i_{I^j_i}(a)-\beta^i_{I^j_i}(a^0_{I^j_i})-(\beta^{0i}_{I^j_i}(a)-\beta^{0i}_{I^j_i}(a^0_{I^j_i}))))=0,\\
\hspace{10.2cm}i\in N,j\in M_i,a\in A(I^j_i)\backslash\{a^0_{I^j_i}\},\\

(1-t)(u^i((a,\varrho^i_{I^j_i}(\varpi(\beta^{-I^j_i},t),\tilde\beta))\land I^j_i)-u^i((a^0_{I^j_i},\varrho^i_{I^j_i}(\varpi(\beta^{-I^j_i},t),\tilde\beta))\land I^j_i))\\

\hspace{0.6cm}+\omega(I^j_i|\varpi(\beta, t))(\tilde\lambda^i_{I^j_i}(a)-\tilde\lambda^i_{I^j_i}(a^0_{I^j_i})-t(\tilde\beta^i_{I^j_i}(a)-\tilde\beta^i_{I^j_i}(a^0_{I^j_i})-(\tilde\beta^{0i}_{I^j_i}(a)-\tilde\beta^{0i}_{I^j_i}(a^0_{I^j_i}))))=0,\\

\hspace{10.2cm}
i\in N,j\in M_i,a\in A(I^j_i)\backslash\{a^0_{I^j_i}\},\\

 \sum\limits_{a\in A(I^j_i)}\beta^i_{I^j_i}(a)=1,\;\sum\limits_{a\in A(I^j_i)}\tilde\beta^i_{I^j_i}(a)=1,\;i\in N,j\in M_i,\\
 
\beta^i_{I^j_i}(a)\lambda^i_{I^j_i}(a) =0,\;\tilde\beta^i_{I^j_i}(a)\tilde\lambda^i_{I^j_i}(a) =0,\;0\le\beta^i_{I^j_i}(a),\;0\le\lambda^i_{I^j_i}(a),\;0\le\tilde\beta^i_{I^j_i}(a),\;0\le\tilde\lambda^i_{I^j_i}(a),\\
\hspace{11.6cm}i\in N,j\in M_i,a\in A(I^j_i).
\end{array}
\end{equation}}Let $\phi_1(z)=(\frac{z+\sqrt{z^2}}{2})^2$ and $\phi_2(z)=(\frac{z-\sqrt{z^2}}{2})^2$. Then, $\phi_1(z)\phi_2(z)=0$. Let $v=(v^i_{I^j_i}(a):i\in N, j\in M_i, a\in A(I^j_i))$ and $\tilde v=(\tilde v^i_{I^j_i}(a):i\in N, j\in M_i, a\in A(I^j_i))$. Let $\beta^i_{I^j_i}(v;a)=\phi_1(v^i_{I^j_i}(a))$, $\lambda^i_{I^j_i}(v;a)=\phi_2(v^i_{I^j_i}(a))$,  $\tilde\beta^i_{I^j_i}(\tilde v;a)=\phi_1(\tilde z^i_{I^j_i}(a))$,  and $\tilde\lambda^i_{I^j_i}(\tilde v;a)=\phi_2(\tilde z^i_{I^j_i}(a))$. Substituting $\beta^i_{I^j_i}(v;a)$, $\lambda^i_{I^j_i}(v;a)$, $\tilde\beta^i_{I^j_i}(\tilde v;a)$, and $\tilde\lambda^i_{I^j_i}(\tilde v;a)$ into the system~(\ref{cqupes3wsre}) for  $\beta^i_{I^j_i}(a)$, $\lambda^i_{I^j_i}(a)$, $\tilde\beta^i_{I^j_i}(a)$, and $\tilde\lambda^i_{I^j_i}(a)$, we get an equivalent system with much fewer variables, {\footnotesize
\begin{equation}\setlength{\abovedisplayskip}{0pt}\setlength{\belowdisplayskip}{0pt}\label{cqupes4wsre}
\begin{array}{l}
 (1-t){\cal D}(I^j_i|\varpi(\beta(v),t))(u^i((a,\varrho^i_{I^j_i}(\varpi(\beta^{-I^j_i}(v),t),\tilde\beta(\tilde v)))\land I^j_i)\\
 \hspace{0.6cm}-u^i((a^0_{I^j_i},\varrho^i_{I^j_i}(\varpi(\beta^{-I^j_i}(v),t),\tilde\beta(\tilde v)))\land I^j_i))+\omega(I^j_i|\varpi(\beta(v), t))(\lambda^i_{I^j_i}(v;a)-\lambda^i_{I^j_i}(v; a^0_{I^j_i})
\\
\hspace{0.8cm}-t(\beta^i_{I^j_i}(v; a)-\beta^i_{I^j_i}(v; a^0_{I^j_i})-(\beta^{0i}_{I^j_i}(a)-\beta^{0i}_{I^j_i}(a^0_{I^j_i}))))=0,\;i\in N,j\in M_i,a\in A(I^j_i)\backslash\{a^0_{I^j_i}\},\\

(1-t)(u^i((a,\varrho^i_{I^j_i}(\varpi(\beta^{-I^j_i}(v),t),\tilde\beta(\tilde v)))\land I^j_i)-u^i((a^0_{I^j_i},\varrho^i_{I^j_i}(\varpi(\beta^{-I^j_i}(v),t),\tilde\beta(\tilde v)))\land I^j_i))\\

\hspace{3.6cm}+\omega(I^j_i|\varpi(\beta(v), t))(\tilde\lambda^i_{I^j_i}(\tilde v; a)-\tilde\lambda^i_{I^j_i}(\tilde v; a^0_{I^j_i})-t(\tilde\beta^i_{I^j_i}(\tilde v; a)-\tilde\beta^i_{I^j_i}(\tilde v; a^0_{I^j_i})\\

\hspace{5.2cm}
-(\tilde\beta^{0i}_{I^j_i}(a)-\tilde\beta^{0i}_{I^j_i}(a^0_{I^j_i}))))=0,\;i\in N,j\in M_i,a\in A(I^j_i)\backslash\{a^0_{I^j_i}\},\\

 \sum\limits_{a\in A(I^j_i)}\beta^i_{I^j_i}(v; a)=1,\;\sum\limits_{a\in A(I^j_i)}\tilde\beta^i_{I^j_i}(\tilde v; a)=1,\;i\in N,j\in M_i.
\end{array}
\end{equation}}When $t=1$, the system~(\ref{cqupes4wsre}) has a unique solution given by $(v^*(1),\tilde v^*(1))$ with $v^{*i}_{I^j_i}(1; a)=\sqrt{\beta^{0i}_{I^j_i}(1; a)}$ and $\tilde v^{*i}_{I^j_i}(1; a)=\sqrt{\tilde\beta^{0i}_{I^j_i}(1; a)}$. After subtracting a perturbation term from the system~(\ref{cqupes4wsre}),  one can show as in the previous subsection that the perturbed system determines a unique smooth path that starts from $(v^*(1),\tilde v^*(1), 1)$ and approaches a WSRE.
One can adapt a standard predictor-corrector method in Eaves and Schmedders~\cite{Eaves and Schmedders (1999)} for numerically tracing the smooth path specified by the system~(\ref{cqupes4wsre}) to a WSRE. The smooth paths specified by the system~(\ref{cqupes4wsre}) for the games in Figs.~\ref{TFigure1}-\ref{TFigure2} are illustrated in Figs.~\ref{TFigure1MCquP1}-\ref{TFigure2MCquP2}.

\begin{figure}[H]
    \centering
    \begin{minipage}{0.49\textwidth}
        \centering
        \includegraphics[width=0.80\textwidth, height=0.15\textheight]{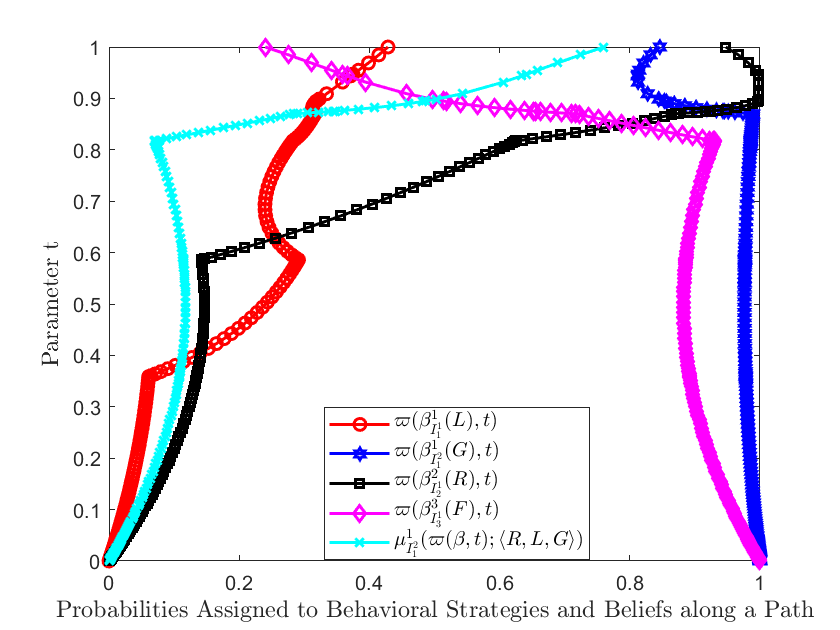}
               \caption{\label{TFigure1MCquP1}\scriptsize The Smooth Path of $\varpi(\beta,t)$ Specified by the System~(\ref{cqupes4wsre}) for the Game in Fig.~\ref{TFigure1}}
\end{minipage}\hfill
    \begin{minipage}{0.49\textwidth}
        \centering
        \includegraphics[width=0.80\textwidth, height=0.15\textheight]{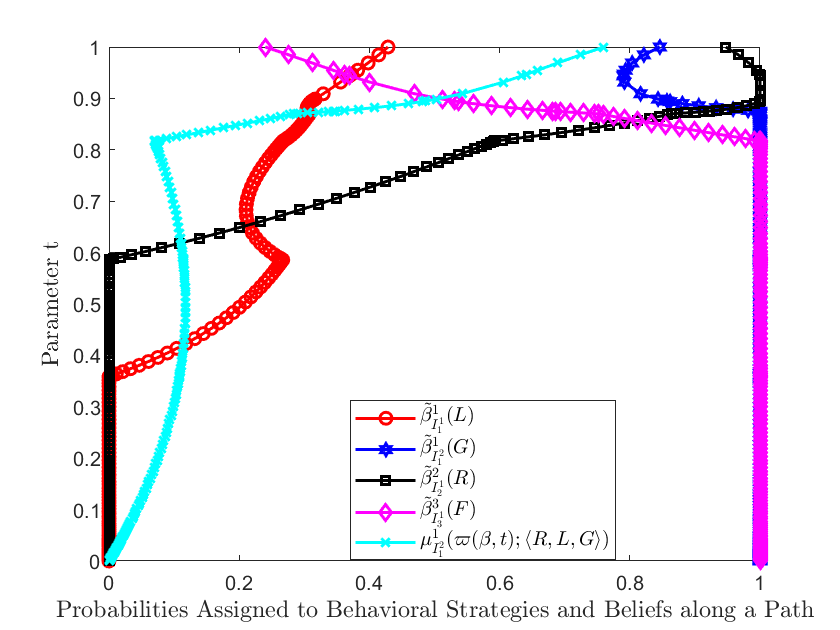}
        \caption{\label{TFigure1MCquP2}\scriptsize The Smooth Path of $\tilde\beta$ Specified by the System~(\ref{cqupes4wsre}) for the Game in Fig.~\ref{TFigure1}}
\end{minipage}
 \end{figure}
 
 \begin{figure}[H]
    \centering
    \begin{minipage}{0.49\textwidth}
        \centering
        \includegraphics[width=0.80\textwidth, height=0.15\textheight]{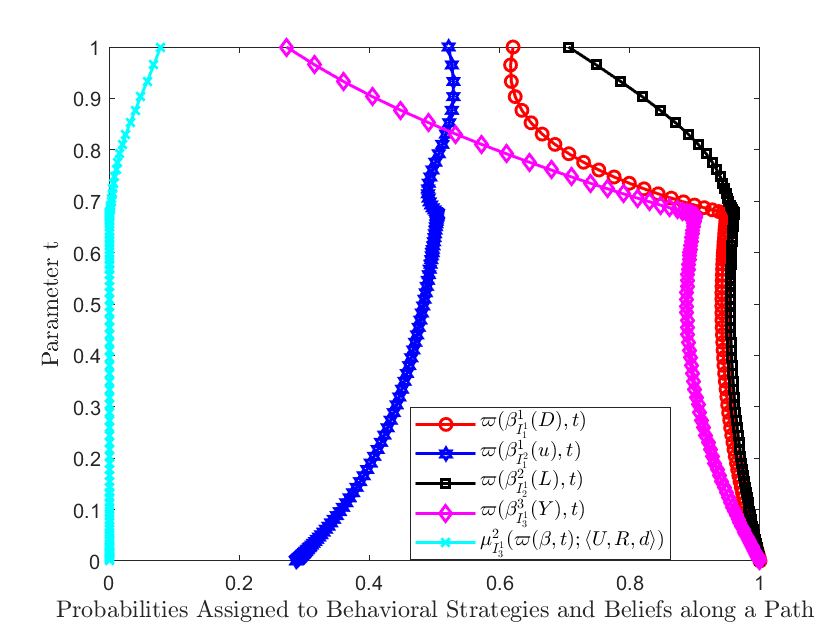}
               \caption{\label{TFigure2MCquP1}\scriptsize The Smooth Path of $\varpi(\beta,t)$ Specified by the System~(\ref{cqupes4wsre}) for the Game in Fig.~\ref{TFigure2}}
\end{minipage}\hfill
    \begin{minipage}{0.49\textwidth}
        \centering
        \includegraphics[width=0.80\textwidth, height=0.15\textheight]{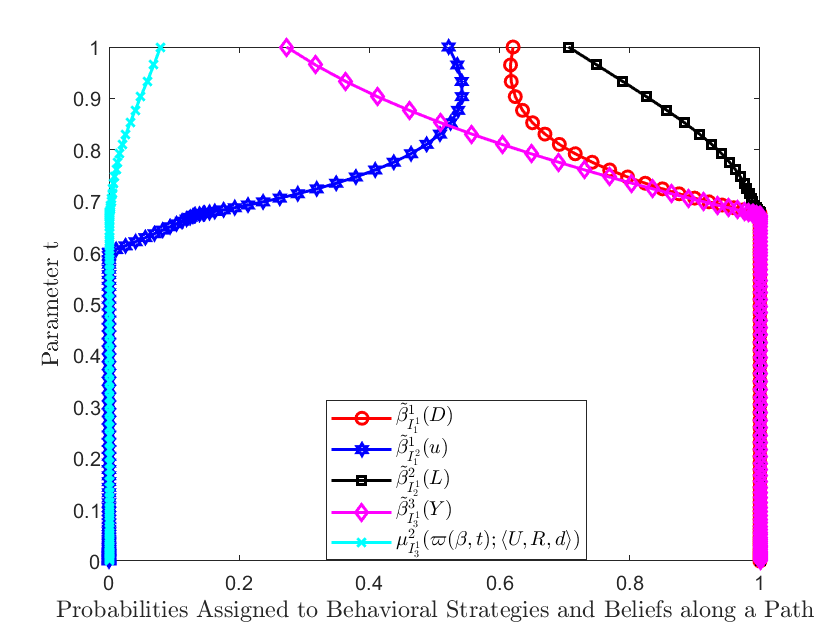}
        \caption{\label{TFigure2MCquP2}\scriptsize The Smooth Path of $\tilde\beta$ Specified by the System~(\ref{cqupes4wsre}) for the Game in Fig.~\ref{TFigure2}}
\end{minipage}
 \end{figure}

\section{Numerical Performances}


We have adapted a standard predictor-corrector method as outlined in Eaves and Schmedders~\cite{Eaves and Schmedders (1999)} for numerically tracing the smooth paths to WSREs. The predictor-corrector method has been coded in MATLAB. The parameter values in the method are set as follows: the predictor step size $=$ $0.1*10^{0.2\ln{t}}$ and accuracy of a starting point for a corrector step $=$  $0.1*10^{0.5\ln{t}}$. The method terminates as soon as the criterion of $t<10^{-5}$ is met. The computation was carried out on Windows Server for Intel(R) Xeon(R) Gold 6426Y  2.50GHz  (2 processors) RAM 512GB. 

To demonstrate the efficiency of the smooth paths specified by the systems~(\ref{logbes5wsre}), (\ref{entbes5wsre}),  (\ref{srtbes4wsre}), and (\ref{cqupes4wsre}), we have utilized these methods to calculate WSREs for three categories of randomly generated extensive-form games: Type A, Type B, and Type C. In Type A games, both players $1$ and $2$ possess one information set. For player $i$ with $i\ge 3$, the number of information sets is $\prod_{k=1}^{i-2}|A(I_{k})|$. In Type B games, player $1$ has one information set. The number of information sets for player $i$ with $i\ge 2$ is $|A(I_{i-1})|$. A Type C game shares a similar structure with a Type A game, with the distinction that Type C games consist of multiple layers, denoted as $L$. Within each layer, actions are sequentially taken by players from player $1$ to player $n$. In our numerical experiments, every player has an equal number of actions in all of their information sets. Each payoff along a terminal history is an integer uniformly selected from $-10$ to $10$ and is set to zero with a random probability ranging from $0$ to $50\%$. Within each game defined by a specific choice of $(n,m_i,|A(I_i)|)$ or $(n,m_i,|A(I_i)|, L)$, 10 examples were generated and solved. We denote by LogBM, EntBM, SrtBM, and CquPM the methods with the systems~(\ref{logbes5wsre}), (\ref{entbes5wsre}),  (\ref{srtbes4wsre}), and (\ref{cqupes4wsre}), respectively. 
The efficiency of the methods is evaluated based on the computational time (in seconds) and the number of iterations. Tables~\ref{Table0}-\ref{Table2} summarize the median, minimum, and maximum values for the computational time and the number of iterations. A computation is considered a failure if either the computational time exceeds $4.32\times10^4$ seconds or the number of iterations exceeds $10^5$. In such cases, the result is marked with a dash (``-").

The numerical results in Tables~\ref{Table0}–\ref{Table2} reveal that LogBM consistently outperforms other methods across all game types (A, B, C), achieving the fastest computational times, fewest iterations, and zero failures. While SrtBM serves as a viable intermediate alternative (faster than EntBM but slower than LogBM in Type A/C games), it exhibits instability in large-scale Type B games. EntBM shows stability in all cases but is less efficient than LogBM, whereas CquPM fails frequently in Type B/C games and needs significantly more computational time and iterations, making it unsuitable for complex problems. In conclusion, LogBM is the recommended default choice, with SrtBM as a fallback, and CquPM should be avoided for large-scale applications.

	
\begin{table}[H]
\linespread{1} 
\scriptsize
\centering
\caption{{\footnotesize Numerical Performances of LogBM, EntBM, SrtBM, and CquPM for Type A Extensive-Form Games}}\label{Table0}
\begin{tabular*}{\hsize}{@{}@{\extracolsep{\fill}}cc|cccc|cccc@{}}
\hline
\multicolumn{2}{c}{} & \multicolumn{4}{c}{Computational Time} & \multicolumn{4}{c}{Number of Iterations}\\
 $n,m_i,|A(I_i)|$ & & LogBM & EntBM & SrtBM & CquPM & LogBM & EntBM & SrtBM & CquPM\\
\hline
  $3,(1,1,2),(2,10,10)$ & med & 53.06 & 84.58 & 66.93 & 333.14 & 118 & 207 & 157 & 706 \\
 & min & 37.43 & 65.26 & 53.57 & 222.07 & 85 & 158 & 118 & 486 \\
 & max & 86.16 & 115.90 & 93.56 & 636.19 & 186 & 279 & 209 & 1130 \\
 $3,(1,1,2),(2,15,15)$ & med & 175.64 & 205.57 & 154.56 & 693.28 & 180 & 240 & 173 & 841 \\
 & min & 77.16 & 151.92 & 114.03 & 495.06 & 94 & 186 & 137 & 582\\
 & max & 798.77 & 474.93 & 241.52 & 891.77 & 686 & 401 & 274 & 1070 \\
 $3,(1,1,2),(2,20,20)$ & med & 292.22 & 457.49 & 385.93 & 1257.22 & 211 & 362 & 284 & 886 \\
 & min & 143.07 & 299.31 & 211.60 & 993.96 & 101 & 240 & 158 & 777 \\
 & max & 656.72 & 584.95 & 914.24 & 1395.24 & 457 & 469 & 666 & 967 \\
 $3,(1,1,2),(2,25,25)$ & med & 323.55 & 652.28 & 427.48 & 2307.66 & 150 & 324 & 207 & 1086 \\
 & min & 231.54 & 463.64 & 307.45 & 1983.01 & 99 & 231 & 148 & 877 \\
 & max & 462.48 & 1108.45 & 740.93 & 3351.86 & 183 & 400 & 237 & 1590\\
\hline
\end{tabular*}
\end{table}

\begin{table}[H]
\linespread{1} 
\scriptsize
\centering
\caption{{\footnotesize Numerical Performances of LogBM, EntBM, SrtBM, and CquPM for Type B Extensive-Form Games}}\label{Table1}
\begin{tabular*}{\hsize}{@{}@{\extracolsep{\fill}}cc|cccc|cccc@{}}
\hline
\multicolumn{2}{c}{} & \multicolumn{4}{c}{Computational Time} & \multicolumn{4}{c}{Number of Iterations}\\
 $n,m_i,|A(I_i)|$ & & LogBM & EntBM & SrtBM & CquPM & LogBM & EntBM & SrtBM & CquPM\\
\hline
 $4,(1,2,2,5),(2,2,5,3)$ & med & 76.54 & 129.42 & 131.25 & - & 151 & 274 & 232 & - \\
 & min & 42.57 & 65.67 & 62.28 & 244.44 & 120 & 189 & 152 & 542\\
 & max & 116.02 & 375.44 & 295.12 & - & 184 & 511 & 336 & -\\
 $5,(1,2,2,2,5),(2,2,2,5,3)$ & med & 391.05 & 404.36 & 373.67 & - & 279 & 314 & 251 & -\\
 & min & 171.06 & 190.93 & 175.81 & 839.66 & 145 & 163 & 150 & 561\\
 & max & 1191.49 & 863.83 & - & - & 798 & 581 & - & -\\
 $6,(1,2,2,2,2,5),(2,2,2,2,5,3)$ & med & 1515.99 & 1753.26 & 2479.75 & - & 355 & 350 & 546 & -\\
 & min & 744.63 & 740.49 & 705.97 & 2048.01 & 174 & 207 & 175 & 548\\
 & max & 3085.93 & 4377.84 & - & - & 695 & 592 & - & -\\
 $7,(1,2,2,2,2,2,5),(2,2,2,2,2,5,3)$ & med & 4782.82 & 4273.03 & 5109.53 & - & 352 & 345 & 392 & -\\
 & min & 2387.95 & 2503.54 & 2318.14 & 9695.66 & 204 & 209 & 197 & 519\\
 & max & 9709.25 & 8450.95 & - & - & 778 & 652 & - & -\\
\hline
\end{tabular*}
\end{table}

\begin{table}[H]
\linespread{1} 
\scriptsize
\centering
\caption{{\footnotesize Numerical Performances of LogBM, EntBM, SrtBM, and CquPM for Type C Extensive-Form Games}}\label{Table2}
\begin{tabular*}{\hsize}{@{}@{\extracolsep{\fill}}cc|cccc|cccc@{}}
\hline
\multicolumn{2}{c}{} & \multicolumn{4}{c}{Computational Time} & \multicolumn{4}{c}{Number of Iterations}\\
 $n,m_i,|A(I_i)|,L$ & & LogBM & EntBM & SrtBM & CquPM & LogBM & EntBM & SrtBM & CquPM\\
\hline
 $2,(11,21),(2,2),3$ & med & 212.46 & 308.95 & 231.82 & 778.31 & 252 & 399 & 289 & 848 \\
 & min & 146.34 & 159.70 & 156.14 & 504.22 & 167 & 197 & 180 & 545\\
 & max & 357.12 & 1065.14 & 501.61 & 1252.75 & 424 & 1525 & 720 & 1328\\
 $2,(43,85),(2,2),4$ & med & 2614.53 & 3423.78 & 3280.31 & 12456.63 & 330 & 456 & 445 & 1496\\
 & min & 1221.62 & 1788.93 & 1852.56 & 8498.56 & 182 & 175 & 270 & 1038\\
 & max & 3993.11 & 6012.14 & 4556.95 & - & 510 & 980 & 727 & -\\
 $2,(4,10),(3,3),2$ & med & 1080.65 & 1790.18 & 1115.36 & 5944.64 & 250 & 442 & 256 & 1254\\
 & min & 615.88 & 755.93 & 640.12 & 3249.42 & 179 & 218 & 196 & 644\\
 & max & 1598.80 & 6579.91 & 1642.46 & - & 348 & 1876 & 318 & -\\
 $3,(5,9,18),(2,2,2),2$ & med & 1183.71 & 1421.02 & 1210.07 & 5875.99 & 297 & 350 & 308 & 1134\\
 & min & 717.43 & 823.51 & 779.75 & 2790.52 & 199 & 257 & 216 & 795\\
 & max & 2180.57 & 2837.88 & 2592.24 & 17997.70 & 658 & 697 & 768 & 2341\\
\hline
\end{tabular*}
\end{table}

\section{Conclusion}

In this paper, we have presented a characterization of WSRE through $\varepsilon$-perfect $\gamma$-WSRE with local sequential rationality, which is achieved by introducing an extra behavioral strategy profile. This characterization offers an effective tool for analytically identifying all WSREs, particularly in the case of small games. As a result of this characterization, we have attained a polynomial system serving as a necessary and sufficient condition for determining whether an assessment is an $\varepsilon$-perfect $\gamma$-WSRE. Furthermore, we have demonstrated important applications of our characterization by proposing differentiable path-following methods for computing WSREs. Comprehensive numerical results further confirm the efficiency of the methods. One of the future research directions could be the development of an equilibrium concept weaker than WSRE through a relaxation of the requirement of KW-consistency of beliefs.

\begin{spacing}{0.618}

\end{spacing}

\end{document}